
\documentclass[12pt, onecolumn]{IEEEtran} 

\usepackage[utf8]{inputenc} 
\usepackage[T1]{fontenc}
\usepackage{url}
\usepackage{ifthen}
\usepackage{cite}
\usepackage[cmex10]{amsmath} 
\usepackage{amssymb}
\usepackage{algorithm}
\usepackage{algorithmic}
\usepackage{subfigure}
\usepackage{tikz}
\usepackage{algorithm}
\usepackage{algorithmic}
\usepackage{amsmath}
\usepackage{amsthm}
\usepackage{amssymb}
\usepackage{ecltree}
\usepackage{enumerate}
\usepackage{mathrsfs}
\usepackage{xcolor}
\usepackage{comment}
\usepackage{tikz}
\usepackage{wrapfig}
\usepackage{mathtools}
\usetikzlibrary{trees}

\usepackage{here}
\usepackage{cases}
\usepackage{stackengine}
\usepackage{longtable}

\newtheorem{definition}{Definition}
\newtheorem{lemma}{Lemma}
\newtheorem{theorem}{Theorem}

\newtheorem{corollary}{Corollary}
\newtheorem{remark}{Remark}
\newtheorem{example}{Example}

\def\trans{\tau}
\def\PREF{\mathcal{P}}

\def\hdec{\text{-}\mathrm{dec}}
\def\ext{\mathrm{ext}}
\def\irr{\mathrm{irr}}
\def\reg{\mathrm{reg}}
\def\comp{\mathrm{comp}}

\def\hopt{\text{-}\mathrm{opt}}

\def\suff{\mathrm{suff}}
\def\kernel{\mathcal{R}}
\def \prefset{\mathscr{P}}
\newcommand{\quot}[2]{\frac{#1}{#2}}

\newcommand{\argmin}{\mathop{\rm arg~min}\limits}

\newcommand{\eqlab}[2]{\overset{(\mathrm{#1})}{#2}}

\begin{document}

\title{Reduction of Sufficient Number of Code Tables of\\ $k$-Bit Delay Decodable Codes} 
\author{Kengo Hashimoto, Ken-ichi Iwata \thanks{The authors are with University of Fukui, Japan. E-mail: \{khasimot, k-iwata\}@u-fukui.ac.jp}}

\maketitle

\begin{abstract}
A $k$-bit delay decodable code-tuple is a lossless source code that can achieve a smaller average codeword length than Huffman codes by using a finite number of code tables and allowing at most $k$-bit delay for decoding.
It is known that there exists a $k$-bit delay decodable code-tuple with at most $2^{(2^k)}$ code tables that attains the optimal average codeword length among all the $k$-bit delay decodable code-tuples for any given i.i.d.~source distribution.
Namely, it suffices to consider only the code-tuples with at most $2^{(2^k)}$ code tables to accomplish optimality.
In this paper, we propose a method to dramatically reduce the number of code tables to be considered in the theoretical analysis, code construction, and coding process.
\end{abstract}

\begin{IEEEkeywords}
Data compression, Source coding, Decoding delay, Average codeword length, Code-tuples.
\end{IEEEkeywords}

\section{Introduction}
\label{sec:introduction}

Huffman codes \cite{Huffman1952} achieve the optimal average codeword length in the class of instantaneous (i.e., uniquely decodable without decoding delay) codes with a single code table.
The class of $k$-bit delay decodable code-tuples \cite{JSAIT2022, IEICE2023} can achieve a smaller average codeword length than Huffman codes by using a finite number of code tables and allowing at most $k$-bit delay for decoding.

Algorithms to find a \emph{$k$-bit delay optimal code-tuple}, a $k$-bit delay decodable code-tuple that achieves the minimum average codeword length among all $k$-bit delay decodable code-tuples for a given source distribution, have been proposed for $k \leq 2$.
More specifically, it is shown by \cite{JSAIT2022} that a $1$-bit delay optimal code-tuple is obtained as a Huffman code,
and it is shown by \cite{Hashimoto2021, Hashimoto2023} that a $2$-bit delay optimal code-tuple is obtained as an optimal AIFV-$2$ code \cite{Yamamoto2015, Hu2017}, which can be constructed by the methods in \cite{Yamamoto2015, GolinH23, Golin2021}.

It is known that there exists a $k$-bit delay optimal code-tuple with at most $2^{(2^k)}$ code tables \cite{IEICE2023}.
In other words, we have an upper bound $2^{(2^k)}$ of the sufficient number of code tables that we have to consider to obtain a $k$-bit delay optimal code-tuple.
Improving this upper bound makes constructing a $k$-bit delay optimal code-tuple and other related problems more tractable in theoretical and practical aspects.

In this paper, we propose a method to dramatically reduce the number of code tables to be considered to discuss $k$-bit delay optimal code-tuples in the theoretical analysis, code construction, and coding process using a code-tuple.

\subsection{Contributions and Organization}
The contribution of this paper is as follows.
\begin{itemize} 
\item For an integer $k \geq 0$, we propose a \emph{reduced-code-tuple} (\emph{RCT} for short) consisting of an extremely smaller number $a_k$ of code tables than the upper bound $2^{(2^k)}$ in \cite{IEICE2023} as shown in Table \ref{tab:orbit-num}.
\item Then we show that the construction of a $k$-bit delay optimal code-tuple can be reduced to the construction of an RCT satisfying certain conditions.
\item Further, we present a coding procedure using an RCT equivalent to the coding procedure using a code-tuple.
\end{itemize}
Therefore, our results dramatically reduce the number of code tables to be considered in the theoretical analysis, code construction, and coding process by replacing code-tuples with RCTs. 

This paper is organized as follows.
In Section \ref{sec:preliminary},  we define some notation and introduce the basic definitions in \cite{JSAIT2022, IEICE2023}, including code-tuples.
Then our main results are presented in Section \ref{sec:main}.
Lastly, we conclude this paper in Section \ref{sec:conclusion}.
To clarify the flow of the discussion, we relegate most of the proofs to the appendices.
The main notation is listed in Appendix \ref{sec:notation}.

\subsection{Related Work}
AIFV-$k$ codes \cite{Yamamoto2015, Hu2017} are source codes that can achieve a shorter average codeword length than Huffman codes by using $k$ code tables and allowing at most $k$-bit decoding delay.\footnote
{In the original proposal of AIFV-$k$ codes \cite{Hu2017}, $m$ is used to represent the number of code tables and the length of allowed decoding delay instead of $k$.}
The set of all AIFV-$k$ codes can be viewed as a proper subset of the set of all $k$-bit delay decodable code-tuples.

There are algorithms to construct an optimal AIFV-$k$ code for a given source distribution using an algorithm called the \emph{iterative algorithm},
which reduces the global optimization of the entire code-tuple to an iteration of local optimization on the individual code tables.
Algorithms to find an optimal AIFV-$k$ code by the iterative algorithm via dynamic programming are given by \cite{Yamamoto2015, GolinH23, Golin2021} for $k = 2$ and given by \cite{Fujita2018, Golin2022} for a general integer $k \geq 2$.

The literature \cite{Fujita2019} proposes an iterative algorithm under more general conditions than the construction of an optimal AIFV-$k$ code and proves its validity.
The literature \cite{Golin2024} gives a variant of the iterative algorithm based on a binary search with theoretically bounded time complexity in polynomial time.

The literature \cite{Sugiura2022} presents another generalization of AIFV-$k$ codes and \cite{Sugiura2023} constructs a sub-optimal code in the generalized AIFV-$k$ codes by the iterative algorithm via integer programming.

\section{Preliminaries}
\label{sec:preliminary}

First, we define some notation based on \cite{JSAIT2022, IEICE2023}.
Let $|\mathcal{A}|$ denote the cardinality of a finite set $\mathcal{A}$.
Let $\mathcal{A} \times \mathcal{B}$ denote the Cartesian product of $\mathcal{A}$ and $\mathcal{B}$, that is, $\mathcal{A} \times \mathcal{B} \coloneqq \{(a, b) : a \in \mathcal{A}, b \in \mathcal{B}\}$.
Let $\mathcal{A}^k$ (resp.~$\mathcal{A}^{\leq k}$, $\mathcal{A}^{\geq k}$, $\mathcal{A}^{\ast}$, $\mathcal{A}^{+}$) denote the set of all sequences of length $k$ (resp.~of length less than or equal to $k$, of length greater than or equal to $k$, of finite length, of finite positive length) over a set $\mathcal{A}$.
Thus, $\mathcal{A}^{+} = \mathcal{A}^{\ast} \setminus \{\lambda\}$, where $\lambda$ denotes the empty sequence.
The length of a sequence $\pmb{x}$ is denoted by $|\pmb{x}|$, in particular, $|\lambda| = 0$.
For a sequence $\pmb{x}$ and an integer $1 \leq i \leq |\pmb{x}|$, the $i$-th letter of $\pmb{x}$ is denoted by $x_i$.
For a sequence $\pmb{x}$ and a set $\mathcal{A}$ of sequences, we define
$\pmb{x}\mathcal{A} \coloneqq \{\pmb{x}\pmb{y} : \pmb{y} \in \mathcal{A}\}$.
For a sequence $\pmb{x}$ and an integer $0 \leq k \leq |\pmb{x}|$, we define $[\pmb{x}]_k \coloneqq x_1x_2\ldots x_k$.
Namely, $[\pmb{x}]_k$ is the prefix of length $k$ of $\pmb{x}$, in particular, $[\pmb{x}]_0 = \lambda$.
For a set $\mathcal{A}$ of sequences and an integer $k \geq 0$, we define $[\mathcal{A}]_k \coloneqq \{[\pmb{x}]_k : \pmb{x} \in \mathcal{A}, |\pmb{x}| \geq k\}$.
For an integer $k \geq 0$ and a sequence $\pmb{x} = x_1x_2\ldots x_{n}$ such that $n \geq k$, we define $\suff^k(\pmb{x}) \coloneqq x_{k+1}\ldots x_{n-1}x_{n}$.
Namely, $\suff^k(\pmb{x})$ is the sequence obtained by deleting the first $k$ letters from $\pmb{x}$.
We write $\suff^1(\pmb{x})$ simply as $\suff(\pmb{x})$.
We write $\pmb{x} \preceq \pmb{y}$ if $\pmb{x}$ is a prefix of $\pmb{y}$, that is, there exists a sequence $\pmb{z}$, possibly $\pmb{z} = \lambda$, such that $\pmb{y} = \pmb{x}\pmb{z}$.
Also, we write $\pmb{x} \prec \pmb{y}$ if $\pmb{x} \preceq \pmb{y}$ and $\pmb{x} \neq \pmb{y}$.
If $\pmb{x} \preceq \pmb{y}$, then $\pmb{x}^{-1}\pmb{y}$ denotes the unique sequence $\pmb{z}$ such that $\pmb{x}\pmb{z} = \pmb{y}$.
Note that a notation $\pmb{x}^{-1}$ behaves like the ``inverse element'' of $\pmb{x}$ as stated in the following statements (i)--(iii).
\begin{itemize}
\item[$(\mathrm{i})$] For any $\pmb{x}$, we have $\pmb{x}^{-1}\pmb{x} = \lambda$.
\item[$(\mathrm{ii})$] For any $\pmb{x}$ and $\pmb{y}$ such that $\pmb{x} \preceq \pmb{y}$, we have $\pmb{x}\pmb{x}^{-1}\pmb{y} =\pmb{y}$.
\item[$(\mathrm{iii})$] For any $\pmb{x}, \pmb{y}$, and $\pmb{z}$ such that $\pmb{x}\pmb{y} \preceq \pmb{z}$, we have $(\pmb{x}\pmb{y})^{-1}\pmb{z} = \pmb{y}^{-1}\pmb{x}^{-1}\pmb{z}$.
\end{itemize}
For two mappings $\phi \colon \mathcal{B} \to \mathcal{C}$ and $\psi \colon \mathcal{A} \to \mathcal{B}$, let $\phi \circ \psi \colon \mathcal{A} \to \mathcal{C}$ denote the composite $\phi(\psi(\cdot))$ of $\phi$ and $\psi$.
For a mapping $\phi \colon \mathcal{A} \to \mathcal{B}$ and a subset $\mathcal{X} \subseteq \mathcal{A}$, let $\phi(\mathcal{X})$ denote the image of $\mathcal{X}$ under $\phi$, that is, $\phi(\mathcal{X}) \coloneqq \{\phi(a) : a \in \mathcal{X}\}$.
The main notation used in this paper is listed in Appendix \ref{sec:notation}.

In this paper, we consider binary source coding from a finite source alphabet $\mathcal{S}$ to the binary coding alphabet $\mathcal{C} \coloneqq\{0, 1\}$ with a coding system consisting of a source, an encoder, and a decoder.
We consider an independent and identical distribution (i.i.d.)\ source: each symbol of the source sequence $\pmb{x} \in \mathcal{S}^{\ast}$ is determined independently by a fixed probability distribution $\mu \colon \mathcal{S} \rightarrow (0, 1) \subseteq \mathbb{R}$ such that $\sum_{s \in \mathcal{S}} \mu(s) = 1$.
From now on, we fix the probability distribution $\mu$ arbitrarily and omit the notation $\mu$ even if a value depends on $\mu$ because the discussion in this paper holds for any $\mu$.
Note that we exclude the case where $\mu(s) = 0$ for some $s \in \mathcal{S}$ without loss of generality.
Also, we assume $|\mathcal{S}| \geq 2$.

\subsection{Code-Tuples}
\label{subsec:treepair}

We introduce the notion of code-tuples\cite{JSAIT2022, IEICE2023}, which consists of some code tables $f_i$ and mappings $\trans_i$ to determine which code table to use for each symbol.

\begin{definition} [{\cite[Definition 1]{JSAIT2022}, \cite[Definition 1]{IEICE2023}}]
  \label{def:treepair}
Let $m$ be a positive integer.
An \emph{$m$-code-tuple} $F=(f_0, f_1, \ldots, f_{m-1}, \allowbreak \trans_0, \trans_1, \ldots, \trans_{m-1})$ is a tuple of
$m$ mappings $f_0, f_1, \ldots, f_{m-1}\colon \mathcal{S} \rightarrow \mathcal{C}^{\ast}$ and $m$ mappings $\trans_0, \trans_1, \ldots, \trans_{m-1} \colon \mathcal{S} \rightarrow \{0, 1, 2, \ldots, m-1\}$.
The set of all $m$-code-tuples is denoted by $\mathscr{F}^{(m)}$.
A \emph{code-tuple} is an element of $\mathscr{F} \coloneqq  \mathscr{F}^{(1)} \cup \mathscr{F}^{(2)} \cup \mathscr{F}^{(3)} \cup \cdots$.
\end{definition}

We write a code-tuple $F=(f_0, f_1, \ldots, f_{m-1}, \trans_0, \trans_1, \allowbreak \ldots, \trans_{m-1})$ also as $F=(f, \trans)$ or $F$ for simplicity.
The number of code tables of $F$ is denoted by $|F|$, that is, $|F| \coloneqq m$ for $F \in \mathscr{F}^{(m)}$.
Also, let $[F]$ represent the set $\{0, 1, 2, \ldots, |F|-1\}$ of the indices of $F$, and we refer to $[F]$ as the \emph{domain} of $F$.

\begin{table}
\caption{An example $F = (f, \trans)$ of a code-tuple}
\label{tab:code-tuple}
\centering
\begin{tabular}{c | lclclc}
\hline
$s \in \mathcal{S}$ & $f_0$ & $\trans_0$ & $f_1$ & $\trans_1$ & $f_2$ & $\trans_2$\\
\hline
a & 01 & 0 & 00 & 1 & 1100 & 1\\
b & 10 & 1 & $\lambda$ & 0 & 1110 & 2\\
c & 0100 & 0 & 00111 & 1 & 111000 & 2\\
d & 01 & 2 & 00111 & 2 & 110 & 2\\
\hline
\end{tabular}
\end{table}

\begin{example}
\label{ex:encode}
Table \ref{tab:code-tuple} shows an example of a $3$-code-tuple 
for $\mathcal{S} = \{\mathrm{a}, \mathrm{b}, \mathrm{c}, \mathrm{d}\}$.
\end{example}

Next, we state the coding procedure specified by a code-tuple $F=(f, \trans)$.
First, the encoder and decoder share the used code-tuple $F$ and the index $i_1\in [F]$ of the code table used for the first symbol $x_1$ of the source sequence in advance.
Then the encoding and decoding process with $F$ is described as follows.

\begin{itemize}
\item Encoding: The encoder reads the source sequence $\pmb{x} = x_1x_2\ldots x_n \in \mathcal{S}^{\ast}$ symbol by symbol from the beginning of $\pmb{x}$ and encodes them according to the code tables.
The first symbol $x_1$ is encoded with the code table $f_{i_1}$.
For $x_2, x_3, \ldots, x_n$, we determine which code table to use to encode them according to the mappings $\trans_0, \trans_1, \ldots, \trans_{m-1}$: if the previous symbol $x_{i-1}$ is encoded by the code table $f_j$, then the current symbol $x_i$ is encoded by the code table $f_{\trans_j(x_{i-1})}$.

\item Decoding: 
The decoder reads the codeword sequence $\pmb{c}$ bit by bit from the beginning of $\pmb{c}$.
Each time the decoder reads a bit, the decoder recovers as long prefix of $\pmb{x}$ as the decoder can uniquely identify from the prefix of $\pmb{c}$ already read.
\end{itemize}

\begin{example}[{\cite[Example 2]{IEICE2023}}]
\label{ex:encode}
We consider encoding of a source sequence $\pmb{x} = x_1x_2x_3x_4 \coloneqq \mathrm{badb}$ with the code-tuple $F=(f, \trans)$ in Table \ref{tab:code-tuple}.
If $x_1 = \mathrm{b}$ is encoded with the code table $f_0$, then the encoding process is as follows.
\begin{itemize}
\item $x_1 = \mathrm{b}$ is encoded to $f_0(\mathrm{b}) = 10$. The index of the next code table is $\trans_0(\mathrm{b}) = 1$.
\item $x_2 = \mathrm{a}$ is encoded to $f_1(\mathrm{a}) = 00$. The index of the next code table is $\trans_1(\mathrm{a}) = 1$.
\item $x_3 = \mathrm{d}$ is encoded to $f_1(\mathrm{d}) = 00111$. The index of the next code table is $\trans_1(\mathrm{d}) = 2$.
\item $x_4 = \mathrm{b}$ is encoded to $f_2(\mathrm{b}) = 1110$. The index of the next code table is $\trans_2(\mathrm{b}) = 2$.
\end{itemize}
As the result, we obtain a codeword sequence $\pmb{c} \coloneqq f_0(\mathrm{b})f_1(\mathrm{a})f_1(\mathrm{d})f_2(\mathrm{b}) = 1000001111110$.

The decoding process of $\pmb{c} = 1000001111110$ is as follows.
\begin{itemize}
\item After reading the prefix $10$ of $\pmb{c}$, the decoder can uniquely identify $x_1 = \mathrm{b}$ and $10 = f_0(\mathrm{b})$. The decoder can also know that $x_2$ should be decoded with $f_{\trans_0(\mathrm{b})} = f_1$.
\item After reading the prefix $1000 = f_0(\mathrm{b})f_1(\mathrm{a})$ of $\pmb{c}$, the decoder still cannot uniquely identify  $x_2 = \mathrm{a}$ because there remain three possible cases: the case $x_2 = \mathrm{a}$, the case $x_2 = \mathrm{c}$, and the case $x_2 = \mathrm{d}$.
\item After reading the prefix $10000$ of $\pmb{c}$, the decoder can uniquely identify $x_2 = \mathrm{a}$ and $10000 = f_0(\mathrm{b})f_1(\mathrm{a})0$. The decoder can also know that $x_3$ should be decoded with $f_{\trans_1(\mathrm{a})} = f_1$.
\item After reading the prefix $100000111 = f_0(\mathrm{b})f_1(\mathrm{a})\allowbreak f_1(\mathrm{d})$ of $\pmb{c}$, the decoder still cannot uniquely identify $x_3 = \mathrm{d}$ because there remain two possible cases: the case $x_3 = \mathrm{c}$ and the case $x_3 = \mathrm{d}$.
\item After reading the prefix $10000011111$ of $\pmb{c}$, the decoder can uniquely identify $x_3 = \mathrm{d}$ and $10000011111 \allowbreak = f_0(\mathrm{b})f_1(\mathrm{a})f_1(\mathrm{d})11$.
The decoder can also know that $x_4$ should be decoded with $f_{\trans_1(\mathrm{d})} = f_2$.
\item After reading the prefix $\pmb{c} = 1000001111110$, the decoder can uniquely identify $x_4 = \mathrm{b}$ and $1000001111\allowbreak110 = f_0(\mathrm{b})f_1(\mathrm{a})f_1(\mathrm{d})f_2(\mathrm{b})$.
\end{itemize}
As the result, the decoder recovers the original sequence $\pmb{x} = \mathrm{badb}$.
\end{example}

Let $i \in [F]$ and $\pmb{x} \in \mathcal{S}^{\ast}$.
Then $f^{\ast}_i(\pmb{x}) \in \mathcal{C}^{\ast}$ is defined as the codeword sequence in the case where $x_1$ is encoded with $f_i$.
Also, $\trans^{\ast}_i(\pmb{x}) \in [F]$ is defined as the index of the code table used next after encoding $\pmb{x}$ in the case where $x_1$ is encoded with $f_i$.
The formal definitions are given in the following Definition \ref{def:f_T} as recursive formulas.

\begin{definition}[{\cite[Definition 2]{JSAIT2022}}, {\cite[Definition 2]{IEICE2023}}]
 \label{def:f_T}
For $F=(f, \trans) \in \mathscr{F}$ and $i \in [F]$, the mapping $f_i^{\ast} \colon \mathcal{S}^{\ast} \rightarrow \mathcal{C}^{\ast}$ and the mapping $\trans_i^{\ast} \colon \mathcal{S}^{\ast} \rightarrow [F]$ are defined as
\begin{equation}
\label{eq:fstar}
f_i^{\ast}(\pmb{x}) = 
\begin{cases}
\lambda &\,\,\text{if}\,\, \pmb{x} = \lambda,\\
f_i(x_1)f_{\trans_i(x_1)}^{\ast}(\suff(\pmb{x})) &\,\,\text{if}\,\, \pmb{x} \neq \lambda,\\
\end{cases}
\end{equation}
\begin{equation}
\label{eq:tstar}
\trans_i^{\ast}(\pmb{x}) = 
\begin{cases}
i &\,\,\text{if}\,\, \pmb{x} = \lambda,\\
\trans^{\ast}_{\trans_i(x_1)}(\suff(\pmb{x})) &\,\,\text{if}\,\, \pmb{x} \neq \lambda\\
\end{cases}
\end{equation}
for $\pmb{x} = x_1 x_2 \ldots x_{n} \in \mathcal{S}^{\ast}$.
\end{definition}

The next Lemma \ref{lem:f_T} follows from Definition \ref{def:f_T}.
\begin{lemma}[{\cite[Lemma 1]{JSAIT2022}}, {\cite[Lemma 1]{IEICE2023}}]
\label{lem:f_T}
For any $F=(f, \trans) \in \mathscr{F}$, $i \in [F]$, and $\pmb{x}, \pmb{y} \in \mathcal{S}^{\ast}$, 
the following statements (i)--(iii) hold.
\begin{enumerate}[(i)]
\item $f_i^{\ast}(\pmb{x} \pmb{y}) = f_i^{\ast}(\pmb{x}) f^{\ast}_{\trans_i^{\ast}(\pmb{x})}(\pmb{y})$. 
\item $\trans_i^{\ast}(\pmb{x} \pmb{y}) = \trans^{\ast}_{\trans^{\ast}_i(\pmb{x})}(\pmb{y})$.
\item If $\pmb{x} \preceq \pmb{y}$, then $f^{\ast}_i(\pmb{x}) \preceq f^{\ast}_i(\pmb{y})$.
\end{enumerate}
\end{lemma}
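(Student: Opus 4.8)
The plan is to prove (i) and (ii) simultaneously by induction on $n \coloneqq |\pmb{x}|$, keeping $i$, $\pmb{y}$, and the code-tuple $F$ itself arbitrary so that the induction hypothesis can be reapplied to a shorter prefix with a \emph{different} starting index. The base case $\pmb{x} = \lambda$ is immediate: \eqref{eq:fstar} and \eqref{eq:tstar} give $f_i^{\ast}(\lambda) = \lambda$ and $\trans_i^{\ast}(\lambda) = i$, so the two right-hand sides reduce to $f_i^{\ast}(\pmb{y})$ and $\trans_i^{\ast}(\pmb{y})$ respectively, which equal the left-hand sides because $\lambda\pmb{y} = \pmb{y}$.

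For the inductive step, write $\pmb{x} = x_1\pmb{x}'$ with $\pmb{x}' = \suff(\pmb{x})$, so that $|\pmb{x}'| = n-1$ and $\suff(\pmb{x}\pmb{y}) = \pmb{x}'\pmb{y}$. For (i): peeling off one layer with \eqref{eq:fstar} gives $f_i^{\ast}(\pmb{x}\pmb{y}) = f_i(x_1)\,f_{\trans_i(x_1)}^{\ast}(\pmb{x}'\pmb{y})$; the induction hypothesis (i), applied to $\pmb{x}'$ with starting index $\trans_i(x_1)$, rewrites the last factor as $f_{\trans_i(x_1)}^{\ast}(\pmb{x}')\,f^{\ast}_{\trans^{\ast}_{\trans_i(x_1)}(\pmb{x}')}(\pmb{y})$; finally \eqref{eq:fstar} and \eqref{eq:tstar} read in reverse identify $f_i(x_1)f_{\trans_i(x_1)}^{\ast}(\pmb{x}') = f_i^{\ast}(\pmb{x})$ and $\trans^{\ast}_{\trans_i(x_1)}(\pmb{x}') = \trans_i^{\ast}(\pmb{x})$, yielding exactly $f_i^{\ast}(\pmb{x})\,f^{\ast}_{\trans_i^{\ast}(\pmb{x})}(\pmb{y})$. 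The step for (ii) is strictly parallel: \eqref{eq:tstar} gives $\trans_i^{\ast}(\pmb{x}\pmb{y}) = \trans^{\ast}_{\trans_i(x_1)}(\pmb{x}'\pmb{y})$, the induction hypothesis (ii) turns this into $\trans^{\ast}_{\trans^{\ast}_{\trans_i(x_1)}(\pmb{x}')}(\pmb{y})$, and \eqref{eq:tstar} read in reverse replaces the inner expression by $\trans_i^{\ast}(\pmb{x})$. Each inductive step invokes only its own hypothesis, so the simultaneous induction is merely a convenience and could be split.

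Statement (iii) then follows with no further induction: if $\pmb{x} \preceq \pmb{y}$, choose $\pmb{z} \in \mathcal{S}^{\ast}$ with $\pmb{y} = \pmb{x}\pmb{z}$, and apply (i) to obtain $f_i^{\ast}(\pmb{y}) = f_i^{\ast}(\pmb{x})\,f^{\ast}_{\trans_i^{\ast}(\pmb{x})}(\pmb{z})$, which exhibits $f_i^{\ast}(\pmb{x})$ as a prefix of $f_i^{\ast}(\pmb{y})$. I do not anticipate any genuine obstacle; the only point requiring a little care is the elementary identity $\suff(\pmb{x}\pmb{y}) = \suff(\pmb{x})\,\pmb{y}$ for $\pmb{x} \neq \lambda$, on which the unfolding of the recursion in the inductive step rests.
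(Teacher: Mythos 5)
Your proof is correct. The paper does not actually prove this lemma itself; it states that it ``follows from Definition \ref{def:f_T}'' and cites \cite{JSAIT2022, IEICE2023}, so there is no in-paper argument to compare against. Your simultaneous induction on $|\pmb{x}|$, with $i$ and $\pmb{y}$ quantified inside the induction hypothesis so that the restart index $\trans_i(x_1)$ can be fed back in, together with the observation that $\suff(\pmb{x}\pmb{y}) = \suff(\pmb{x})\pmb{y}$ when $\pmb{x} \neq \lambda$, is precisely the standard way to derive these identities from the recursive definitions \eqref{eq:fstar}--\eqref{eq:tstar}, and deducing (iii) from (i) by writing $\pmb{y} = \pmb{x}\pmb{z}$ is immediate. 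No gaps.
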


\subsection{$k$-bit Delay Decodable Code-tuples}

A code-tuple is said to be \emph{$k$-bit delay decodable} if the decoder can always uniquely identify each source symbol by reading the additional $k$ bits of the codeword sequence.
To state the formal definition of a $k$-bit delay decodable code-tuple,
we introduce the following Definitions \ref{def:pref}.

\begin{definition}[{\cite[Definitions 3 and 4]{IEICE2023}}]
\label{def:pref}
For an integer $k \geq 0$, $F=(f, \trans) \in \mathscr{F}, i \in [F]$, and $\pmb{b} \in \mathcal{C}^{\ast}$, we define 
\begin{equation}
\label{eq:pref1}
\PREF^k_{F, i}(\pmb{b}) \coloneqq \{\pmb{c} \in \mathcal{C}^k : \exists \pmb{x} \in \mathcal{S}^{+} \,\,\,\mathtt{s.t.}\,\,  (f^{\ast}_i(\pmb{x}) \succeq \pmb{b}\pmb{c}, f_i(x_1) \succeq \pmb{b})  \},
\end{equation}
\begin{equation}
\label{eq:pref2}
\bar{\PREF}^k_{F, i}(\pmb{b}) \coloneqq \{\pmb{c} \in \mathcal{C}^k : \exists\pmb{x} \in \mathcal{S}^{+} \,\,\,\mathtt{s.t.}\,\, (f^{\ast}_i(\pmb{x}) \succeq \pmb{b}\pmb{c}, f_i(x_1) \succ \pmb{b})  \}.
\end{equation}
Namely, $\PREF^k_{F, i}(\pmb{b})$ (resp.~$\bar{\PREF}^k_{F, i}(\pmb{b})$) is the set of all $\pmb{c} \in \mathcal{C}^k$ such that there exists $\pmb{x} \in \mathcal{S}^{+}$ satisfying $f^{\ast}_i(\pmb{x}) \succeq \pmb{b}\pmb{c}$ and $f_i(x_1) \succeq \pmb{b}$ (resp.~$f_i(x_1) \succ \pmb{b}$).
\end{definition}
We write $\PREF^k_{F, i}(\lambda)$ (resp.~$\bar{\PREF}^k_{F, i}(\lambda)$) as $\PREF^k_{F, i}$ (resp.~$\bar{\PREF}^k_{F, i}$) for simplicity.
Note that
\begin{eqnarray}
\PREF^k_{F, i} = \{\pmb{c} \in \mathcal{C}^k : {\exists}\pmb{x} \in \mathcal{S}^{+} \,\,\mathtt{s.t.}\,\ f^{\ast}_i(\pmb{x}) \succeq \pmb{c}\} 
= \{\pmb{c} \in \mathcal{C}^k : {\exists}\pmb{x} \in \mathcal{S}^{\ast} \,\,\mathtt{s.t.}\,\ f^{\ast}_i(\pmb{x}) \succeq \pmb{c}\}.\label{eq:pref3}
\end{eqnarray}

The following lemma gives other representation forms of $\PREF^k_{F, i}(\pmb{b})$ and $\bar{\PREF}^k_{F, i}(\pmb{b})$.
See Appendix \ref{subsec:proof-pref} for the proof of Lemma \ref{lem:pref}.

\begin{lemma}
\label{lem:pref}
For an integer $k \geq 0$, $F=(f, \trans) \in \mathscr{F}$, and $i \in [F]$, 
the following statements (i) and (ii) hold.
\begin{enumerate}[(i)]
\item \begin{equation}
\label{eq:9fhqriquk4ht}
\PREF^k_{F, i} = \bigcup_{s \in \mathcal{S}} \left[ f_i(s) \PREF^k_{F, \trans_i(s)} \right]_k.
\end{equation}
\item 
For any $\pmb{b} \in \mathcal{C}^{\ast}$, we have
\begin{equation}
\label{eq:gfo3oig06zin}
\bar{\PREF}^k_{F, i}(\pmb{b}) = \bigcup_{\substack{s \in \mathcal{S},\\ f_i(s) \succ \pmb{b}}} \left[ \pmb{b}^{-1}f_i(s) \PREF^k_{F, \trans_i(s)} \right]_k.
\end{equation}
\end{enumerate}
\end{lemma}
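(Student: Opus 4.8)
The plan is to establish each identity as an equality of sets by proving the two inclusions separately; in both directions the only machinery needed is the one‑step recursion $f^{\ast}_i(s\pmb{x}') = f_i(s)\,f^{\ast}_{\trans_i(s)}(\pmb{x}')$ from Definition~\ref{def:f_T}, the characterization \eqref{eq:pref3} of $\PREF^k_{F,i}$, Lemma~\ref{lem:f_T}(i),(iii), and the ``inverse‑element'' identities (i)--(iii) together with the fact that prepending a fixed sequence preserves the prefix relation.

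For ``$\supseteq$'' in \eqref{eq:9fhqriquk4ht}: I would take $\pmb{c}$ in the right‑hand side, so $\pmb{c} = [f_i(s)\pmb{d}]_k$ for some $s \in \mathcal{S}$ and some $\pmb{d} \in \PREF^k_{F,\trans_i(s)}$; picking (by \eqref{eq:pref3}) a sequence $\pmb{x}'$ with $f^{\ast}_{\trans_i(s)}(\pmb{x}') \succeq \pmb{d}$ and setting $\pmb{x} \coloneqq s\pmb{x}'$, Definition~\ref{def:f_T} gives $f^{\ast}_i(\pmb{x}) = f_i(s)f^{\ast}_{\trans_i(s)}(\pmb{x}')$, and prepending $f_i(s)$ to $f^{\ast}_{\trans_i(s)}(\pmb{x}') \succeq \pmb{d}$ yields $f^{\ast}_i(\pmb{x}) \succeq f_i(s)\pmb{d} \succeq \pmb{c}$, so $\pmb{c} \in \PREF^k_{F,i}$ by \eqref{eq:pref3}. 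The same argument gives ``$\supseteq$'' in \eqref{eq:gfo3oig06zin}, the only extra point being that from $f_i(s) \succ \pmb{b}$ one writes $f_i(s) = \pmb{b}(\pmb{b}^{-1}f_i(s))$ by identity~(ii), so $f^{\ast}_i(\pmb{x}) \succeq f_i(s)\pmb{d} = \pmb{b}(\pmb{b}^{-1}f_i(s))\pmb{d} \succeq \pmb{b}\pmb{c}$ while $f_i(x_1) = f_i(s) \succ \pmb{b}$, i.e.~$\pmb{c} \in \bar{\PREF}^k_{F,i}(\pmb{b})$.

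For ``$\subseteq$'': given $\pmb{c}$ in the left‑hand side with witness $\pmb{x} = x_1\pmb{x}' \in \mathcal{S}^{+}$, I would set $s \coloneqq x_1$, $j \coloneqq \trans_i(s)$, expand $f^{\ast}_i(\pmb{x}) = f_i(s)f^{\ast}_j(\pmb{x}')$, and, in the barred case, cancel the common prefix $\pmb{b}$ via identities (ii)--(iii). This reduces both statements to a single claim: if $\pmb{e}f^{\ast}_j(\pmb{x}') \succeq \pmb{c}$ with $|\pmb{c}| = k$, where $\pmb{e} = f_i(s)$ (resp.~$\pmb{e} = \pmb{b}^{-1}f_i(s)$, which is nonempty since $f_i(s) \succ \pmb{b}$), then there is $\pmb{d} \in \PREF^k_{F,j}$ with $[\pmb{e}\pmb{d}]_k = \pmb{c}$. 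If $|\pmb{e}| \geq k$ then $\pmb{c} = [\pmb{e}]_k$ and any $\pmb{d} \in \PREF^k_{F,j}$ works; if $|\pmb{e}| < k$ then $\pmb{e} \prec \pmb{c}$, the tail $\pmb{e}^{-1}\pmb{c}$ has length $k-|\pmb{e}| \leq |f^{\ast}_j(\pmb{x}')|$ and is a prefix of $f^{\ast}_j(\pmb{x}')$, and one extends $\pmb{x}'$ (using Lemma~\ref{lem:f_T}(i),(iii) to keep this prefix while lengthening the emitted sequence to at least $k$ bits) to obtain a $\pmb{d} \in \PREF^k_{F,j}$ with $\pmb{e}^{-1}\pmb{c} \preceq \pmb{d}$, whence $\pmb{e}\pmb{d} \succeq \pmb{c}$ by identity~(ii) and $[\pmb{e}\pmb{d}]_k = \pmb{c}$.

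The step I expect to be the main obstacle is producing the length‑$k$ lookahead $\pmb{d} \in \PREF^k_{F,j}$ in the ``$\subseteq$'' direction: the bits we hold, $[\pmb{e}]_k$ or $\pmb{e}^{-1}\pmb{c}$, are only guaranteed to be a prefix of the possibly‑short emitted sequence $f^{\ast}_j(\pmb{x}')$, so the argument hinges on extending $\pmb{x}'$ appropriately while preserving that prefix, together with careful bookkeeping of sequence lengths and of the $\pmb{x}' = \lambda$ / $k = 0$ boundary cases. Everything else is a routine substitution of Definition~\ref{def:f_T} and the prefix identities (i)--(iii).
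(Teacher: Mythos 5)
Your overall plan is the same as the paper's: both hinge on the characterization~(\ref{eq:pref3}) of $\PREF^k_{F,i}$, the one-step expansion $f^{\ast}_i(s\pmb{x}') = f_i(s)f^{\ast}_{\trans_i(s)}(\pmb{x}')$ from~(\ref{eq:fstar}), and, in the barred case, cancellation of the common prefix $\pmb{b}$. The paper folds both inclusions into a single chain of ``$\iff$''s and, at the stage you single out, simply asserts
\begin{equation*}
{\exists}s \in \mathcal{S}, \pmb{x} \in \mathcal{S}^{\ast}\ \mathtt{s.t.}\ f_i(s)f^{\ast}_{\trans_i(s)}(\pmb{x}) \succeq \pmb{c}
\ \iff\
{\exists}s \in \mathcal{S}\ \mathtt{s.t.}\ \pmb{c} \in \left[ f_i(s)\PREF^{k}_{F, \trans_i(s)} \right]_k
\end{equation*}
as a consequence of~(\ref{eq:pref3}), without unwinding the forward implication.

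Your unease about that forward implication is well placed; the step you flagged is a genuine gap, not just bookkeeping. To get $\pmb{d}\in\PREF^k_{F,\trans_i(s)}$ with $[f_i(s)\pmb{d}]_k=\pmb{c}$ you need some $\pmb{y}$ with $|f^{\ast}_{\trans_i(s)}(\pmb{y})|\geq k$ and the right prefix; your plan of extending $\pmb{x}'$ via Lemma~\ref{lem:f_T} supplies such a $\pmb{y}$ precisely when arbitrarily long emissions from $\trans_i(s)$ exist, which Lemma~\ref{lem:F_ext} guarantees for $F\in\mathscr{F}_{\ext}$ but not for arbitrary $F\in\mathscr{F}$, the generality in which the lemma is stated. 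Your ``$|\pmb{e}|\ge k$'' subcase already exhibits the difficulty: there one needs $\PREF^k_{F,\trans_i(s)}\neq\emptyset$, which can fail. Concretely, with $k=2$, $\mathcal{S}=\{\mathrm{a},\mathrm{b}\}$, and the $2$-code-tuple given by $f_0(\mathrm{a})=11$, $f_0(\mathrm{b})=00$, $\trans_0(\mathrm{a})=\trans_0(\mathrm{b})=1$, $f_1(\mathrm{a})=f_1(\mathrm{b})=\lambda$, $\trans_1(\mathrm{a})=\trans_1(\mathrm{b})=1$, one has $\PREF^2_{F,1}=\emptyset$ while $\PREF^2_{F,0}=\{00,11\}$, so the right-hand side of~(\ref{eq:9fhqriquk4ht}) at $i=0$ is empty but the left-hand side is not. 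The ``$\subseteq$'' inclusion (and the barred analogue) therefore needs either an extendability hypothesis or an explicit argument ruling out short emissions; the paper's one-line justification of the displayed equivalence makes the same silent jump, so this is a real soft spot you correctly located rather than a misreading on your part.
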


By using Definition \ref{def:pref}, the condition for a code-tuple to be decodable in at most $k$-bit delay is given as follows.
Refer to \cite{IEICE2023} for detailed discussion.

 \begin{definition}[{\cite[Definition 5]{IEICE2023}}]
  \label{def:k-bitdelay}
 Let $k \geq 0$ be an integer. 
A code-tuple $F=(f, \trans)$ is said to be \emph{$k$-bit delay decodable} if the following conditions (a) and (b) hold.
\begin{enumerate}[(a)]
\item For any $i \in [F]$ and $s \in \mathcal{S}$, it holds that $\PREF^k_{F, \trans_i(s)} \cap \bar{\PREF}^k_{F, i}(f_i(s)) = \emptyset$.
\item For any $i \in [F]$ and $s, s' \in \mathcal{S}$, if $s \neq s'$ and $f_i(s) = f_i(s')$, then $\PREF^k_{F, \trans_i(s)} \cap \PREF^k_{F, \trans_i(s')} =  \emptyset$.
\end{enumerate}
 For an integer $k \geq 0$, we define $\mathscr{F}_{k\hdec}$ as the set of all $k$-bit delay decodable code-tuples, that is, 
$\mathscr{F}_{k\hdec} \coloneqq \{F \in \mathscr{F} : F \text{ is } k \text{-bit delay decodable} \}$.
\end{definition}

\begin{remark}
\label{rem:not-unique}
In a precise sense, a $k$-bit delay decodable code-tuple $F = (f, \trans)$ is not necessarily uniquely decodable,
that is, the mappings $f^{\ast}_0, f^{\ast}_1, \ldots, f^{\ast}_{|F|-1}$ are not necessarily injective. 
For example, for $F \in \mathscr{F}_{2\hdec}$ in Table \ref{tab:code-tuple}, we have ${f_0}^{\ast}(\mathrm{bc}) = 1000111 = {f_0}^{\ast}(\mathrm{bd})$.
In general, it is possible that the decoder cannot uniquely recover the last few symbols of the original source sequence when the rest of the codeword sequence is less than $k$ bits.
Therefore, we should append additional $k$ bits to recover the tail of the codeword sequence in practice.
More specifically, when encoding $\pmb{x} \in \mathcal{S}^{\ast}$ starting from $f_i$, it suffices to append an arbitrarily chosen $\pmb{c} \in \PREF^k_{F, \trans^{\ast}_i(\pmb{x})}$ to the codeword sequence $f^{\ast}_i(\pmb{x})$.
This additional $k$ bits can be an overhead for a short source sequence.
However, in this paper, we consider an asymptotic evaluation of the average codeword length for a sufficiently long source sequence,
and thus we do not explicitly discuss this additional $k$ bits, which does not affect the asymptotic performance.
\end{remark}

Let $F = (f, \trans)$ be a code-tuple in which all code tables map every symbol to the empty sequence, that is, $f_i(s) = \lambda$ holds for any $i \in [F]$ and $s \in \mathcal{S}$.
Then $\PREF^k_{F, i} = \bar{\PREF}^k_{F, i}(f_i(s)) = \lambda$ holds for any $i \in [F]$ and $s \in \mathcal{S}$, and thus $F$ is $k$-bit delay decodable for any $k \geq 1$ by Definition \ref{def:k-bitdelay}.
However, this code-tuple $F$ is obviously useless and should be excluded from our consideration.
To exclude such abnormal and useless code-tuples, we introduce a set $\mathscr{F}_{\ext}$ in the following Definition \ref{def:F_ext}.

\begin{definition}[{\cite[Definition 6]{IEICE2023}}]
\label{def:F_ext}
A code-tuple $F$ is said to be \emph{extendable} if $\PREF^1_{F, i} \neq \emptyset$ for any $i \in [F]$.
The set of all extendable code-tuples is denoted by $\mathscr{F}_{\ext}$, that is,
$\mathscr{F}_{\ext} \coloneqq \{F \in \mathscr{F} : {\forall}i \in [F], \PREF^1_{F, i} \neq \emptyset\}$.
\end{definition}

By the following Lemma \ref{lem:F_ext}, for an extendable code-tuple $F=(f, \trans)$, we can ``extend'' the length of $f^{\ast}_i(\pmb{x})$ up to an arbitrary integer by extending $\pmb{x}$ appropriately.

\begin{lemma} [{\cite[Lemma 3]{IEICE2023}}]
\label{lem:F_ext}
A code-tuple $F=(f, \trans)$ is extendable if and only if for any $i \in [F]$ and integer $l \geq 0$, 
there exists $\pmb{x} \in \mathcal{S}^{\ast}$ such that $|f^{\ast}_i(\pmb{x})| \geq l$.
\end{lemma}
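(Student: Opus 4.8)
The plan is to prove the biconditional in Lemma~\ref{lem:F_ext} by establishing each direction separately, relying essentially on the recursive structure of $\PREF^k_{F,i}$ recorded in Lemma~\ref{lem:pref}(i) and on part~(iii) of Lemma~\ref{lem:f_T}. The ``if'' direction is the easy one: assuming that for every $i \in [F]$ and every $l \geq 0$ there is some $\pmb{x} \in \mathcal{S}^{\ast}$ with $|f^{\ast}_i(\pmb{x})| \geq l$, I would simply take $l = 1$ to produce $\pmb{x}$ with $|f^{\ast}_i(\pmb{x})| \geq 1$; then by \eqref{eq:pref3} any length-$1$ prefix of $f^{\ast}_i(\pmb{x})$ lies in $\PREF^1_{F,i}$, so $\PREF^1_{F,i} \neq \emptyset$ and $F$ is extendable by Definition~\ref{def:F_ext}.

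For the ``only if'' direction, I would argue by induction on $l$. The base case $l = 0$ is immediate, taking $\pmb{x} = \lambda$ so that $f^{\ast}_i(\lambda) = \lambda$ has length $0 \geq 0$. For the inductive step, suppose that for every $i \in [F]$ there is $\pmb{x} \in \mathcal{S}^{\ast}$ with $|f^{\ast}_i(\pmb{x})| \geq l$, and fix $i \in [F]$. Since $F$ is extendable, $\PREF^1_{F,i} \neq \emptyset$, so by the characterization \eqref{eq:pref3} there exists $\pmb{y} \in \mathcal{S}^{\ast}$ with $|f^{\ast}_i(\pmb{y})| \geq 1$; in fact I would rather extract from $\PREF^1_{F,i} \neq \emptyset$ (via \eqref{eq:pref1} or \eqref{eq:pref3}) a single symbol $s \in \mathcal{S}$ such that $f_i(s) \neq \lambda$, i.e.\ $|f_i(s)| \geq 1$. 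Then set $j \coloneqq \trans_i(s)$ and, by the induction hypothesis applied to $j$, pick $\pmb{z} \in \mathcal{S}^{\ast}$ with $|f^{\ast}_j(\pmb{z})| \geq l$. Now consider $\pmb{x} \coloneqq s\pmb{z} \in \mathcal{S}^{\ast}$. By the definition \eqref{eq:fstar}, $f^{\ast}_i(s\pmb{z}) = f_i(s) f^{\ast}_{\trans_i(s)}(\pmb{z}) = f_i(s) f^{\ast}_j(\pmb{z})$, hence $|f^{\ast}_i(\pmb{x})| = |f_i(s)| + |f^{\ast}_j(\pmb{z})| \geq 1 + l > l$, which in particular gives $|f^{\ast}_i(\pmb{x})| \geq l + 1$ (and also covers the required bound for the value $l+1$). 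This completes the induction.

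The one point that needs a little care — and is really the only obstacle — is justifying the extraction step: from $\PREF^1_{F,i} \neq \emptyset$ I must produce a symbol $s$ with $f_i(s) \neq \lambda$, rather than merely a sequence $\pmb{x}$ with $|f^{\ast}_i(\pmb{x})| \geq 1$. If I only have $\pmb{x} = x_1x_2\ldots x_n$ with $f^{\ast}_i(\pmb{x}) \succeq \pmb{c}$ for some $\pmb{c} \in \mathcal{C}^1$, then by unfolding \eqref{eq:fstar}, $f^{\ast}_i(\pmb{x}) = f_i(x_1) f^{\ast}_{\trans_i(x_1)}(\suff(\pmb{x}))$, and since the codeword sequence is nonempty there must be a first symbol along the chain $x_1, x_2, \dots$ whose code table entry is nonempty; taking $s$ to be that symbol (after possibly replacing $i$ by the appropriate reachable index, or just noting $|f^{\ast}_i(\pmb{x})|\geq 1$ forces $f_i(x_1)\neq\lambda$ or the tail is nonempty and one recurses) yields what is needed. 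Actually the cleanest route is to avoid this subtlety entirely: in the inductive step, use the induction hypothesis with $l' = l+1$... no — instead, observe that it suffices to show directly that extendability of $F$ implies, for each $i$, the existence of $s\in\mathcal S$ with $f_i(s)\neq\lambda$, which follows because if $f_i(s)=\lambda$ for all $s$ then $f^{\ast}_i(\pmb x)=\lambda$ for all $\pmb x$ by an easy induction, contradicting $\PREF^1_{F,i}\neq\emptyset$ via \eqref{eq:pref3}. With that observation in hand, the induction above goes through cleanly. $\Box$
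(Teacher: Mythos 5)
Your overall plan---induction on $l$ for the ``only if'' direction, gaining at least one bit of codeword per inductive step via extendability---is sound, and the ``if'' direction is correct as written. However, the justification you ultimately settle on contains a false claim. You write that ``if $f_i(s)=\lambda$ for all $s$ then $f^{\ast}_i(\pmb x)=\lambda$ for all $\pmb x$ by an easy induction.'' This is not true: by (\ref{eq:fstar}), $f^{\ast}_i(\pmb{x})$ chains through the tables $f_{\trans_i(x_1)}, f_{\trans^{\ast}_i(x_1x_2)}, \dots$, and those other tables may emit nonempty codewords even when $f_i$ itself maps every symbol to $\lambda$. Concretely, take a $2$-code-tuple with $f_0(s) = \lambda$ and $\trans_0(s) = 1$ for all $s$, and $f_1(s) = 0$ for all $s$; then $f_0(s)=\lambda$ for every $s$, yet $f^{\ast}_0(ss') = f_0(s)f_1(s') = 0 \neq \lambda$, so $\PREF^1_{F,0} \neq \emptyset$. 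Thus extendability does \emph{not} imply that each $f_i$ has a nonempty entry, and the ``cleanest route'' you propose does not close the gap you yourself identified.

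The actual cleanest fix is to stop insisting on a single symbol $s$ with $f_i(s)\neq\lambda$ and instead apply the induction hypothesis at the index reached after an entire prefix. Since $\PREF^1_{F,i}\neq\emptyset$, by (\ref{eq:pref3}) pick $\pmb{y}\in\mathcal{S}^{\ast}$ with $|f^{\ast}_i(\pmb{y})|\geq 1$, set $j\coloneqq\trans^{\ast}_i(\pmb{y})$, and by the induction hypothesis choose $\pmb{z}$ with $|f^{\ast}_j(\pmb{z})|\geq l$. Lemma~\ref{lem:f_T}~(i) then gives $f^{\ast}_i(\pmb{y}\pmb{z}) = f^{\ast}_i(\pmb{y})\,f^{\ast}_j(\pmb{z})$, whence $|f^{\ast}_i(\pmb{y}\pmb{z})|\geq 1+l$, completing the inductive step. (Your earlier sketch of a fix---walking down the chain of indices until the first nonempty table entry appears---can also be made rigorous, but it amounts to a more laborious version of the same idea and requires you to apply the inductive hypothesis at a $\trans^{\ast}$-reachable index anyway, which your written inductive step, using $j=\trans_i(s)$, does not yet do.)
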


\subsection{Average Codeword Length of Code-Tuple}
\label{subsec:evaluation}

In this subsection, we introduce the average codeword length $L(F)$ of a code-tuple $F$.
First, we state the definitions of the transition probability matrix and stationary distributions of a code-tuple in the following Definitions \ref{def:transprobability} and \ref{def:stationary}.

 \begin{definition}[{\cite[Definition 6]{JSAIT2022}}, {\cite[Definition 7]{IEICE2023}}]
\label{def:transprobability}
For $F=(f, \trans) \in \mathscr{F}$ and $i, j \in [F]$, the \emph{transition probability} $Q_{i,j}(F)$ is defined as
\begin{equation*}
 Q_{i,j}(F) \coloneqq \sum_{s \in \mathcal{S}, \trans_i(s) = j} \mu(s).
 \end{equation*}
The \emph{transition probability matrix} $Q(F)$ is defined as the following $|F|\times|F|$ matrix:
 \begin{equation*}
  \left[
    \begin{array}{cccc}
      Q_{0,0}(F) & Q_{0,1}(F) & \cdots & Q_{0, |F|-1}(F) \\
       Q_{1,0}(F) &  Q_{1,1}(F) & \cdots &  Q_{1, |F|-1}(F) \\
      \vdots & \vdots & \ddots & \vdots \\
      Q_{|F|-1, 0}(F) &  Q_{|F|-1, 1}(F) & \cdots &  Q_{|F|-1, |F|-1}(F) 
    \end{array}
  \right].
 \end{equation*}
  \end{definition}

 \begin{definition}[{\cite[Definition 7]{JSAIT2022}}{\cite[Definition 9]{IEICE2023}}]
\label{def:stationary}
For $F \in \mathscr{F}$, a solution $\pmb{\pi} = (\pi_0, \allowbreak \pi_1, \ldots, \pi_{|F|-1}) \in \mathbb{R}^{|F|}$ of the following simultaneous equations (\ref{eq:stationary1}) and (\ref{eq:stationary2})
is called a \emph{stationary distribution of $F$}:
\begin{numcases}{}
\pmb{\pi}Q(F) = \pmb{\pi}\label{eq:stationary1},\\
 \sum_{i \in [F]} \pi_i = 1. \label{eq:stationary2}
 \end{numcases}
\end{definition}

It is guaranteed by {\cite[Lemma 6]{IEICE2023}} that every code-tuple has at least one stationary distribution $\pmb{\pi} = (\pi_0, \pi_1, \ldots, \allowbreak \pi_{|F|-1})$ such that $\pi_i \geq 0$ for any $i \in [F]$.
A code-tuple with a unique stationary distribution is said to be \emph{regular} as the next Definition \ref{def:regular}.
The average codeword length is defined in Definition \ref{def:evaluation} only for regular code-tuples by using the unique stationary distribution.
 
 \begin{definition}[{\cite[Definition 7]{JSAIT2022}}{\cite[Definition 9]{IEICE2023}}]
 \label{def:regular}
A code-tuple $F$ is said to be \emph{regular} if $F$ has a unique stationary distribution.
 The set of all regular code-tuples is denoted by $\mathscr{F}_{\reg}$, that is,
$\mathscr{F}_{\reg} \coloneqq \{F \in \mathscr{F} : F \text{ is regular}\}.$
 For $F \in \mathscr{F}_{\reg}$, the unique stationary distribution of $F$ is denoted by $\pmb{\pi}(F) = (\pi_0(F), \pi_1(F), \ldots, \pi_{|F|-1}(F))$.
 \end{definition}

Then $\mathscr{F}_{\reg}$ is characterized by the following lemma.
 
 \begin{lemma}[{\cite[Lemma 8]{IEICE2023}}]
\label{lem:kernel}
For any $F \in \mathscr{F}$, the following statements (i) and (ii) hold:
\begin{enumerate}[(i)]
\item $\mathscr{F}_{\reg} = \{F \in \mathscr{F} : \kernel_F \neq \emptyset\}$,
\item if $F \in \mathscr{F}_{\reg}$, then $\kernel_F = \{ i \in [F] : \pi_i(F) > 0\}$,
\end{enumerate}
where
\begin{equation*}
\label{eq:7tuxvj14yeno}
\kernel_F \coloneqq \{i \in [F] : {\forall}j \in [F], \exists\pmb{x} \in \mathcal{S}^{\ast} \,\,\mathtt{s.t.}\,\, \trans^{\ast}_j(\pmb{x}) = i\}.
\end{equation*}
\end{lemma}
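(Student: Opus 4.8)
The plan is to prove the two characterizations of $\mathscr{F}_{\reg}$ by working with the Markov chain on the index set $[F]$ governed by the transition probability matrix $Q(F)$. The key observation is that $\kernel_F$ is precisely the set of indices $i$ that are reachable from \emph{every} index $j$, i.e.\ in Markov-chain language, indices lying in a closed communicating class that is reachable from all states; equivalently, $[F] \setminus \kernel_F$ consists of the transient states and states in non-maximal recurrent classes (in the reachability order). Since $\mu(s) > 0$ for all $s \in \mathcal{S}$, there is an edge $i \to j$ in the transition graph exactly when $\trans_i(s) = j$ for some $s$, and $\trans^{\ast}_j(\pmb{x}) = i$ for some $\pmb{x} \in \mathcal{S}^{\ast}$ precisely when $i$ is reachable from $j$ (including $i = j$ via $\pmb{x} = \lambda$) in this graph. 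So $\kernel_F$ equals the union of all \emph{essential} (closed, and reachable from everywhere) recurrent classes; I will first argue that in fact there can be at most one such class, or rather that $\kernel_F$ is either empty or a single closed class reachable from all states.

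First I would set up the correspondence between $F$ and its transition digraph $G_F$ on vertex set $[F]$, with $i \to j$ iff $Q_{i,j}(F) > 0$, and record that reachability in $G_F$ matches the existence of $\pmb{x}$ with $\trans^{\ast}_j(\pmb{x}) = i$ (an easy induction on $|\pmb{x}|$ using Definition \ref{def:f_T}, or one can cite the analogous fact if already available). Then for (i): if $\kernel_F \neq \emptyset$, pick $i_0 \in \kernel_F$; its communicating class $\mathcal{C}_0$ is closed (any state reachable from $i_0$ is reachable from every $j$, hence lies in $\kernel_F$, and being reachable from $i_0$ which is reachable back... one must check $\mathcal{C}_0$ is closed, i.e.\ recurrent: if $i_0 \to \ell$ with $\ell \notin \mathcal{C}_0$, then since $\ell \in \kernel_F$ we would get $i_0$ reachable from $\ell$, putting $\ell \in \mathcal{C}_0$, contradiction). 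A closed class carries a unique stationary distribution supported on it, and supplying zeros elsewhere gives a nonnegative stationary vector; uniqueness among \emph{all} real solutions of \eqref{eq:stationary1}--\eqref{eq:stationary2} follows because any stationary vector must vanish on states not in a closed class reachable appropriately — here I would invoke standard finite Markov chain theory: the stationary equation $\pmb{\pi} Q = \pmb{\pi}$ has solution space of dimension equal to the number of closed communicating classes, so uniqueness is equivalent to there being exactly one closed class, and $\kernel_F \neq \emptyset$ forces exactly one closed class (any closed class is recurrent, and $\kernel_F \neq \emptyset$ means some class is reachable from all, which combined with closedness of every recurrent class forces all recurrent states into that one class). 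Conversely, if $F \in \mathscr{F}_{\reg}$ then there is exactly one closed class, and that class is reachable from every state (every state eventually reaches a closed class, and there's only one), so it lies in $\kernel_F$, giving $\kernel_F \neq \emptyset$.

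For (ii), assuming $F \in \mathscr{F}_{\reg}$: by the above the unique stationary distribution $\pmb{\pi}(F)$ is supported exactly on the unique closed communicating class $\mathcal{C}^{\ast}$, with all entries on $\mathcal{C}^{\ast}$ strictly positive (irreducible recurrent class) and all entries off it zero. It remains to identify $\mathcal{C}^{\ast}$ with $\kernel_F$: I showed $\mathcal{C}^{\ast} \subseteq \kernel_F$ in the converse direction of (i); for the reverse inclusion, if $i \in \kernel_F$ then $i$ is reachable from every $j$, in particular from any $j_0 \in \mathcal{C}^{\ast}$, and since $\mathcal{C}^{\ast}$ is closed, $i \in \mathcal{C}^{\ast}$. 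Hence $\kernel_F = \mathcal{C}^{\ast} = \{i \in [F] : \pi_i(F) > 0\}$.

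The main obstacle I anticipate is making the Markov-chain folklore rigorous within the paper's self-contained framework — specifically, pinning down that the solution space of $\pmb{\pi}Q(F) = \pmb{\pi}$, $\sum_i \pi_i = 1$ is a singleton exactly when there is a unique closed communicating class, and that the unique stationary vector is then strictly positive on that class and zero off it. If \cite{IEICE2023} already supplies the existence of a nonnegative stationary distribution (as noted in the excerpt, \cite[Lemma 6]{IEICE2023}), the cleanest route is to prove: (a) for each closed class one gets a distinct nonnegative stationary distribution supported there, so $\geq 2$ closed classes $\Rightarrow$ non-regular; and (b) the set of states reaching \emph{every} state is nonempty iff there is exactly one closed class, handling the support claim along the way via irreducibility of a closed class. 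These are short graph-theoretic and linear-algebra arguments, but care is needed to avoid circular appeals to results not yet established in the excerpt.
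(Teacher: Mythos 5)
The paper does not prove Lemma~\ref{lem:kernel}; it is imported without proof as \cite[Lemma 8]{IEICE2023}, so there is no in-paper argument to compare your proposal against. Your approach is the natural one and is essentially correct: since $\mu(s)>0$ for every $s$, one has $Q_{i,j}(F)>0$ iff $\trans_i(s)=j$ for some $s$, so reachability in the transition digraph coincides with $\exists\pmb{x}\ \mathtt{s.t.}\ \trans^{\ast}_j(\pmb{x})=i$, and $\kernel_F$ is exactly the set of states reachable from every state. Regularity is then equivalent to there being a unique closed communicating class; that class is reachable from everywhere iff $\kernel_F\neq\emptyset$, and the unique stationary vector is strictly positive on it and zero elsewhere, giving (ii).

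Two remarks. First, there is a local slip in your closedness argument for $\mathcal{C}_0$: you justify ``$i_0$ reachable from $\ell$'' by ``$\ell\in\kernel_F$,'' but $\ell\in\kernel_F$ only says that $\ell$ is reachable from everywhere, not that anything is reachable from $\ell$. The correct justification is to apply $i_0\in\kernel_F$ with $j=\ell$, which directly yields $i_0$ reachable from $\ell$ and hence $\ell\leftrightarrow i_0$; the conclusion that $\mathcal{C}_0$ is closed (and in fact $\kernel_F=\mathcal{C}_0$) is unaffected. Second, as you yourself flag, a self-contained writeup would still need to prove the dimension count for the stationary space --- that the affine solution set of $\pmb{\pi}Q(F)=\pmb{\pi}$, $\sum_i\pi_i=1$ is a singleton iff there is exactly one closed class, and that the unique stationary distribution of a finite irreducible chain is strictly positive --- rather than cite it as folklore. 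With that filled in, the proof is sound.
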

 
\begin{definition} [{\cite[Definition 8]{JSAIT2022}}, {\cite[Definition 10]{IEICE2023}}]
 \label{def:evaluation}
 For $F=(f, \trans) \in \mathscr{F}$ and $i \in [F]$, the \emph{average codeword length $L_i(F)$ of the single code table} $f_i : \mathcal{S} \rightarrow \mathcal{C}^{\ast}$ is defined as
  \begin{equation*}
 L_i(F) \coloneqq \sum_{s \in \mathcal{S}} |f_i(s)| \mu(s).
  \end{equation*}
For $F \in \mathscr{F}_{\reg}$,
the \emph{average codeword length $L(F)$ of the code-tuple $F$} is defined as 
 \begin{equation}
 \label{eq:evaluation}
 L(F) \coloneqq \sum_{i \in [F]} \pi_i(F)L_i(F).
 \end{equation}
 \end{definition}
 
 The code tables $f_i$ of $F \in \mathscr{F}_{\reg}$ such that $\pi_i(F) = 0$, equivalently $i \not\in \kernel_{F}$ by Lemma \ref{lem:kernel} (ii), do not contribute to the average codeword length $L(F)$ by (\ref{eq:evaluation}).
A code-tuple is said to be \emph{irreducible} if it does not have such non-essential code tables.

\begin{definition}[{\cite[Definition 13]{IEICE2023}}]
\label{def:F_irr}
A code-tuple $F$ is said to be \emph{irreducible} if $\kernel_F = [F]$.
We define $\mathscr{F}_{\irr}$ as the set of all irreducible code-tuples, that is,
$\mathscr{F}_{\irr} \coloneqq \{F \in \mathscr{F} :\kernel_F = [F]\}.$
\end{definition}
Note that 
$\mathscr{F}_{\irr}
= \{F \in \mathscr{F} : \kernel_{F} = [F]\} 
\subseteq  \{F \in \mathscr{F} : \kernel_{F} \neq \emptyset\} 
= \mathscr{F}_{\reg}$.

For use in later proof, we here state the following Definition \ref{def:closed} and Lemma \ref{lem:closed} on irreducible code-tuples.
See Appendix \ref{subsec:proof-closed} for the proof of Lemma \ref{lem:closed}.
\begin{definition}
\label{def:closed}
For $F=(f, \trans) \in \mathscr{F}$ and $\mathcal{I} \subseteq [F]$, we introduce the following definitions (i) and (ii).
\begin{enumerate}[(i)]
\item The set $\mathcal{I}$ is called a \emph{closed set of $F$} if
for any $i \in \mathcal{I}$ and $\pmb{x} \in \mathcal{S}^{\ast}$, it holds that $\trans^{\ast}_{i}(\pmb{x}) \in \mathcal{I}$.
\item The set $\mathcal{I}$ is called a \emph{minimal non-empty closed set of $F$} if $\mathcal{I}$ satisfies the following two conditions:
\begin{itemize}
\item $\mathcal{I}$ is a non-empty closed set of $F$;
\item  for any $\emptyset \subsetneq \mathcal{J} \subsetneq \mathcal{I}$, the set $\mathcal{J}$ is not a closed set of $F$.
\end{itemize}
In other words, $\mathcal{I}$ does not have a non-empty proper subset that is a closed set of $F$.
\end{enumerate}
\end{definition}

\begin{lemma}
\label{lem:closed}
For any $F \in \mathscr{F}$, if $[F]$ is a minimal non-empty closed set of $F$, then $F \in \mathscr{F}_{\irr}$.
\end{lemma}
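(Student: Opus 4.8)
The plan is to show the contrapositive-style implication directly: assuming $[F]$ is a minimal non-empty closed set, I will verify that $\kernel_F = [F]$, i.e., that every index $i \in [F]$ is reachable from every index $j \in [F]$ via some source sequence. First I would recall from Lemma~\ref{lem:kernel} that $F \in \mathscr{F}_{\reg}$ as soon as $\kernel_F \neq \emptyset$, and that $\mathscr{F}_{\irr} = \{F : \kernel_F = [F]\}$, so it suffices to prove $\kernel_F = [F]$ under the minimality hypothesis. Since $[F]$ is non-empty, fix some $j \in [F]$ and consider the set of indices reachable from $j$, namely $\mathcal{R}_j \coloneqq \{\trans^{\ast}_j(\pmb{x}) : \pmb{x} \in \mathcal{S}^{\ast}\}$.

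The key observation is that $\mathcal{R}_j$ is a closed set of $F$: it contains $j = \trans^{\ast}_j(\lambda)$ so it is non-empty, and if $i = \trans^{\ast}_j(\pmb{x}) \in \mathcal{R}_j$ and $\pmb{y} \in \mathcal{S}^{\ast}$, then by Lemma~\ref{lem:f_T}(ii) we have $\trans^{\ast}_i(\pmb{y}) = \trans^{\ast}_{\trans^{\ast}_j(\pmb{x})}(\pmb{y}) = \trans^{\ast}_j(\pmb{x}\pmb{y}) \in \mathcal{R}_j$, so $\mathcal{R}_j$ is closed under further transitions. By the minimality of $[F]$ as a non-empty closed set, the non-empty closed subset $\mathcal{R}_j \subseteq [F]$ cannot be a proper subset, hence $\mathcal{R}_j = [F]$. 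This says exactly that for every $i \in [F]$ there exists $\pmb{x} \in \mathcal{S}^{\ast}$ with $\trans^{\ast}_j(\pmb{x}) = i$. Since $j \in [F]$ was arbitrary, this gives: for every $i \in [F]$ and every $j \in [F]$ there exists $\pmb{x}$ with $\trans^{\ast}_j(\pmb{x}) = i$, which is precisely the defining condition $i \in \kernel_F$. Therefore $\kernel_F = [F]$, and $F \in \mathscr{F}_{\irr}$ by Definition~\ref{def:F_irr}.

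I do not anticipate a serious obstacle here; the argument is essentially a one-line application of minimality once the right closed set is identified. The only point requiring care is the verification that $\mathcal{R}_j$ is genuinely a closed set of $F$ in the sense of Definition~\ref{def:closed}(i) — this uses the composition identity $\trans^{\ast}_i(\pmb{y}) = \trans^{\ast}_{\trans^{\ast}_j(\pmb{x})}(\pmb{y}) = \trans^{\ast}_j(\pmb{x}\pmb{y})$ from Lemma~\ref{lem:f_T}(ii), and one should also note the trivial base case $\trans^{\ast}_j(\lambda) = j$ to see $\mathcal{R}_j$ is non-empty. A secondary bookkeeping point is that the minimality hypothesis concerns proper \emph{non-empty} closed subsets, so I must explicitly invoke non-emptiness of $\mathcal{R}_j$ before concluding $\mathcal{R}_j = [F]$; with that noted, the proof is complete.
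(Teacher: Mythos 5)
Your proof is correct, but it takes a different construction from the paper's. You fix a \emph{source} index $j$ and form the forward-reachable set $\mathcal{R}_j = \{\trans^{\ast}_j(\pmb{x}) : \pmb{x} \in \mathcal{S}^{\ast}\}$, which is non-empty (it contains $j = \trans^{\ast}_j(\lambda)$) and closed by Lemma~\ref{lem:f_T}(ii), so minimality forces $\mathcal{R}_j = [F]$. The paper instead fixes a \emph{target} index $p$ and forms the complementary ``cannot-reach-$p$'' set $\mathcal{I} = \{i : \forall \pmb{x}', \trans^{\ast}_i(\pmb{x}') \neq p\}$, shows it is closed and a \emph{proper} subset (since $p \notin \mathcal{I}$), and concludes from minimality that $\mathcal{I} = \emptyset$. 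Both routes are one-line applications of minimality and both rely on the same composition identity from Lemma~\ref{lem:f_T}(ii); they are dual in the sense that yours works positively with the orbit of a starting index and shows it is everything, while the paper's works negatively with the trap of a target index and shows it is nothing. Your formulation has the small advantage that non-emptiness of $\mathcal{R}_j$ is immediate and the minimality hypothesis applies directly, whereas the paper's needs to observe separately that $\mathcal{I}$ is proper before concluding it is empty; conversely, the paper's construction lines up verbatim with the universal quantifier inside the definition of $\kernel_F$, so the final unwinding is one step shorter. Either way, the argument is sound and complete.
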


\subsection{$k$-Bit Delay Optimal Code-Tuples}
\label{subsec:F_opt}
 
A $k$-bit delay optimal code-tuple is a code-tuple with the optimal average codeword length in $\mathscr{F}_{\reg} \cap \mathscr{F}_{\ext} \cap \mathscr{F}_{k\hdec}$ as defined below.

\begin{definition}
\label{def:optimalset}
Let $k \geq 0$ be an integer.
A code-tuple $F \in \mathscr{F}_{\reg} \cap \mathscr{F}_{\ext} \cap \mathscr{F}_{k\hdec}$ is said to be \emph{$k$-bit delay optimal}
if for any $F' \in \mathscr{F}_{\reg} \cap \mathscr{F}_{\ext} \cap \mathscr{F}_{k\hdec}$, it holds that $L(F) \leq L(F')$.
We define $\mathscr{F}_{k\hopt}$ as the set of all $k$-bit delay optimal code-tuples.
\end{definition}

\begin{remark}
\label{rem:transitionprobability}
Note that $Q(F)$, $L(F)$, $\pmb{\pi}(F)$, and $\mathscr{F}_{k\hopt}$ depend on $\mu$ and we are now discussing a fixed $\mu$.
On the other hand, $\mathscr{F}_{\reg}$ and $\mathscr{F}_{\irr}$ are determined independently of $\mu$ since $\kernel_{F}$ does not depend on $\mu$.
\end{remark} 

The following result on $k$-bit delay optimal code-tuples is known \cite{IEICE2023}.

\begin{lemma}[{\cite[Theorem 1]{IEICE2023}}]
\label{thm:differ}
For any integer $k \geq 0$ and $F \in \mathscr{F}_{\reg} \cap \mathscr{F}_{\ext} \cap \mathscr{F}_{k\hdec}$, there exists $F^{\dagger} \in \mathscr{F}$ satisfying the following conditions (a)--(d), where $\prefset^k_F \coloneqq \{\PREF^k_{F, i} : i \in [F]\}$.
\begin{enumerate}[(a)]
\item $F^{\dagger}  \in \mathscr{F}_{\irr} \cap \mathscr{F}_{\ext} \cap \mathscr{F}_{k\hdec}$.
\item $L(F^{\dagger}) \leq L(F)$.
\item $\prefset^k_{F^{\dagger}} \subseteq \prefset^k_F$.
\item $|\prefset^k_{F^{\dagger}}| = |F^{\dagger}|$.
\end{enumerate}
\end{lemma}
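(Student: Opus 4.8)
The plan is to build $F^{\dagger}$ from $F$ by successive modifications that monotonically improve (or at least do not worsen) the average codeword length while strictly decreasing the number of code tables whenever two code tables "behave the same" with respect to the $k$-bit delay decodability structure. The key observation is that the set $\PREF^k_{F,i}$ is the only feature of code table $i$ that matters for the $k$-bit delay decodability conditions (Definition \ref{def:k-bitdelay}) of \emph{other} code tables referring to $i$ via $\trans$: conditions (a) and (b) only involve $\PREF^k_{F, \trans_i(s)}$ and $\bar{\PREF}^k_{F,i}(\cdot)$, and by Lemma \ref{lem:pref} the latter is itself determined by the $f_i(s)$ together with the sets $\PREF^k_{F, \trans_i(s)}$. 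Hence if two indices $i, j \in [F]$ satisfy $\PREF^k_{F,i} = \PREF^k_{F,j}$, we would like to redirect all transitions into $j$ to go into $i$ instead (or into whichever of the two code tables has the smaller local average length $L_i(F)$, $L_j(F)$), eventually eliminating redundant indices and obtaining $|\prefset^k_{F^{\dagger}}| = |F^{\dagger}|$.

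Concretely, I would first pass to an irreducible code-tuple: restrict $F$ to a minimal non-empty closed set $\mathcal{I} \subseteq \kernel_F$, which is nonempty since $F \in \mathscr{F}_{\reg}$ by Lemma \ref{lem:kernel}(i); by Lemma \ref{lem:closed} the restriction lies in $\mathscr{F}_{\irr}$, and since restricting to a closed set does not change the transition structure reachable from $\mathcal{I}$, extendability and $k$-bit delay decodability are preserved, and $L$ is unchanged because the discarded indices had stationary probability $0$ by Lemma \ref{lem:kernel}(ii). Next, among all code-tuples $G \in \mathscr{F}_{\irr} \cap \mathscr{F}_{\ext} \cap \mathscr{F}_{k\hdec}$ with $\prefset^k_G \subseteq \prefset^k_F$ and $L(G) \le L(F)$, I would pick one minimizing $|G|$ (such a $G$ exists since the candidate set is nonempty and $|G| \ge 1$). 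I claim this minimizer satisfies (d). If not, there are distinct $i, j \in [G]$ with $\PREF^k_{G,i} = \PREF^k_{G,j}$; then define $G'$ on $[G] \setminus \{j\}$ by replacing every occurrence of $j$ in the range of every $\trans$ by $i$, and setting the $i$-th code table to be whichever of $f_i, f_j$ (together with the correspondingly redirected $\trans$) gives the smaller $L_i$. One checks that $\PREF^k$ values of surviving indices are unchanged (using Lemma \ref{lem:pref}(i) and an induction on the recursive structure, since the sets $\PREF^k_{G,\ell}$ for $\ell \ne j$ are the same and references to $j$ now point to $i$ with the identical $\PREF^k$ set), so $G' \in \mathscr{F}_{k\hdec}$ and $G' \in \mathscr{F}_{\ext}$; passing to a minimal closed set restores irreducibility; and $L(G') \le L(G)$ because the new transition matrix is obtained by a lumping/merging that does not increase the weighted average of the $L_\ell$'s (here I would use that the stationary distribution of the merged chain assigns to $i$ the combined weight, and that $L_i$ was chosen as the minimum of the two). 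This contradicts minimality of $|G|$, so (d) holds; (a), (b), (c) hold by construction.

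The main obstacle I anticipate is the inequality $L(G') \le L(G)$ under the merge: one must argue carefully that collapsing state $j$ into state $i$ in an irreducible Markov chain, while replacing the local cost at $i$ by $\min(L_i, L_j)$, does not increase the stationary average cost. The clean way is to compare $G'$ with the intermediate code-tuple $\tilde G$ that keeps all $|G|$ states but makes $j$ an exact copy of $i$ (same code table up to the relabeling, same outgoing transitions pointing where $i$'s do) — for $\tilde G$ the two states $i, j$ are interchangeable, so the average length is a convex combination that is minimized by using the better of the two tables in both slots, and then deleting the now-redundant copy $j$ leaves $L$ unchanged. A secondary technical point is verifying that after the relabeling $\PREF^k_{G',\ell} = \PREF^k_{G,\ell}$ for all surviving $\ell$: this is an induction using the fixed-point characterization in Lemma \ref{lem:pref}(i), noting that the system of equations defining the $\PREF^k$ sets for $G'$ is obtained from that for $G$ by deleting one equation and substituting $i$ for $j$, and that $\PREF^k_{G,i} = \PREF^k_{G,j}$ makes the substituted system consistent with the old solution restricted to $\ell \ne j$.
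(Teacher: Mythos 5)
Your high-level strategy -- reduce to the irreducible case, then iteratively merge distinct indices $i,j$ with $\PREF^k_{G,i}=\PREF^k_{G,j}$ while controlling the average codeword length, contradicting minimality of $|G|$ -- is the right one; it parallels the proof of Lemma~\ref{lem:to-semi} in this paper, which the authors say is analogous to the cited proof of Lemma~\ref{thm:differ}. The $\PREF^k$-preservation argument via the fixed-point characterization of Lemma~\ref{lem:pref}(i) and the bookkeeping on irreducibility and extendability are unobjectionable.

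The gap is at your crucial merging step: choosing which of $f_i,f_j$ to keep by minimizing the \emph{local} length $\min(L_i,L_j)$ does not control $L(G')$. When you delete $j$ and redirect into $i$, you also replace $j$'s \emph{outgoing} transitions by $i$'s; this can shift the stationary distribution onto expensive downstream states, and your "convex combination" inequality (that making $j$ an exact copy of $i$ does not increase $L$) is false in general. Concretely: take four indices $i,j,m,n$ with $\PREF^k_{G,i}=\PREF^k_{G,j}$, local lengths $L_i=1$, $L_j=2$, $L_m=10$, $L_n=0$, and all transitions deterministic forming the cycle $i\to m\to j\to n\to i$. Then $L(G)=13/4$, but keeping $i$ (smaller local length) collapses the chain to the two-cycle $i\to m\to i$ with $L(G')=11/2>L(G)$. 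The correct selection criterion is not $L_i$ but the potential $h_i$ of Lemma~\ref{lem:potential}, which captures downstream cost; in the example $h_j<h_i$ because $j$ feeds into the cheap $n$ while $i$ feeds into the expensive $m$, and merging into the lower-potential index gives $L(G')=1<L(G)$. This is exactly what the paper's machinery supplies: Lemma~\ref{lem:chooseone} to redirect the whole equivalence class $\mathcal{I}$ into a single target $P$, Lemma~\ref{lem:improve} with $P\in\arg\min_{A\in\mathcal{I}}h_A$ to guarantee $L$ does not increase, and Lemma~\ref{lem:irr} to restore irreducibility, iterated by induction on $|F|$. Without the potential-function argument, your proof has a genuine hole at its central inequality.
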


Lemma \ref{thm:differ} implies that it suffices to consider only code-tuples satisfying the conditions (a)--(d) to obtain a $k$-bit delay optimal code-tuple.
In particular, the condition (d) implies $|F^{\dagger}| = |\prefset^k_{F^{\dagger}}| \leq |\mathscr{C}^k| = 2^{(2^k)}$, where $\mathscr{C}^k$ denotes the power set of $\mathcal{C}^k$.
Therefore, it suffices to consider only code-tuples with at most $2^{(2^k)}$ code tables.
In other words, we have an upper bound $2^{(2^k)}$ of the sufficient number of code tables to obtain a $k$-bit delay optimal code-tuple.

Note that $|\prefset^k_{F^{\dagger}}| = |F^{\dagger}|$ means that $\PREF^k_{F^{\dagger}, 0}, \PREF^k_{F^{\dagger}, 1}, \ldots, \PREF^k_{F^{\dagger}, |F^{\dagger}|-1}$ are distinct.
Intuitively, code tables $f_i$ with the same sets $\PREF^k_{F, i}$ play the same role, and having multiple code tables with the same role is redundant.

\begin{remark}
\label{rem:differ}
More precisely, since $\PREF^k_{F, i} \neq \emptyset$ must hold for any $F \in \mathscr{F}_{\ext}$ and $i \in [F]$, 
we have an upper bound $|F^{\dagger}| = |\prefset^k_{F^{\dagger}}| \leq |\mathscr{C}^k \setminus \{\emptyset\}| = 2^{(2^k)} - 1$.
Additionally, it is known by \cite{JSAIT2022} that there exists a $k$-bit delay optimal code-tuple $F$ such that
$\PREF^1_{F, i} = \{0, 1\}$ (equivalently, $\PREF^k_{F, i} \not\subseteq 0\mathcal{C}^{k-1}$ and $\PREF^k_{F, i} \not\subseteq 1\mathcal{C}^{k-1}$) holds for any $i \in [F]$. This fact slightly improves the upper bound to $2^{(2^k)} - 2^{(2^{k-1} + 1)} + 1$.
\end{remark}

\section{Main Results}
\label{sec:main}

In this section, we propose a method to reduce the number of code tables to be considered when discussing $k$-bit delay decodable code-tuples as the main results.
We first describe the basic idea of the reduction in the following paragraphs.

As stated in the previous subsection, a code-tuple does not need to have multiple code tables sharing the same sets $\PREF^k_{F, i}$.
Now, we consider whether code tables sharing ``symmetric'' sets $\PREF^k_{F, i}$ can be composed ``symmetrically.''
For example, let $i, j \in [F]$ satisfy $\PREF^k_{F, i} = \{00001, 10101, 11111\}$ and $\PREF^k_{F, j} = \{11110, 01010, 00000\}$, where $k = 5$.
Then it holds that $\PREF^k_{F, j} = \phi(\PREF^k_{F, i})$, where $\phi \colon \mathcal{C}^{\ast} \to \mathcal{C}^{\ast}$ is the mapping that flips every bit of the sequence.
Namely, the roles of these two code tables are symmetric in the sense that flipping all the bits of $\PREF^k_{F, i}$ yields $\PREF^k_{F, j}$ and vice versa.
Therefore, it is expected that imposing the constraint that these two code tables must be composed symmetrically does not worsen the average codeword length.

Developing this idea, we introduce a class $\Phi_k$ of mappings $\phi \colon \mathcal{C}^{\ast} \to \mathcal{C}^{\ast}$ satisfying certain conditions,
and we consider $\PREF^k_{F, i}$ and $\PREF^k_{F, j}$ to be ``symmetric'' if $\PREF^k_{F, j} = \phi(\PREF^k_{F, i})$ for some $\phi \in \Phi_k$.
This symmetry forms an equivalence relation of $\mathscr{C}^k$ and divides $\mathscr{C}^k$ into equivalence classes.
We impose the constraint that once only the code table for the representative of each equivalence class is defined,
the other code tables must be constructed symmetrically from the corresponding representative.

This constraint reduces the number of code tables to be considered when discussing $k$-bit delay code-tuples
because it suffices to consider only the code tables for the representatives in effect.
In this section, we formalize our idea as a \emph{reduced-code-tuple (RCT)}, in which a code table is associated only with each representative of the equivalence class.
Then we mathematically prove that even if we impose the constraint that we first construct an RCT and then determine the code-tuple symmetrically and automatically based on the RCT, the average codeword length does not get worse, and our idea works well.

This section consists of three subsections.
In Subsection \ref{subsec:Phi}, we first introduce the set $\Phi_k$ of all mappings $\phi \colon \mathcal{C}^{\ast} \to \mathcal{C}^{\ast}$ satisfying certain conditions
and establish an equivalence relation $\sim_k$ of the power set $\mathscr{C}^k = 2^{\mathcal{C}^k}$ using $\Phi_k$.
Then we classify the power set $\mathscr{C}^k$ of $\mathcal{C}^k$ into equivalence classes and
give the number of the equivalence classes as Theorem \ref{thm:orbit-num}.
In Subsection \ref{subsec:semi}, we propose an RCT, in which each equivalence class is associated with at most one code table;
thus, the number of code tables of an RCT does not exceed the number of equivalence classes.
Then throughout Subsections \ref{subsec:semi} and \ref{subsec:expand}, we prove Theorem \ref{thm:equiv} that finding a $k$-bit delay optimal code-tuple can be reduced to finding a ``$k$-bit delay optimal'' RCT.
Further, we also describe a coding procedure using an RCT instead of a code-tuple in Subsection \ref{subsec:expand}.

\subsection{The Set $\Phi$ of Mappings}
\label{subsec:Phi}

We define $\Phi_k$ as the set of all mappings that preserve the length, prefix relations, and the suffix after the $k$-th letter of sequences.
The formal definition of $\Phi_k$ is as follows.

\begin{definition}
\label{def:Phi}
For an integer $k \geq 0$, we define $\Phi_k$ as the set of all mappings $\phi \colon \mathcal{C}^{\ast} \to \mathcal{C}^{\ast}$ satisfying the following conditions (a)--(c).
\begin{enumerate}[(a)]
\item For any $\pmb{b} \in \mathcal{C}^{\ast}$, it holds that $|\pmb{b}| = |\phi(\pmb{b})|$.
\item For any $\pmb{b}, \pmb{b}' \in \mathcal{C}^{\ast}$, the following equivalence holds:
$\pmb{b} \preceq \pmb{b}' \iff \phi(\pmb{b}) \preceq\phi(\pmb{b}')$.
\item For any $\pmb{b} \in \mathcal{C}^{\geq k}$, it holds that $\suff^k(\pmb{b}) = \suff^k(\phi(\pmb{b}))$.
\end{enumerate}
\end{definition}

The third condition (c) in Definition \ref{def:Phi} is essentially unnecessary.
For the length $k$ of allowed decoding delay, the behavior of a mapping $\phi$ toward $\suff^k(\pmb{x})$ for each $\pmb{b} \in \mathcal{C}^{\geq k}$ does not affect our discussion.
Hence, we consider only mappings that do not change $\suff^k(\pmb{x})$ for simplicity.
As described later, this additional condition makes $\Phi_k$ a finite set and allows a mapping $\phi \in \Phi_k$ to be represented with a finite number of bits in the coding procedure.

Directly from Definition \ref{def:Phi}, we can see the following inclusion:
\begin{equation*}
\Phi_0 \subsetneq \Phi_1 \subsetneq \Phi_2 \subsetneq \cdots.
\end{equation*}

All mappings of $\Phi_k$ are bijections as follows. See Appendix \ref{subsec:proof-phi-bijection} for the proof of Lemma \ref{lem:phi-bijection}.
\begin{lemma}
\label{lem:phi-bijection}
All mappings $\phi \in \Phi_k$ are bijective.
\end{lemma}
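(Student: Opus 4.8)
The plan is to show that each $\phi \in \Phi_k$ is both injective and surjective by exploiting conditions (a)--(c) of Definition \ref{def:Phi}. First I would argue that $\phi$ restricted to each length class $\mathcal{C}^n$ is a bijection of $\mathcal{C}^n$ onto itself: by condition (a), $\phi$ maps $\mathcal{C}^n$ into $\mathcal{C}^n$, and since $|\mathcal{C}^n| = 2^n$ is finite, it suffices to prove injectivity on $\mathcal{C}^n$. For injectivity, suppose $\phi(\pmb{b}) = \phi(\pmb{b}')$ with $|\pmb{b}| = |\pmb{b}'| = n$. Then trivially $\phi(\pmb{b}) \preceq \phi(\pmb{b}')$, so condition (b) gives $\pmb{b} \preceq \pmb{b}'$; symmetrically $\pmb{b}' \preceq \pmb{b}$; hence $\pmb{b} = \pmb{b}'$. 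Thus $\phi|_{\mathcal{C}^n}$ is injective, and by finiteness it is a bijection of $\mathcal{C}^n$. Since this holds for every $n \geq 0$ and the $\mathcal{C}^n$ partition $\mathcal{C}^{\ast}$, the map $\phi$ is a bijection of $\mathcal{C}^{\ast}$.

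Actually the injectivity argument already uses only (a) and (b), and surjectivity follows purely from injectivity plus finiteness of each $\mathcal{C}^n$, so condition (c) is not needed for this lemma at all — it only serves to make $\Phi_k$ finite, as the text notes. I would therefore keep the proof short: state that $\phi(\mathcal{C}^n) \subseteq \mathcal{C}^n$ by (a), prove injectivity on $\mathcal{C}^n$ via the two-sided prefix argument from (b), conclude $\phi|_{\mathcal{C}^n}$ is a bijection of the finite set $\mathcal{C}^n$ for each $n$, and assemble these into a global bijection of $\mathcal{C}^{\ast} = \bigsqcup_{n \geq 0} \mathcal{C}^n$.

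I do not anticipate a genuine obstacle here; the only point requiring a little care is the mild subtlety that injectivity must first be established \emph{within} each length class before finiteness can be invoked — one cannot directly say "$\phi$ is injective on the infinite set $\mathcal{C}^{\ast}$, hence surjective." Writing it out, I would phrase it as: fix $n$; $\phi$ maps $\mathcal{C}^n$ to $\mathcal{C}^n$ by (a); it is injective on $\mathcal{C}^n$ because $\phi(\pmb{b}) = \phi(\pmb{b}')$ forces $\phi(\pmb{b}) \preceq \phi(\pmb{b}')$ and $\phi(\pmb{b}') \preceq \phi(\pmb{b})$, whence by (b) $\pmb{b} \preceq \pmb{b}'$ and $\pmb{b}' \preceq \pmb{b}$, i.e. $\pmb{b} = \pmb{b}'$; an injective self-map of the finite set $\mathcal{C}^n$ is bijective; therefore $\phi$ is bijective on each $\mathcal{C}^n$ and hence on $\mathcal{C}^{\ast}$.
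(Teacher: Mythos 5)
Your proof is correct and follows essentially the same route as the paper: injectivity from the two-sided application of Definition \ref{def:Phi} (b), and surjectivity by combining condition (a) (so $\phi(\mathcal{C}^n) \subseteq \mathcal{C}^n$) with injectivity and the finiteness of $\mathcal{C}^n$ to get $\phi(\mathcal{C}^n) = \mathcal{C}^n$. The only cosmetic difference is that the paper establishes injectivity globally on $\mathcal{C}^{\ast}$ first and then invokes the counting argument on each $\mathcal{C}^l$, whereas you restrict to each length class from the outset; both are sound and neither uses condition (c).
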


By the bijectivity, for any integer $k \geq 0$, $\phi \in \Phi_k$, and $A \in \mathscr{C}^k$, we have
\begin{equation}
\label{eq:mpd5l92t1cji}
|A| = |\phi(A)|.
\end{equation}
Also, for any $\pmb{b} \in \mathcal{C}^{\geq k}$, we have
\begin{equation}
\label{eq:29zjeuudznan}
 \phi( \left[ \pmb{b} \right]_k) = \left[ \phi(\pmb{b}) \right]_k
\end{equation}
because
$\phi([\pmb{b}]_k) \preceq \phi(\pmb{b})$ is implied by $[\pmb{b}]_k \preceq \pmb{b}$;
and $|\phi([\pmb{b}]_k)| = |[\pmb{b}]_k| = k$ by Definition \ref{def:Phi} (a).
This yields
\begin{equation}
\label{eq:s9qdkaq7m7ap}
\phi( \left[ A \right]_k) = \left[ \phi(A) \right]_k.
\end{equation}

We also define a mapping $\quot{\phi}{\pmb{d}} \colon \mathcal{C}^{\ast} \to \mathcal{C}^{\ast}$ as the following Definition \ref{def:phi-quot}.

\begin{definition}
\label{def:phi-quot}
For an integer $k \geq 0$, $\phi \in \Phi_k$, and $\pmb{d} \in \mathcal{C}^{\ast}$, we define a mapping $\quot{\phi}{\pmb{d}} \colon \mathcal{C}^{\ast} \to \mathcal{C}^{\ast}$ as
\begin{equation}
\label{eq:r6xuob4upqw5}
\quot{\phi}{\pmb{d}}(\pmb{b}) \coloneqq \phi(\pmb{d})^{-1}\phi(\pmb{d}\pmb{b})
\end{equation}
for $\pmb{b} \in \mathcal{C}^{\ast}$.
Note that the right-hand side $\phi(\pmb{d})^{-1}\phi(\pmb{d}\pmb{b})$ in (\ref{eq:r6xuob4upqw5}) is indeed defined since $\phi(\pmb{d}) \preceq \phi(\pmb{d}\pmb{b})$ is implied by
$\pmb{d} \preceq \pmb{db}$ and Definition \ref{def:Phi} (b).
\end{definition}

Then the next Lemma \ref{lem:phi-quot} holds, the proofs of which is given in Appendix \ref{subsec:proof-phi-quot}.

\begin{lemma}
\label{lem:phi-quot}
For any integer $k \geq 0$ and $\phi \in \Phi_k$, the following statements (i) and (ii) hold.
\begin{enumerate}[(i)]
\item For any $\pmb{b}, \pmb{b}' \in \mathcal{C}^{\ast}$, we have $\phi(\pmb{b}\pmb{b}') = \phi(\pmb{b})\quot{\phi}{\pmb{b}}(\pmb{b}')$.
\item For any $\pmb{d} \in \mathcal{C}^{\ast}$, we have $\quot{\phi}{\pmb{d}} \in \Phi_{K}$, where $K \coloneqq \max\{0, k-|\pmb{d}|\}$.
In particular, $\quot{\phi}{\pmb{d}} \in \Phi_k$.
\end{enumerate}
\end{lemma}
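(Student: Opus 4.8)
The plan is to prove the two statements of Lemma~\ref{lem:phi-quot} in order, with (i) serving as the computational backbone for (ii). For statement (i), I would simply unfold the definition: $\phi(\pmb{b})\quot{\phi}{\pmb{b}}(\pmb{b}') = \phi(\pmb{b})\bigl(\phi(\pmb{b})^{-1}\phi(\pmb{b}\pmb{b}')\bigr)$ by Definition~\ref{def:phi-quot}, and since $\phi(\pmb{b}) \preceq \phi(\pmb{b}\pmb{b}')$ (as noted in Definition~\ref{def:phi-quot}, using Definition~\ref{def:Phi}(b)), the ``inverse-element'' property~(ii) of the $\pmb{x}^{-1}$ notation gives $\phi(\pmb{b})\phi(\pmb{b})^{-1}\phi(\pmb{b}\pmb{b}') = \phi(\pmb{b}\pmb{b}')$. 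That closes (i).

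For statement (ii), fix $\pmb{d} \in \mathcal{C}^{\ast}$ and set $K \coloneqq \max\{0, k-|\pmb{d}|\}$; I must verify that $\quot{\phi}{\pmb{d}}$ satisfies conditions (a)--(c) of Definition~\ref{def:Phi} with parameter $K$. For (a): $|\quot{\phi}{\pmb{d}}(\pmb{b})| = |\phi(\pmb{d})^{-1}\phi(\pmb{d}\pmb{b})| = |\phi(\pmb{d}\pmb{b})| - |\phi(\pmb{d})| = |\pmb{d}\pmb{b}| - |\pmb{d}| = |\pmb{b}|$, using Definition~\ref{def:Phi}(a) for $\phi$ twice. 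For (b): I would show the chain of equivalences $\pmb{b} \preceq \pmb{b}' \iff \pmb{d}\pmb{b} \preceq \pmb{d}\pmb{b}' \iff \phi(\pmb{d}\pmb{b}) \preceq \phi(\pmb{d}\pmb{b}') \iff \phi(\pmb{d})^{-1}\phi(\pmb{d}\pmb{b}) \preceq \phi(\pmb{d})^{-1}\phi(\pmb{d}\pmb{b}')$; the first is elementary (prepending a common prefix), the second is Definition~\ref{def:Phi}(b) for $\phi$, and the third follows because $\phi(\pmb{d})$ is a common prefix of both sides and stripping a common prefix preserves the prefix relation. For (c): take $\pmb{b} \in \mathcal{C}^{\geq K}$; I need $\suff^K(\quot{\phi}{\pmb{d}}(\pmb{b})) = \suff^K(\pmb{b})$. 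Here I would use (i) in the form $\phi(\pmb{d}\pmb{b}) = \phi(\pmb{d})\quot{\phi}{\pmb{d}}(\pmb{b})$, so that $\quot{\phi}{\pmb{d}}(\pmb{b}) = \phi(\pmb{d})^{-1}\phi(\pmb{d}\pmb{b})$ is the tail of $\phi(\pmb{d}\pmb{b})$ of length $|\pmb{b}| \geq K$; since $|\pmb{d}\pmb{b}| = |\pmb{d}| + |\pmb{b}| \geq |\pmb{d}| + K \geq k$, we may apply Definition~\ref{def:Phi}(c) for $\phi$ to get $\suff^k(\phi(\pmb{d}\pmb{b})) = \suff^k(\pmb{d}\pmb{b}) = \suff^{k-|\pmb{d}|}(\pmb{b})$ when $k \geq |\pmb{d}|$ (and the statement is vacuous-ish when $k \leq |\pmb{d}|$, since then $K=0$ and $\suff^0$ is the identity, so I must handle that case separately but trivially). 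Chasing the indices, $\suff^K(\quot{\phi}{\pmb{d}}(\pmb{b}))$ equals the last $|\pmb{b}|-K$ letters of $\quot{\phi}{\pmb{d}}(\pmb{b})$, which equals the last $|\pmb{b}|-K$ letters of $\phi(\pmb{d}\pmb{b})$, which by the displayed suffix identity equals the last $|\pmb{b}|-K$ letters of $\pmb{d}\pmb{b}$, i.e. $\suff^K(\pmb{b})$. Finally, $\quot{\phi}{\pmb{d}} \in \Phi_k$ follows from $\quot{\phi}{\pmb{d}} \in \Phi_K$ together with $K \leq k$ and the inclusion $\Phi_0 \subseteq \Phi_1 \subseteq \cdots$ noted after Definition~\ref{def:Phi}.

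The main obstacle is purely bookkeeping: keeping the two length parameters $k$ and $K = \max\{0, k-|\pmb{d}|\}$ straight in condition (c), and being careful about the case split $|\pmb{d}| \geq k$ versus $|\pmb{d}| < k$. In the former case $K = 0$, condition (c) with parameter $0$ is automatically satisfied by any length-preserving map (since $\suff^0$ is the identity and (a) already holds), so nothing is to prove; in the latter case the index arithmetic $\suff^k(\pmb{d}\pmb{b}) = \suff^{k-|\pmb{d}|}(\pmb{b}) = \suff^K(\pmb{b})$ must be verified by hand. Everything else is a direct application of Definition~\ref{def:Phi}, Definition~\ref{def:phi-quot}, part (i), and the algebraic properties (i)--(iii) of the $\pmb{x}^{-1}$ notation.
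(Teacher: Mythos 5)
Your proof of (i) and your verification of Definition~\ref{def:Phi}(a) and (b) for $\quot{\phi}{\pmb{d}}$ match the paper's argument step for step. The problem is in your dispatch of the case $|\pmb{d}| \geq k$ (so $K=0$) when verifying condition (c): you assert that condition (c) with parameter $0$ ``is automatically satisfied by any length-preserving map,'' so nothing is to prove. That is false. For $K=0$, condition (c) of Definition~\ref{def:Phi} demands $\suff^0(\pmb{b}) = \suff^0(\psi(\pmb{b}))$ for every $\pmb{b} \in \mathcal{C}^{\ast}$, and since $\suff^0$ is the identity on sequences this says $\psi(\pmb{b}) = \pmb{b}$ for all $\pmb{b}$; in other words $\Phi_0 = \{\mathrm{id}_{\mathcal{C}^{\ast}}\}$ (which is also why the paper's chain $\Phi_0 \subsetneq \Phi_1 \subsetneq \cdots$ starts from a singleton). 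So in the case $|\pmb{d}| \geq k$, the lemma is asserting that $\quot{\phi}{\pmb{d}}$ \emph{is} the identity map, a substantive claim that your proposal simply skips.

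The fix is easy, because the index chase you run in the other case already proves it: when $|\pmb{d}| \geq k$, the hypothesis $\suff^k(\phi(\pmb{d}\pmb{b})) = \suff^k(\pmb{d}\pmb{b})$ pins down at least the last $|\pmb{b}|$ letters of $\phi(\pmb{d}\pmb{b})$, and those last $|\pmb{b}|$ letters are exactly $\quot{\phi}{\pmb{d}}(\pmb{b})$, giving $\quot{\phi}{\pmb{d}}(\pmb{b}) = \pmb{b}$. The paper sidesteps the case split entirely by noting $K + |\pmb{d}| = \max\{|\pmb{d}|, k\} \geq k$ holds in both cases, so the single chain
\begin{equation*}
\suff^{K}\bigl(\quot{\phi}{\pmb{d}}(\pmb{b})\bigr)
= \suff^{K+|\pmb{d}|}(\phi(\pmb{d}\pmb{b}))
= \suff^{K+|\pmb{d}|-k}\bigl(\suff^{k}(\phi(\pmb{d}\pmb{b}))\bigr)
= \suff^{K+|\pmb{d}|-k}\bigl(\suff^{k}(\pmb{d}\pmb{b})\bigr)
= \suff^{K}(\pmb{b})
\end{equation*}
applies uniformly. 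Either drop the claim that the $K=0$ case is vacuous and actually run the arithmetic there, or adopt the paper's uniform bound; as written, your proposal would leave a real hole whenever $|\pmb{d}| \geq k$.
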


Also, the following Definition \ref{def:phi-star} and Lemma \ref{lem:phi-represent} are useful for the representation of a mapping $\phi \in \Phi$.
See Appendix \ref{subsec:proof-phi-represent} for the proof of Lemma \ref{lem:phi-represent}.

\begin{definition}
\label{def:phi-star}
For an integer $k \geq 0$ and $\phi \in \Phi_k$,
we define a mapping $\phi^{\ast} : \mathcal{C}^{\ast} \to \mathcal{C}$ as
\begin{equation}
\label{eq:phi-star}
\phi^{\ast}(\pmb{b}) \coloneqq \quot{\phi}{\pmb{b}}(0)
\end{equation}
for $\pmb{b} \in \mathcal{C}^{\ast}$.
\end{definition}

\begin{lemma}
\label{lem:phi-represent}
For any integer $k \geq 0$ and $\phi \in \Phi_k$, the following statements (i) and (ii) hold,
where $\oplus$ denotes the addition modulo $2$, that is, $0 \oplus 0 = 1 \oplus 1 = 0, 0 \oplus 1 = 1 \oplus 0 = 1$.
\begin{enumerate}[(i)]
\item For any $\pmb{b} \in \mathcal{C}^{\ast}$, we have
\begin{equation}
\label{eq:uheypssietwo}
\phi(\pmb{b}) = (b_1\oplus\phi^{\ast}([\pmb{b}]_0))(b_2\oplus\phi^{\ast}([\pmb{b}]_1))\ldots (b_{|\pmb{b}|}\oplus\phi^{\ast}([\pmb{b}]_{|\pmb{b}|-1})),
\end{equation}
where the right-hand side represents the sequence of length of $|\pmb{b}|$ obtained by concatenating the following $|\pmb{b}|$ bits: $(b_1\oplus\phi^{\ast}([\pmb{b}]_0)), (b_2\oplus\phi^{\ast}([\pmb{b}]_1)), \ldots, (b_{|\pmb{b}|}\oplus\phi^{\ast}([\pmb{b}]_{|\pmb{b}|-1}))$.
\item For any $\pmb{b} \in \mathcal{C}^{\geq k}$, we have $\phi^{\ast}(\pmb{b}) = 0$.
\end{enumerate}
\end{lemma}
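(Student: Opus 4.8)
\textbf{Proof proposal for Lemma~\ref{lem:phi-represent}.}

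The plan is to prove (ii) first, since (i) will then follow by a short induction on $|\pmb{b}|$ using the definition of $\phi^{\ast}$. For (ii), fix $\pmb{b} \in \mathcal{C}^{\geq k}$. By Definition~\ref{def:phi-star}, $\phi^{\ast}(\pmb{b}) = \quot{\phi}{\pmb{b}}(0) = \phi(\pmb{b})^{-1}\phi(\pmb{b}0)$. By Lemma~\ref{lem:phi-quot}~(ii), $\quot{\phi}{\pmb{b}} \in \Phi_{K}$ with $K = \max\{0, k-|\pmb{b}|\} = 0$ because $|\pmb{b}| \geq k$. So it suffices to show that every $\psi \in \Phi_0$ fixes the single-bit sequences, i.e.\ $\psi(0) = 0$; indeed then $\phi^{\ast}(\pmb{b}) = \quot{\phi}{\pmb{b}}(0) = 0$. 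To see $\psi(0) = 0$ for $\psi \in \Phi_0$: by Definition~\ref{def:Phi}~(c) applied with $k=0$, we have $\suff^0(\pmb{c}) = \suff^0(\psi(\pmb{c}))$ for all $\pmb{c} \in \mathcal{C}^{\ast}$, and $\suff^0$ is the identity on $\mathcal{C}^{\ast}$ (deleting the first $0$ letters), so $\psi$ is the identity map. In particular $\psi(0) = 0$. This argument actually shows $\Phi_0 = \{\mathrm{id}\}$, which is the real content of (ii).

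For (i), I would induct on $n = |\pmb{b}|$. The base case $n = 0$ reads $\phi(\lambda) = \lambda$, which holds by Definition~\ref{def:Phi}~(a) since $|\phi(\lambda)| = |\lambda| = 0$. For the inductive step, write $\pmb{b} = \pmb{b}' b_n$ where $\pmb{b}' = [\pmb{b}]_{n-1}$. By Lemma~\ref{lem:phi-quot}~(i), $\phi(\pmb{b}) = \phi(\pmb{b}')\quot{\phi}{\pmb{b}'}(b_n)$. By the induction hypothesis, $\phi(\pmb{b}')$ equals the length-$(n-1)$ sequence in the claimed form. It remains to identify the last bit: $\quot{\phi}{\pmb{b}'}(b_n)$ is a single bit by Lemma~\ref{lem:phi-quot}~(ii) (which gives $\quot{\phi}{\pmb{b}'} \in \Phi_{K'}$, hence it preserves length), and I must show it equals $b_n \oplus \phi^{\ast}(\pmb{b}')$. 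If $b_n = 0$ this is exactly the definition $\phi^{\ast}(\pmb{b}') = \quot{\phi}{\pmb{b}'}(0)$. If $b_n = 1$, I need $\quot{\phi}{\pmb{b}'}(1) = 1 \oplus \quot{\phi}{\pmb{b}'}(0)$, i.e.\ a single-bit map in $\Phi_{K'}$ either fixes both of $\{0,1\}$ or swaps them; this follows from injectivity (Lemma~\ref{lem:phi-bijection}) together with length preservation, since $\quot{\phi}{\pmb{b}'}(0)$ and $\quot{\phi}{\pmb{b}'}(1)$ are distinct bits, hence $\{\quot{\phi}{\pmb{b}'}(0), \quot{\phi}{\pmb{b}'}(1)\} = \{0,1\}$.

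The main obstacle, such as it is, lies in carefully tracking that $\quot{\phi}{\pmb{d}}$ really does act as a length-preserving, prefix-preserving involution-or-identity on the first-bit level, i.e.\ that the ``per-position bit flip'' description is consistent; but this is entirely delivered by Lemma~\ref{lem:phi-quot} combined with $|\mathcal{C}| = 2$, so no genuine difficulty arises. One bookkeeping point worth stating explicitly in the write-up: in the inductive step one uses that $\phi^{\ast}$ evaluated at the prefixes $[\pmb{b}]_0, \ldots, [\pmb{b}]_{n-1}$ of $\pmb{b}$ coincides with $\phi^{\ast}$ evaluated at the prefixes $[\pmb{b}']_0, \ldots, [\pmb{b}']_{n-2}$ of $\pmb{b}'$, which is immediate since $[\pmb{b}]_j = [\pmb{b}']_j$ for $j \leq n-1$. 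Assembling these pieces gives (\ref{eq:uheypssietwo}) and completes the proof. \qed
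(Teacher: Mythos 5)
Your proposal is correct, and part (i) proceeds essentially as the paper does: induction on $|\pmb{b}|$, peeling off the last bit via Lemma~\ref{lem:phi-quot}~(i) and invoking length preservation plus injectivity to see that $\quot{\phi}{\pmb{b}'}$ acts on single bits by $c \mapsto c \oplus \quot{\phi}{\pmb{b}'}(0)$ (the paper packages this as the observation $\psi(\{0,1\}) = \{0,1\}$). Where you genuinely differ is part (ii). The paper proves it by a direct computation threading $\quot{\phi}{\pmb{d}}(0)$ through $\suff^{|\pmb{d}|}$ and Definition~\ref{def:Phi}~(c) to peel the last bit off $\phi(\pmb{d}0)$, whereas you observe that Lemma~\ref{lem:phi-quot}~(ii) places $\quot{\phi}{\pmb{b}}$ in $\Phi_{\max\{0,\,k-|\pmb{b}|\}} = \Phi_0$, and that Definition~\ref{def:Phi}~(c) with $k=0$ forces $\Phi_0 = \{\mathrm{id}_{\mathcal{C}^{\ast}}\}$, so $\phi^{\ast}(\pmb{b}) = \quot{\phi}{\pmb{b}}(0) = 0$ at once. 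Your route is shorter and arguably more illuminating: it exposes the structural fact that $\Phi_0$ is trivial, which the paper's chain of $\suff$ identities leaves implicit. The trade-off is that yours leans on Lemma~\ref{lem:phi-quot}~(ii) (the $\Phi_K$ membership with the precise exponent $K$), while the paper's is self-contained modulo part (i) of that lemma and the raw conditions of Definition~\ref{def:Phi}. Both are valid; if you adopt your version in a write-up, it is worth stating $\Phi_0 = \{\mathrm{id}_{\mathcal{C}^{\ast}}\}$ as a standalone observation, since it is reusable.
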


Lemma \ref{lem:phi-represent} (i) implies that a mapping $\phi \in \Phi_k$ is completely determined if the mapping $\phi^{\ast}$ is determined.
Further, $\phi^{\ast}(\pmb{b})$ for all $\pmb{b} \in \mathcal{C}^{\geq k}$ are automatically determined to be $0$ by Lemma \ref{lem:phi-represent} (ii).
Therefore, a mapping $\phi \in \Phi_k$ is specified from the information of $\phi^{\ast}(\pmb{b})$ for all $\pmb{b} \in \mathcal{C}^{\leq k-1}$ and can be represented by a bit sequence of the length of $|\mathcal{C}^{\leq k-1}| = 2^k-1$.
In particular, the cardinality of $\Phi_k$ is finite for any $k \geq 0$.

Notice that for $A \in \mathscr{C}^k$ and $\phi \in \Phi_k$,
we have $\phi(A) = \{\phi(\pmb{b}) : \pmb{b} \in A\} \in \mathscr{C}^k$ by Definition \ref{def:Phi} (a).
We now define a binary relation $\sim_k$ of $\mathscr{C}^k$ as follows:
for any $A,  B \in \mathscr{C}^k$, it holds that $A \sim_k B$ if and only if there exists $\phi \in \Phi_k$ such that $\phi(A) =  B$.
Then the binary relation $\sim_k$ is an equivalence relation of $\mathscr{C}^k$: the reflexivity (resp.~symmetry, transitivity) is confirmed from the following Lemma \ref{lem:Phi-group} (i) (resp.~(ii), (iii)), which is shown directly from Definition \ref{def:Phi}.

\begin{lemma}
\label{lem:Phi-group}
For any integer $k \geq 0$, the following statements (i)--(iii) hold.
\begin{enumerate}[(i)]
\item The identity mapping $\mathrm{id}_{\mathcal{C}^{\ast}} \colon \mathcal{C}^{\ast} \to \mathcal{C}^{\ast}$ is in $\Phi_k$.
\item For any $\phi \in \Phi_k$, the inverse mapping $\phi^{-1}$ of $\phi$ is in $\Phi_k$.
\item For any $\phi, \phi' \in \Phi_k$, the composite $\phi \circ \phi'$ is in $\Phi_k$.
\end{enumerate}
\end{lemma}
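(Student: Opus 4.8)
The plan is to verify each of the three conditions (a)--(c) of Definition~\ref{def:Phi} for the three mappings in question, using only the definition itself.

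First I would handle statement (i). For $\mathrm{id}_{\mathcal{C}^{\ast}}$, each of the three conditions is immediate: $|\pmb{b}| = |\pmb{b}|$ gives (a); $\pmb{b} \preceq \pmb{b}' \iff \pmb{b} \preceq \pmb{b}'$ gives (b); and $\suff^k(\pmb{b}) = \suff^k(\pmb{b})$ gives (c). Hence $\mathrm{id}_{\mathcal{C}^{\ast}} \in \Phi_k$.

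Next, statement (ii). By Lemma~\ref{lem:phi-bijection}, $\phi$ is bijective, so $\phi^{-1}$ exists as a mapping $\mathcal{C}^{\ast} \to \mathcal{C}^{\ast}$. For (a): given $\pmb{b}$, apply Definition~\ref{def:Phi}~(a) to $\phi$ at the argument $\phi^{-1}(\pmb{b})$ to get $|\phi^{-1}(\pmb{b})| = |\phi(\phi^{-1}(\pmb{b}))| = |\pmb{b}|$. For (b): given $\pmb{b}, \pmb{b}'$, write $\pmb{a} = \phi^{-1}(\pmb{b})$, $\pmb{a}' = \phi^{-1}(\pmb{b}')$; by Definition~\ref{def:Phi}~(b) for $\phi$, $\pmb{a} \preceq \pmb{a}' \iff \phi(\pmb{a}) \preceq \phi(\pmb{a}')$, i.e.\ $\phi^{-1}(\pmb{b}) \preceq \phi^{-1}(\pmb{b}') \iff \pmb{b} \preceq \pmb{b}'$, which is exactly (b) for $\phi^{-1}$. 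For (c): given $\pmb{b} \in \mathcal{C}^{\geq k}$, note first that $\phi^{-1}(\pmb{b}) \in \mathcal{C}^{\geq k}$ since $|\phi^{-1}(\pmb{b})| = |\pmb{b}| \geq k$ by the already-verified (a); then Definition~\ref{def:Phi}~(c) for $\phi$ at $\phi^{-1}(\pmb{b})$ gives $\suff^k(\phi^{-1}(\pmb{b})) = \suff^k(\phi(\phi^{-1}(\pmb{b}))) = \suff^k(\pmb{b})$, which is (c) for $\phi^{-1}$. Hence $\phi^{-1} \in \Phi_k$.

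Finally, statement (iii). Let $\phi, \phi' \in \Phi_k$ and put $\psi = \phi \circ \phi'$. For (a): $|\psi(\pmb{b})| = |\phi(\phi'(\pmb{b}))| = |\phi'(\pmb{b})| = |\pmb{b}|$ by applying Definition~\ref{def:Phi}~(a) twice. For (b): by Definition~\ref{def:Phi}~(b) for $\phi'$ and then for $\phi$, we have $\pmb{b} \preceq \pmb{b}' \iff \phi'(\pmb{b}) \preceq \phi'(\pmb{b}') \iff \phi(\phi'(\pmb{b})) \preceq \phi(\phi'(\pmb{b}'))$, i.e.\ $\pmb{b} \preceq \pmb{b}' \iff \psi(\pmb{b}) \preceq \psi(\pmb{b}')$. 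For (c): given $\pmb{b} \in \mathcal{C}^{\geq k}$, note $\phi'(\pmb{b}) \in \mathcal{C}^{\geq k}$ by (a) for $\phi'$, so $\suff^k(\psi(\pmb{b})) = \suff^k(\phi(\phi'(\pmb{b}))) = \suff^k(\phi'(\pmb{b})) = \suff^k(\pmb{b})$ by Definition~\ref{def:Phi}~(c) applied to $\phi$ and then to $\phi'$. Hence $\phi \circ \phi' \in \Phi_k$. There is no real obstacle here; the only point requiring a moment's care is that in verifying conditions (a) and (c) for $\phi^{-1}$ and for the composite, one must feed each sub-mapping an argument that still lies in the appropriate domain ($\mathcal{C}^{\geq k}$), which follows from the length-preservation condition~(a) established just before.
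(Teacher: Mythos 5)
Your proof is correct and is exactly the direct verification from Definition~\ref{def:Phi} that the paper asserts without writing out the details. The only nontrivial care point, which you handle explicitly, is noting that length-preservation (condition (a)) keeps arguments in $\mathcal{C}^{\geq k}$ so that condition (c) can be invoked on $\phi^{-1}(\pmb{b})$ and $\phi'(\pmb{b})$.
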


Let $\mathscr{C}^k / {\sim_k}$ denote the set of all the equivalence classes of $\mathscr{C}^k$ by $\sim_k$.
For $A \in \mathscr{C}$, let $\langle  A \rangle$ denote the equivalence class that $A$ belongs to, and for $\mathcal{I} \subseteq \mathscr{C}^k$, we define
 \begin{equation}
 \label{eq:1k1d4tzz0kr5}
 \langle \mathcal{I} \rangle \coloneqq \bigcup_{A \in \mathcal{I}} \langle A \rangle.
 \end{equation}

 \begin{example}
 The set $\mathscr{C}^2 / {\sim}_2$ consists of the following $6$ equivalence classes: 
 \begin{itemize}
 \item $\{\emptyset\}$;
 \item $\{\{00\}, \{01\}, \{10\}, \{11\}\}$;
 \item $\{\{00, 01\}, \{10, 11\}\}$;
 \item $\{\{00, 10\}, \{00, 11\}, \{01, 10\}, \{01, 11\}\}$;
  \item $\{\{00, 01, 10\}, \{00, 01, 11\}, \{00, 10, 11\}, \{01, 10, 11\}\}$;
 \item $\{\{00, 01, 10, 11\}\}$.
 \end{itemize}
 \end{example}

The number of equivalence classes is given by a recurrence relation as the following Theorem \ref{thm:orbit-num}.
See Appendix \ref{subsec:proof-orbit-num} for the proof of Theorem \ref{thm:orbit-num}.

\begin{theorem}
\label{thm:orbit-num}
Let $a_k \coloneqq |\mathscr{C}^k / {\sim_k}|$ for each integer $k \geq 0$.
Then we have
\begin{equation*}
a_k =
\begin{cases}
2 &\text{if}\,\, k = 0,\\
a_{k-1}(a_{k-1}+1)/2 &\text{if}\,\, k \geq 1
\end{cases}
\end{equation*}
for any integer $k \geq 0$.
\end{theorem}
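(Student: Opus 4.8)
The plan is to prove the recurrence by analyzing how a mapping $\phi \in \Phi_k$ decomposes according to its action on the first bit of a sequence. The base case $k = 0$ is immediate: $\mathcal{C}^0 = \{\lambda\}$, so $\mathscr{C}^0 = \{\emptyset, \{\lambda\}\}$ has two elements, and since every $\phi \in \Phi_0$ must fix $\lambda$ by Definition \ref{def:Phi}(a), the relation $\sim_0$ is trivial and $a_0 = 2$. For the inductive step, fix $k \geq 1$ and observe that each nonempty $A \in \mathscr{C}^k$ splits as $A = 0A_0 \sqcup 1A_1$ where $A_0 = \{[\suff^0]\text{-part}\} \coloneqq \{\pmb{b} \in \mathcal{C}^{k-1} : 0\pmb{b} \in A\}$ and $A_1 \coloneqq \{\pmb{b} \in \mathcal{C}^{k-1} : 1\pmb{b} \in A\}$, so $A_0, A_1 \in \mathscr{C}^{k-1}$.

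The key structural step is to understand the $\Phi_k$-action in terms of this splitting. A mapping $\phi \in \Phi_k$ either fixes the first bit of every sequence (i.e.\ $\phi^{\ast}(\lambda) = 0$) or flips it ($\phi^{\ast}(\lambda) = 1$), by Lemma \ref{lem:phi-represent}(i). In the non-flipping case, $\phi(0\pmb{b}) = 0\,\quot{\phi}{0}(\pmb{b})$ and $\phi(1\pmb{b}) = 1\,\quot{\phi}{1}(\pmb{b})$ with $\quot{\phi}{0}, \quot{\phi}{1} \in \Phi_{k-1}$ by Lemma \ref{lem:phi-quot}; moreover these two ``residual'' maps can be chosen independently, so $\phi(A) = 0\,\psi_0(A_0) \sqcup 1\,\psi_1(A_1)$ for arbitrary $\psi_0, \psi_1 \in \Phi_{k-1}$. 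In the flipping case, $\phi(0\pmb{b}) = 1\,\quot{\phi}{0}(\pmb{b})$ and $\phi(1\pmb{b}) = 0\,\quot{\phi}{1}(\pmb{b})$, so $\phi$ swaps the two halves while also transforming each: $\phi(A) = 0\,\psi_1(A_1) \sqcup 1\,\psi_0(A_0)$. Consequently, $A \sim_k B$ if and only if either $\{\langle A_0\rangle, \langle A_1\rangle\}$ ordered equals $(\langle B_0\rangle, \langle B_1\rangle)$ (non-flip) or $(\langle B_1\rangle, \langle B_0\rangle)$ (flip) — that is, if and only if the \emph{unordered pair} $\{\langle A_0 \rangle, \langle A_1 \rangle\}$ equals $\{\langle B_0 \rangle, \langle B_1 \rangle\}$ in $\mathscr{C}^{k-1}/{\sim_{k-1}}$. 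This sets up a bijection between $\mathscr{C}^k / {\sim_k}$ and the set of unordered pairs (with repetition allowed) of elements of $\mathscr{C}^{k-1}/{\sim_{k-1}}$, whose cardinality is $\binom{a_{k-1}+1}{2} = a_{k-1}(a_{k-1}+1)/2$.

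To make this rigorous I would: (1) define the map $\Theta \colon \mathscr{C}^k \to \{\text{multisets of size }2\text{ over }\mathscr{C}^{k-1}\}$ by $A \mapsto \{\!\{A_0, A_1\}\!\}$ and check it descends to a well-defined map $\mathscr{C}^k/{\sim_k} \to (\mathscr{C}^{k-1}/{\sim_{k-1}})$-multisets-of-size-2 using the two displayed decompositions above; (2) show surjectivity by building, from any pair $(C, D) \in (\mathscr{C}^{k-1})^2$, the set $A \coloneqq 0C \cup 1D$ with $\Theta(A) = \{\!\{C, D\}\!\}$; (3) show injectivity: if $\Theta(A)$ and $\Theta(B)$ have the same image classes, pick $\psi_0, \psi_1 \in \Phi_{k-1}$ witnessing $\psi_0(A_0) = B_0, \psi_1(A_1) = B_1$ (or the swapped version) and assemble them via Lemma \ref{lem:phi-represent}(i) into a single $\phi \in \Phi_k$ with $\phi(A) = B$ — here one must verify the assembled $\phi$ genuinely lies in $\Phi_k$, i.e.\ satisfies conditions (a)--(c), which follows because $\phi^{\ast}$ is being prescribed consistently on $\mathcal{C}^{\leq k-1}$ and forced to $0$ on $\mathcal{C}^{\geq k}$. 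Finally, count: the number of size-2 multisets over an $a_{k-1}$-element set is $a_{k-1} + \binom{a_{k-1}}{2} = a_{k-1}(a_{k-1}+1)/2$.

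The main obstacle I anticipate is step (3), specifically the bookkeeping needed to glue $\psi_0$ and $\psi_1$ (possibly after a half-swap) into a legitimate element of $\Phi_k$ and to verify the resulting map sends $A$ to $B$; this requires care with the recursive structure of $\phi^{\ast}$ (Definition \ref{def:phi-star}) and with the fact that the ``swap'' case composes a bit-flip on the leading coordinate with the action of $\psi_0, \psi_1$ on the tails. A secondary subtlety is handling $A = \emptyset$ and the degenerate sub-cases where $A_0$ or $A_1$ is empty, but these are absorbed cleanly since $\emptyset \in \mathscr{C}^{k-1}$ and $\langle\emptyset\rangle$ is a single class by the inductive hypothesis. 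Everything else — the base case, well-definedness, surjectivity, and the final binomial count — should be routine.
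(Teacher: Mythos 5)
Your proposal is correct and follows essentially the same route as the paper: the paper's proof rests on a lemma (Lemma~\ref{lem:obrit-num}) that establishes exactly your characterization $A \sim_k B \iff \{\langle A_0\rangle,\langle A_1\rangle\} = \{\langle B_0\rangle,\langle B_1\rangle\}$ as multisets, proved via the same first-bit decomposition and by gluing two $\Phi_{k-1}$-maps (with an optional leading-bit flip) into a $\Phi_k$-map, and then counts size-two multisets. The technical step you flag as the main obstacle (verifying the assembled $\phi$ satisfies Definition~\ref{def:Phi}(a)--(c)) is precisely where the paper spends its effort in the sufficiency direction of that lemma.
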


Table \ref{tab:orbit-num} shows the upper bound $|\mathscr{C}^k| = 2^{(2^k)}$ given by Lemma \ref{thm:differ} and 
the number $a_k$ of the equivalence classes for $0 \leq k \leq 7$.
The order of growth is given as $a_k \approx 2 \cdot c^{(2^k)}$ for sufficiently large $k$, where $c \approx 1.345768$ \cite{OEIS}.

  \begin{table*}
   \caption{The upper bound  $|\mathscr{C}^k| = 2^{(2^k)}$ and the number of equivalence classes $a_k$ for $0 \leq k \leq 7$ }
  \label{tab:orbit-num}
  \centering
\begin{tabular}{c|rrrrrrrr}
 \multicolumn{1}{c|}{$k$} & \multicolumn{1}{c}{0} & \multicolumn{1}{c}{1} & \multicolumn{1}{c}{2} & \multicolumn{1}{c}{3} & \multicolumn{1}{c}{4} & \multicolumn{1}{c}{5} & \multicolumn{1}{c}{6} & \multicolumn{1}{c}{7}  \\
\hline
$|\mathscr{C}^k|$ & 2 & 4 & 16 & 256 & 65536 & 4294967296 & 18446744073709551616 & 340282366920938463463374607431768211456\\
$a_k$ & 2 & 3 & 6 & 21 & 231 & 26796 & 359026206  & 64449908476890321
 \end{tabular}
  \end{table*}

\subsection{Reduced-Code-Tuples}
\label{subsec:semi}

Hereinafter, we fix the length $k$ of the allowed decoding delay,
and we consider only code-tuples with at most $|\mathscr{C}^k| = 2^{(2^k)}$ code tables because of Lemma \ref{thm:differ} unless otherwise specified.
Also, let the domain $[F]$ of a code-tuple $F$ be a subset of $\mathscr{C}^k$ instead of $\{0, 1, 2, \ldots, m-1\}$ for an integer $m \geq 1$;
that is, each code table is subscripted by a set $A \in \mathscr{C}^k$.
Namely, we suppose that a code-tuple is determined by the following three components (i)--(iii): 
\begin{enumerate}[(i)]
\item the domain $\mathcal{I} \subseteq \mathscr{C}^k$,
\item $|\mathcal{I}|$ mappings $f_{A} \colon \mathcal{S} \to \mathcal{C}^{\ast}$ for $A \in \mathcal{I}$,
\item $|\mathcal{I}|$ mappings $\trans_{A} \colon \mathcal{S} \to \mathcal{I}$ for $A \in \mathcal{I}$.
\end{enumerate}
Then let $[F]$ represent the domain $\mathcal{I}$ of $F$, and let $|F|$ denote the cardinality of $[F]$ (i.e., the number of code tables).
Note that this change of indices is not essential since we merely replace the identifier of each code table from non-negative integers to sets $\mathcal{A} \in \mathscr{C}^k$.
In particular, we do not require any relationship between the mappings $f_A$, $\trans_A$ and the index $A$:
for example, there is no requirement that $\PREF^k_{F, A} = A$ holds between the mappings $f_A, \trans_A$ and the index $A$.

\begin{example}
Table \ref{tab:code-tuple-kai} shows an example of a code-tuple with the domain $[F] \subseteq \mathscr{C}^k$,
where $k = 2$, $[F] = \{\emptyset, \{00, 11\}, \{01, 10, 11\}\} \subseteq \mathscr{C}^k$ is chosen.
Note that the code-tuple in Table \ref{tab:code-tuple-kai} is obtained from one in Table \ref{tab:code-tuple} by merely replacing the indices $0, 1, 2$ with $\{01, 10, 11\}, \emptyset, \{00, 11\}$, respectively.
\end{example}

\begin{table}
\caption{An example $F = (f, \trans)$ of a code-tuple with the domain in $\mathscr{C}^k$}
\label{tab:code-tuple-kai}
\centering
\begin{tabular}{c | lclclc}
\hline
$s \in \mathcal{S}$ & $f_{\{01, 10, 11\}}$ & $\trans_{\{01, 10, 11\}}$ & $f_{\emptyset}$ & $\trans_{\emptyset}$ & $f_{\{00, 11\}}$ & $\trans_{\{00, 11\}}$\\
\hline
a & 01 & $\{01, 10, 11\}$ & 00 & $\emptyset$ & 1100 & $\emptyset$\\
b & 10 & $\emptyset$ & $\lambda$ & $\{01, 10, 11\}$ & 1110 & $\{00, 11\}$\\
c & 0100 & $\{01, 10, 11\}$ & 00111 & $\emptyset$ & 111000 & $\{00, 11\}$\\
d & 01 & $\{00, 11\}$ & 00111 & $\{00, 11\}$ & 110 & $\{00, 11\}$\\
\hline
\end{tabular}
\end{table}

Further, we choose and fix a representative from each equivalence class of $\mathscr{C}^k / {\sim_k}$ and define a set $\tilde{\mathscr{C}}^k$ as the set of all the representatives.
For $A \in \mathscr{C}^k$, let $\bar{A}$ denote the representative of $\langle A \rangle$.
Notice that 
\begin{equation}
\label{eq:9k4gm8wmk5qb}
|A| = |\bar{A}|
\end{equation}
by (\ref{eq:mpd5l92t1cji}).

While code tables of a code-tuple are associated elements of $\mathscr{C}^k$,  
a \emph{reduced code-tuple} is a tuple of code tables associated with elements of $\tilde{\mathscr{C}}^k$.

\begin{definition}
\label{def:semi}
A \emph{reduced code-tuple (RCT)} $\tilde{F}=(\mathcal{I}, (\tilde{f}_A)_{A \in \mathcal{I}}, (\tilde{\trans}_A)_{A \in \mathcal{I}})$ is a tuple of the following three components (i)--(iii):
\begin{enumerate}[(i)]
\item the domain $\mathcal{I} \subseteq \tilde{\mathscr{C}^k}$;
\item $|\mathcal{I}|$ mappings $\tilde{f}_{A} \colon \mathcal{S} \to \mathcal{C}^{\ast}$ for $A \in \mathcal{I}$;
\item $|\mathcal{I}|$ mappings $\tilde{\trans}_{A} \colon \mathcal{S} \to \langle \mathcal{I} \rangle$ for $A \in \mathcal{I}$.
\end{enumerate}
We define $\tilde{\mathscr{F}}$ as the set of all RCTs.
\end{definition}

For $\tilde{F}=(\mathcal{I}, (\tilde{f}_A)_{A \in \mathcal{I}}, (\tilde{\trans}_A)_{A \in \mathcal{I}}) \in \tilde{\mathscr{F}}$, let $[\tilde{F}]$ represent the domain $\mathcal{I}$ (i.e., $[\tilde{F}] \coloneqq \mathcal{I}$), and let $|\tilde{F}|$ denote the cardinality of $[\tilde{F}]$.
We also write as $\tilde{F}=(\tilde{f}, \tilde{\trans})$ or $\tilde{F}$ more simply.

\begin{example}
Table \ref{tab:reduced} shows an example of an RCT $\tilde{F} = (\tilde{f}, \tilde{\trans})$ with domain $[\tilde{F}] = \{\{00\},  \{00, 10\}, \allowbreak \{00, 01, 10, 11\}\} \subseteq \tilde{\mathscr{C}}^k$, where $k = 2$ and $\tilde{\mathscr{C}}^k = \{ \emptyset, \{00\}, \{00, 01\}, \{00, 10\}, \{00, 01, 10\}, \allowbreak \{00, 01, 10, 11\}\}$.
\end{example}

\begin{table}
\caption{An example of an RCT}
\label{tab:reduced}
\centering
\begin{tabular}{c | lclclc}
\hline
$s \in \mathcal{S}$ & $\tilde{f}_{\{00\}}$ & $\tilde{\trans}_{\{00\}}$ & $\tilde{f}_{\{00, 10\}}$ & $\tilde{\trans}_{\{00, 10\}}$ & $\tilde{f}_{\{00, 01, 10, 11\}}$ & $\tilde{\trans}_{\{00, 01, 10, 11\}}$\\
\hline
a & 001 & $\{01, 10\}$ & 1 & $\{01\}$ & 1 & $\{01, 10\}$\\
b & 000 & \{00\} & 1001 & $\{00, 01, 10, 11\}$ & 1 & $\{11\}$\\
c & 00 & \{01\} & $\lambda$ & $\{00\}$ & 100 & $\{00, 01, 10, 11\}$\\
d & 001 & $\{00, 11\}$ & 1000 & $\{00, 01, 10, 11\}$ & 0 & $\{00, 01, 10, 11\}$\\
\hline
\end{tabular}
\end{table}

\begin{remark}
\label{rem:semi}
The only difference between the definitions of code-tuple and RCT is the target set of the mappings $\tilde{\trans}_A$ in Definition \ref{def:semi} (iii).
Namely, the mappings $\trans_A$ of a code-tuple $F = (f, \trans)$ are allowed to map $s \in \mathcal{S}$ only into the domain $[F]$ of $F$,
whereas the mappings $\tilde{\trans}_A$ of an RCT $\tilde{F} = (\tilde{f}, \tilde{\trans})$ can map $s \in \mathcal{S}$ into $\langle [F] \rangle$ not necessarily in $[F]$ of $\tilde{F}$.
Therefore, in general, an RCT is not a code-tuple and cannot be used for coding as is.
\end{remark}

As counterparts of $\PREF^k_{F, A}$ and $\bar{\PREF}^k_{F, A}(\pmb{b})$,
we define $\PREF^k_{\tilde{F}, A}$ and $\bar{\PREF}^k_{\tilde{F}, A}(\pmb{b})$ as follows.

\begin{definition}
\label{def:pref-tilde}
For $\tilde{F} \in \tilde{\mathscr{F}}$ and $A \in [\tilde{F}]$, we introduce the following definitions (i) and (ii).
\begin{enumerate}[(i)]
\item We define $\PREF^k_{\tilde{F}, A} \in \mathscr{C}^k$ as
\begin{equation}
\label{eq:x425gfz5autj}
\PREF^k_{\tilde{F}, A} \coloneqq \bigcup_{s \in \mathcal{S}} \left[ \tilde{f}_{A}(s) \tilde{\trans}_{A}(s) \right]_k.
\end{equation}
\item For $\pmb{b} \in \mathcal{C}^{\ast}$, we define $\bar{\PREF}^k_{\tilde{F}, A}(\pmb{b}) \in \mathscr{C}^k$ as
\begin{equation}
\label{eq:hh4v6kmbk61n}
\bar{\PREF}^k_{\tilde{F}, A}(\pmb{b}) \coloneqq \bigcup_{\substack{s \in \mathcal{S},\\ \tilde{f}_{A}(s) \succ \pmb{b}}} \left[ \pmb{b}^{-1}\tilde{f}_{A}(s) \tilde{\trans}_{A}(s) \right]_k.
\end{equation}
\end{enumerate}
\end{definition}
Note that $\tilde{\trans}(s) \in \langle[\tilde{F}]\rangle$ in (\ref{eq:x425gfz5autj}) and (\ref{eq:hh4v6kmbk61n}) is an element of $\mathscr{C}^k$ (i.e., a subset of $\mathcal{C}^k$), and thus $\left[ \tilde{f}_{A}(s) \tilde{\trans}_{A}(s) \right]_k$ and $ \left[ \pmb{b}^{-1}\tilde{f}_{A}(s) \tilde{\trans}_{A}(s) \right]_k$ are also elements of $\mathscr{C}^k$.

\begin{table}
\caption{The sets $\PREF^2_{\tilde{F}, A}, \bar{\PREF}^2_{\tilde{F}, A}(\tilde{f}_A(\mathrm{a})), \bar{\PREF}^2_{\tilde{F}, A}(\tilde{f}_A(\mathrm{b})), \bar{\PREF}^2_{\tilde{F}, A}(\tilde{f}_A(\mathrm{c})), \bar{\PREF}^2_{\tilde{F}, A}(\tilde{f}_A(\mathrm{d}))$ of $\tilde{F}$ in Table \ref{tab:reduced} for $A \in [\tilde{F}]$}
\centering
\label{tab:reduced-pref}
\begin{tabular}{c | ccc}
\hline
$A \in [\tilde{F}]$& $\{00\}$ & $\{00, 10\}$ & $\{00, 01, 10, 11\}$\\
\hline
$\PREF^2_{\tilde{F}, A}$ & $\{00\}$ & $\{00, 10\}$ & $\{00, 01, 10, 11\}$ \\
$\bar{\PREF}^2_{\tilde{F}, A}(\tilde{f}_A(\mathrm{a}))$ & $\emptyset$ & $\{00\}$ & $\{00\}$\\
$\bar{\PREF}^2_{\tilde{F}, A}(\tilde{f}_A(\mathrm{b}))$ & $\emptyset$ & $\emptyset$ & $\{00\}$ \\
$\bar{\PREF}^2_{\tilde{F}, A}(\tilde{f}_A(\mathrm{c}))$ & $\{00, 10, 11\}$ & $\{10\}$ & $\emptyset$ \\
$\bar{\PREF}^2_{\tilde{F}, A}(\tilde{f}_A(\mathrm{d}))$ & $\emptyset$ & $\emptyset$ & $\emptyset$ \\
\hline
\end{tabular}
\end{table}

\begin{example}
\label{ex:reduced-pref}
Table \ref{tab:reduced-pref} shows $\PREF^2_{\tilde{F}, A}$ and $\bar{\PREF}^2_{\tilde{F}, A}(\tilde{f}_A(s))$ of $\tilde{F}$ in Table \ref{tab:reduced} for $A \in [F]$ and $s \in \mathcal{S}$.
For example, $\PREF^2_{\tilde{F}, \{00, 10\}}$ is obtained as
\begin{eqnarray*}
\lefteqn{\PREF^2_{\tilde{F}, \{00, 10\}}}\\
&=& \bigcup_{s \in \mathcal{S}} \left[ \tilde{f}_{\{00, 10\}}(s) \tilde{\trans}_{\{00, 10\}}(s) \right]_2 \\
&=& \left[ \tilde{f}_{\{00, 10\}}(\mathrm{a}) \tilde{\trans}_{\{00, 10\}}(\mathrm{a}) \right]_2 \cup \left[ \tilde{f}_{\{00, 10\}}(\mathrm{b}) \tilde{\trans}_{\{00, 10\}}(\mathrm{b}) \right]_2 \cup \left[ \tilde{f}_{\{00, 10\}}(\mathrm{c}) \tilde{\trans}_{\{00, 10\}}(\mathrm{c}) \right]_2 \cup \left[ \tilde{f}_{\{00, 10\}}(\mathrm{d}) \tilde{\trans}_{\{00, 10\}}(\mathrm{d}) \right]_2\\
&=& \left[ 1 \{01\} \right]_2 \cup \left[ 1001 \{00, 01, 10, 11\} \right]_2 \cup \left[ \lambda \{00\} \right]_2 \cup \left[ 1000 \{00, 01, 10, 11\} \right]_2\\
&=& \left[\{101\} \right]_2 \cup \left[\{100100, 100101, 100110, 100111\} \right]_2 \cup \left[\{00\} \right]_2 \cup \left[\{100000, 100001, 100010, 100011\} \right]_2\\
&=& \{10\} \cup \{10\} \cup \{00\} \cup \{10\}\\
&=& \{00, 10\}.
\end{eqnarray*}
Also, $\bar{\PREF}^2_{\tilde{F}, \{00\}}(\tilde{f}_{\{00\}}(\mathrm{c}))$ is obtained as
\begin{eqnarray*}
\lefteqn{\bar{\PREF}^2_{\tilde{F}, \{00\}}(\tilde{f}_{\{00\}}(\mathrm{c}))}\\
&=& \bar{\PREF}^2_{\tilde{F}, \{00\}}(00)\\
&=& \bigcup_{\substack{s \in \mathcal{S},\\ \tilde{f}_{\{00\}}(s) \succ 00}} \left[ (00)^{-1}\tilde{f}_{\{00\}}(s) \tilde{\trans}_{\{00\}}(s) \right]_2\\
&=& \bigcup_{s \in \{\mathrm{a, b, d}\}} \left[ (00)^{-1}\tilde{f}_{\{00\}}(s) \tilde{\trans}_{\{00\}}(s) \right]_2\\
&=& \left[ (00)^{-1}\tilde{f}_{\{00\}}(\mathrm{a}) \tilde{\trans}_{\{00\}}(\mathrm{a}) \right]_2 \cup 
\left[ (00)^{-1}\tilde{f}_{\{00\}}(\mathrm{b}) \tilde{\trans}_{\{00\}}(\mathrm{b}) \right]_2 \cup \left[ (00)^{-1}\tilde{f}_{\{00\}}(\mathrm{d}) \tilde{\trans}_{\{00\}}(\mathrm{d}) \right]_2\\
&=& \left[ (00)^{-1} 001 \{01, 10\} \right]_2 \cup \left[ (00)^{-1} 000 \{00\} \right]_2 \cup \left[ (00)^{-1} 001 \{00, 11\} \right]_2\\
&=& \left[\{101, 110\}\right]_2 \cup \left[\{000\}\right]_2 \cup \left[\{100, 111\}\right]_2\\
&=& \{10, 11\} \cup \{00\} \cup \{10, 11\}\\
&=& \{00, 10, 11\}.
\end{eqnarray*}
\end{example}

As mentioned in Remark \ref{rem:semi}, an RCT $\tilde{F}=(\tilde{f}, \tilde{\trans})$ cannot be used directly 
for coding since $\tilde{\trans}_A(s)$ is not necessarily in $[\tilde{F}]$.
A natural way to recast an RCT as a code-tuple is to replace each $\tilde{\trans}_A(s)$ with $\overline{\tilde{\trans}_A(s)}$,
so that it is included in $[\tilde{F}]$.
This idea yields the following definition of the \emph{direct realization} of $\tilde{F}$.

\begin{definition}
\label{def:direct}
For $\tilde{F}=(\tilde{f}, \tilde{\trans}) \in \tilde{\mathscr{F}}$, the \emph{direct realization} of $\tilde{F}$ is the code-tuple $F=(f, \trans)$
with the domain $[F] \coloneqq [\tilde{F}]$ defined by 
\begin{eqnarray}
f_{A}(s) &\coloneqq& \tilde{f}_{A}(s), \label{eq:z3m658av3wl7}\\
\trans_{A}(s) &\coloneqq& \overline{\tilde{\trans}_{A}(s)} \label{eq:qwshe4o3q759}
\end{eqnarray}
for $A \in [F] = [\tilde{F}]$ and $s \in \mathcal{S}$.
Note that since $\tilde{\trans}_A(s) \in \langle [\tilde{F}] \rangle$ by Definition \ref{def:semi} (iii),
it holds that $\trans_A(s) = \overline{\tilde{\trans}_A(s)} \in [\tilde{F}]$ in (\ref{eq:qwshe4o3q759});
and thus Definition \ref{def:direct} defines a code-tuple indeed.
\end{definition}

\begin{example}
Table \ref{tab:direct} shows the direct realization of $\tilde{F}$ in Table \ref{tab:reduced}.
\end{example}

\begin{table}
\caption{The direct realization of $\tilde{F}$}
\label{tab:direct}
\centering
\begin{tabular}{c | lclclc}
\hline
$s \in \mathcal{S}$ & $f_{\{00\}}$ & $\trans_{\{00\}}$ & $f_{\{00, 10\}}$ & $\trans_{\{00, 10\}}$ & $f_{\{00, 01, 10, 11\}}$ & $\trans_{\{00, 01, 10, 11\}}$\\
\hline
a & 001 & $\{00, 10\}$ & 1 & $\{00\}$ & 1 & $\{00, 10\}$\\
b & 000 & \{00\} & 1001 & $\{00, 01, 10, 11\}$ & 1 & $\{00\}$\\
c & 00 & \{00\} & $\lambda$ & $\{00\}$ & 100 & $\{00, 01, 10, 11\}$\\
d & 001 & $\{00, 10\}$ & 1000 & $\{00, 01, 10, 11\}$ & 0 & $\{00, 01, 10, 11\}$\\
\hline
\end{tabular}
\end{table}

As counterparts of $\mathscr{F}_{\reg}, \mathscr{F}_{\ext}, \mathscr{F}_{k\hdec},$ and $\mathscr{F}_{k\hopt}$,
we define $\tilde{\mathscr{F}}_{\comp}, \tilde{\mathscr{F}}_{\reg}, \tilde{\mathscr{F}}_{\ext}, \allowbreak \tilde{\mathscr{F}}_{k\hdec},$ and $\tilde{\mathscr{F}}_{k\hopt}$ of RCTs as follows.

\begin{definition}
\label{def:semi-class}
We introduce the following definitions (i)--(v).
\begin{enumerate}[(i)]
\item An RCT $\tilde{F}$ is said to be \emph{compliant} if 
for any $A \in [\tilde{F}]$, it holds that $\PREF^k_{\tilde{F}, A} =  A$.
We define $\tilde{\mathscr{F}}_{\comp}$ as the set of all compliant RCTs.

\item An RCT $\tilde{F}$ is said to be \emph{regular} if the direct realization of $\tilde{F}$ is regular.
We define $\tilde{\mathscr{F}}_{\reg}$ as the set of all regular RCTs.

For $\tilde{F} \in \tilde{\mathscr{F}}_{\reg}$, we define the \emph{average codeword length $\tilde{L}(\tilde{F})$ of $\tilde{F}$} as
the average codeword length of the direct realization of $\tilde{F}$,
that is, $\tilde{L}(\tilde{F}) \coloneqq L(F)$, where $F$ is the direct realization of $\tilde{F}$.
\item An RCT $\tilde{F}$ is said to be \emph{extendable} if $[\tilde{F}] \not\owns \emptyset$.
We define $\tilde{\mathscr{F}}_{\ext}$ as the set of all extendable RCTs.
\item An RCT $\tilde{F}$ is said to be \emph{$k$-bit delay decodable} if the following conditions (a) and (b) hold.
\begin{enumerate}[(a)]
\item For any $A \in [\tilde{F}]$ and $s \in \mathcal{S}$, it holds that
$\tilde{\trans}_ A(s) \cap \bar{\PREF}^k_{\tilde{F}, A}(\tilde{f}_{A}(s)) = \emptyset$.
\item For any $A \in [\tilde{F}]$ and $s, s' \in \mathcal{S}$, if $s \neq s'$ and $\tilde{f}_{A}(s) = \tilde{f}_{A}(s')$,
then $\tilde{\trans}_ A(s) \cap \tilde{\trans}_ A(s')  = \emptyset$.
\end{enumerate}
We define $\tilde{\mathscr{F}}_{k\hdec}$ as the set of all $k$-bit delay decodable RCTs.
\item An RCT $\tilde{F}$ is said to be \emph{$k$-bit delay optimal} if for any $\tilde{F}' \in \tilde{\mathscr{F}}_{\comp} \cap \tilde{\mathscr{F}}_{\reg} \cap \tilde{\mathscr{F}}_{\ext} \cap \tilde{\mathscr{F}}_{k\hdec}$, it holds that $\tilde{L}(\tilde{F}) \leq \tilde{L}(\tilde{F}')$.
We define $\tilde{\mathscr{F}}_{k\hopt}$ as the set of all $k$-bit delay optimal RCTs.
\end{enumerate}
\end{definition}

\begin{remark}
The definitions of $\tilde{\mathscr{F}}, \tilde{\mathscr{F}}_{\comp}, \tilde{\mathscr{F}}_{\reg}, \tilde{\mathscr{F}}_{\ext}, \allowbreak \tilde{\mathscr{F}}_{k\hdec},$ and $\tilde{\mathscr{F}}_{k\hopt}$ depend on $k$, and we are discussing on a fixed $k$.
\end{remark}

\begin{example}
We confirm that $\tilde{F} = (\tilde{f}, \tilde{\trans})$ in Table \ref{tab:reduced} satisfies $\tilde{F} \in \tilde{\mathscr{F}}_{\comp} \cap
\tilde{\mathscr{F}}_{\reg} \cap \tilde{\mathscr{F}}_{\ext} \cap \tilde{\mathscr{F}}_{k\hdec}$ as follows.

\begin{itemize}
\item We have $\tilde{F} \in \tilde{\mathscr{F}}_{\comp}$ because
\begin{equation*}
\PREF^2_{\tilde{F}, \{00\}} = \{00\}, \quad \PREF^2_{\tilde{F}, \{00, 10\}} = \{00, 10\}, \quad \PREF^2_{\tilde{F}, \{00, 01, 10, 11\}} = \{00, 01, 10, 11\}
\end{equation*}
as shown in Table \ref{tab:reduced-pref}.

\item We have $\tilde{F} \in \tilde{\mathscr{F}}_{\reg}$ because the direct realization $F$ of $\tilde{F}$, shown in Table \ref{tab:direct}, is regular by $\kernel_F = \{\{00\}, \{00, 10\}, \allowbreak \{00, 01, 10, 11\}\} \neq \emptyset$.

\item We have $\tilde{F} \in \tilde{\mathscr{F}}_{\ext}$ because $[\tilde{F}] = \{\{00\}, \{00, 10\}, \{00, 01, 10, 11\}\} \not \owns \emptyset$.

\item We have $\tilde{F} \in \tilde{\mathscr{F}}_{k\hdec}$ by Table \ref{tab:reduced-pref} because
Definition \ref{def:k-bitdelay} (a) is satisfied as
\begin{itemize}
\item $\tilde{\trans}_{\{00\}}(\mathrm{a}) \cap \bar{\PREF}^2_{\tilde{F}, \{00\}}(\tilde{f}_{\{00\}}(\mathrm{a})) = \{01, 10\} \cap \emptyset = \emptyset$,
\item $\tilde{\trans}_{\{00\}}(\mathrm{b}) \cap \bar{\PREF}^2_{\tilde{F}, \{00\}}(\tilde{f}_{\{00\}}(\mathrm{b})) = \{00\} \cap \emptyset = \emptyset$,
\item $\tilde{\trans}_{\{00\}}(\mathrm{c}) \cap \bar{\PREF}^2_{\tilde{F}, \{00\}}(\tilde{f}_{\{00\}}(\mathrm{c})) = \{01\} \cap \{00, 10, 11\} = \emptyset$,
\item $\tilde{\trans}_{\{00\}}(\mathrm{d}) \cap \bar{\PREF}^2_{\tilde{F}, \{00\}}(\tilde{f}_{\{00\}}(\mathrm{d})) = \{00, 11\} \cap \emptyset = \emptyset$,
\item $\tilde{\trans}_{\{00, 10\}}(\mathrm{a}) \cap \bar{\PREF}^2_{\tilde{F}, \{00, 10\}}(\tilde{f}_{\{00, 10\}}(\mathrm{a})) = \{01\} \cap \{00\} = \emptyset$,
\item $\tilde{\trans}_{\{00, 10\}}(\mathrm{b}) \cap \bar{\PREF}^2_{\tilde{F}, \{00, 10\}}(\tilde{f}_{\{00, 10\}}(\mathrm{b})) = \{00, 01, 10, 11\} \cap \emptyset = \emptyset$,
\item $\tilde{\trans}_{\{00, 10\}}(\mathrm{c}) \cap \bar{\PREF}^2_{\tilde{F}, \{00, 10\}}(\tilde{f}_{\{00, 10\}}(\mathrm{c})) = \{00\} \cap \{10\} = \emptyset$,
\item $\tilde{\trans}_{\{00, 10\}}(\mathrm{d}) \cap \bar{\PREF}^2_{\tilde{F}, \{00, 10\}}(\tilde{f}_{\{00, 10\}}(\mathrm{d})) = \{00, 01, 10, 11\} \cap \emptyset = \emptyset$,
\item $\tilde{\trans}_{\{00,  01, 10, 11\}}(\mathrm{a}) \cap \bar{\PREF}^2_{\tilde{F}, \{00,  01, 10, 11\}}(\tilde{f}_{\{00, 01, 10, 11\}}(\mathrm{a})) = \{01, 10\} \cap \{00\} = \emptyset$,
\item $\tilde{\trans}_{\{00,  01, 10, 11\}}(\mathrm{b}) \cap \bar{\PREF}^2_{\tilde{F}, \{00,  01, 10, 11\}}(\tilde{f}_{\{00, 01, 10, 11\}}(\mathrm{b})) = \{11\} \cap \{00\} = \emptyset$,
\item $\tilde{\trans}_{\{00,  01, 10, 11\}}(\mathrm{c}) \cap \bar{\PREF}^2_{\tilde{F}, \{00,  01, 10, 11\}}(\tilde{f}_{\{00, 01, 10, 11\}}(\mathrm{c})) = \{00, 01, 10, 11\} \cap \emptyset = \emptyset$,
\item $\tilde{\trans}_{\{00,  01, 10, 11\}}(\mathrm{d}) \cap \bar{\PREF}^2_{\tilde{F}, \{00,  01, 10, 11\}}(\tilde{f}_{\{00, 01, 10, 11\}}(\mathrm{d})) = \{00, 01, 10, 11\} \cap \emptyset = \emptyset$,
\end{itemize}
and Definition \ref{def:k-bitdelay} (b) is satisfied as
\begin{itemize}
\item $\tilde{\trans}_{\{00\}}(\mathrm{a}) \cap \tilde{\trans}_{\{00\}}(\mathrm{d}) = \{01, 10\} \cap \{00, 11\} = \emptyset$,
\item $\tilde{\trans}_{\{00, 01, 10, 11\}}(\mathrm{a}) \cap \tilde{\trans}_{\{00, 01, 10, 11\}}(\mathrm{b}) = \{01, 10\} \cap \{11\} = \emptyset$.
\end{itemize}
\end{itemize}
\end{example}

The following theorem is one of our main results.

\begin{theorem}
\label{thm:equiv}
The following two statements (i) and (ii) hold.
\begin{enumerate}[(i)]
\item For any $F \in \mathscr{F}_{\reg} \cap \mathscr{F}_{\ext} \cap \mathscr{F}_{k\hdec}$, there exists $\tilde{F} \in \tilde{\mathscr{F}}_{\comp} \cap \tilde{\mathscr{F}}_{\reg} \cap \tilde{\mathscr{F}}_{\ext} \cap \tilde{\mathscr{F}}_{k\hdec}$ such that $\tilde{L}(\tilde{F}) \leq L(F)$.
\item For any $\tilde{F} \in \tilde{\mathscr{F}}_{\comp} \cap \tilde{\mathscr{F}}_{\reg} \cap \tilde{\mathscr{F}}_{\ext} \cap \tilde{\mathscr{F}}_{k\hdec}$, there exists $F \in \mathscr{F}_{\reg} \cap \mathscr{F}_{\ext} \cap \mathscr{F}_{k\hdec}$ such that $L(F) = \tilde{L}(\tilde{F})$.
\end{enumerate}
\end{theorem}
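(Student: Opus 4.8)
The plan is to prove the two directions separately, since they are essentially converse constructions relating a code-tuple $F$ (indexed by elements of $\mathscr{C}^k$, after the re-indexing convention) and an RCT $\tilde F$ (indexed by representatives in $\tilde{\mathscr{C}}^k$). The key technical device in both directions is to use a mapping $\phi \in \Phi_k$ to ``symmetrize'' a code table: if $A \in [F]$ has $\PREF^k_{F,A}$ lying in the equivalence class of $\bar A$, pick $\phi \in \Phi_k$ with $\phi(\PREF^k_{F,A}) = \bar A$ and define the corresponding code table of $\tilde F$ by composing $f_A$ with $\phi$ (applied to the codeword prefixes) — this is exactly where Lemma~\ref{lem:phi-quot}(i), $\phi(\pmb b \pmb b') = \phi(\pmb b)\,\quot{\phi}{\pmb b}(\pmb b')$, and Lemma~\ref{lem:phi-quot}(ii), $\quot{\phi}{\pmb d} \in \Phi_k$, do the work of making everything consistent with concatenation.

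For statement~(i): given $F \in \mathscr{F}_{\reg} \cap \mathscr{F}_{\ext} \cap \mathscr{F}_{k\hdec}$, first apply Lemma~\ref{thm:differ} to replace $F$ by $F^{\dagger} \in \mathscr{F}_{\irr} \cap \mathscr{F}_{\ext} \cap \mathscr{F}_{k\hdec}$ with $L(F^{\dagger}) \le L(F)$ and $|\prefset^k_{F^{\dagger}}| = |F^{\dagger}|$, so that the map $A \mapsto \PREF^k_{F^{\dagger},A}$ is injective on $[F^{\dagger}]$; re-index so that $[F^{\dagger}] = \{\PREF^k_{F^{\dagger},i}\}$ as subsets of $\mathcal{C}^k$. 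For each index $A$ choose $\phi_A \in \Phi_k$ with $\phi_A(A) = \bar A$. Define $[\tilde F] \coloneqq \{\bar A : A \in [F^{\dagger}]\} \subseteq \tilde{\mathscr{C}}^k$, set $\tilde f_{\bar A}(s) \coloneqq \phi_A(f^{\dagger}_A(s))$ (using only that $\phi_A$ preserves length, so $L_i$ is unchanged), and set $\tilde\trans_{\bar A}(s) \coloneqq \quot{\phi_A}{f^{\dagger}_A(s)}\bigl(\PREF^k_{F^{\dagger},\trans^{\dagger}_A(s)}\bigr)$, an element of $\mathscr{C}^k$ in the class $\langle \overline{\trans^{\dagger}_A(s)}\rangle$; one must check this class lies in $\langle[\tilde F]\rangle$, which holds because $\trans^{\dagger}_A(s) \in [F^{\dagger}]$. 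The crux is then verifying $\PREF^k_{\tilde F,\bar A} = \bar A$ (compliance): expand $\PREF^k_{\tilde F,\bar A} = \bigcup_s [\tilde f_{\bar A}(s)\,\tilde\trans_{\bar A}(s)]_k$, rewrite each term using Lemma~\ref{lem:phi-quot}(i) as $\phi_A\bigl([f^{\dagger}_A(s)\,\PREF^k_{F^{\dagger},\trans^{\dagger}_A(s)}]_k\bigr)$ via~\eqref{eq:s9qdkaq7m7ap}, pull the union out past $\phi_A$, and apply Lemma~\ref{lem:pref}(i) to $F^{\dagger}$ to get $\phi_A(\PREF^k_{F^{\dagger},A}) = \phi_A(A) = \bar A$. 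The $k$-bit delay decodability of $\tilde F$ and its regularity/extendability follow because the direct realization of $\tilde F$ is, up to the bit-flips induced by the $\phi_A$, ``isomorphic'' to $F^{\dagger}$ — here I expect to need a lemma (likely proved in an appendix) that $\bar{\PREF}^k_{\tilde F,\bar A}(\tilde f_{\bar A}(s)) = \phi_A\bigl(\bar{\PREF}^k_{F^{\dagger},A}(f^{\dagger}_A(s))\bigr)$ (using Lemma~\ref{lem:pref}(ii) and $\quot{\phi_A}{f^{\dagger}_A(s)} \in \Phi_k$), after which Definition~\ref{def:k-bitdelay}(a),(b) for $F^{\dagger}$ transfer to Definition~\ref{def:semi-class}(iv)(a),(b) for $\tilde F$ since $\phi_A$ is a bijection preserving intersections and emptiness, and $\tilde\trans_{\bar A}(s) \cap \bar{\PREF}^k_{\tilde F,\bar A}(\dots)$ corresponds under $\phi_{\trans^{\dagger}_A(s)}^{-1}\circ(\text{appropriate }\phi)$ to $\PREF^k_{F^{\dagger},\trans^{\dagger}_A(s)} \cap \bar{\PREF}^k_{F^{\dagger},A}(\dots)$. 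Finally $\tilde L(\tilde F)$ equals the average length of its direct realization; since the direct realization replaces $\tilde\trans_{\bar A}(s)$ by its representative but keeps $\tilde f_{\bar A}(s)$, and the $L_i$ match those of $F^{\dagger}$ under the length-preserving $\phi_A$, a stationary-distribution / closed-set argument (invoking Lemma~\ref{lem:closed} and irreducibility) gives $\tilde L(\tilde F) = L(F^{\dagger}) \le L(F)$.

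For statement~(ii): given $\tilde F \in \tilde{\mathscr{F}}_{\comp} \cap \tilde{\mathscr{F}}_{\reg} \cap \tilde{\mathscr{F}}_{\ext} \cap \tilde{\mathscr{F}}_{k\hdec}$, I would ``unfold'' it into a genuine code-tuple by including code tables for every set in $\langle[\tilde F]\rangle$ that is reachable: for each $A \in \langle[\tilde F]\rangle$, $A$ has representative $\bar A \in [\tilde F]$ and a chosen $\phi_A \in \Phi_k$ with $\phi_A(\bar A) = A$, and we define $f_A(s) \coloneqq \phi_A(\tilde f_{\bar A}(s))$ and $\trans_A(s) \coloneqq \quot{\phi_A}{\tilde f_{\bar A}(s)}(\tilde\trans_{\bar A}(s)) \in \mathscr{C}^k$, which lands in $\langle[\tilde F]\rangle$ again — so this is a well-defined code-tuple on domain $\langle[\tilde F]\rangle$ (or the reachable part thereof; one restricts to $\kernel_F$ or a minimal closed set to ensure irreducibility/regularity). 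The direct realization of $\tilde F$ is the special case where each $\phi_A = \phi_{\bar A} = \mathrm{id}$; comparing, one sees the new $F$ is again a ``relabeling by bit-flips'' of the direct realization restricted to reachable tables, so $L(F) = \tilde L(\tilde F)$ by the same stationary-distribution matching argument. The membership $F \in \mathscr{F}_{\reg} \cap \mathscr{F}_{\ext} \cap \mathscr{F}_{k\hdec}$ follows from $\tilde F \in \tilde{\mathscr{F}}_{\reg} \cap \tilde{\mathscr{F}}_{\ext} \cap \tilde{\mathscr{F}}_{k\hdec}$ via the compliance identity $\PREF^k_{\tilde F,A} = A$ (which lets us identify $\PREF^k_{F,A}$ with $A$ exactly — this is the whole point of the ``compliant'' hypothesis) together with the same $\bar{\PREF}$-transfer lemma as above.

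\textbf{Main obstacle.} The routine parts are the length-preservation (hence $L_i$ and $L$ are untouched by the $\phi$'s) and the formal bijectivity bookkeeping. The genuinely delicate step — in both directions — is establishing that $\PREF^k$ and $\bar{\PREF}^k$ commute with the symmetrizing maps in the precise form $\PREF^k_{\tilde F,\bar A} = \phi_A(\PREF^k_{F,A})$ and $\bar{\PREF}^k_{\tilde F,\bar A}(\tilde f_{\bar A}(s)) = \phi_A(\bar{\PREF}^k_{F,A}(f_A(s)))$, because this requires carefully threading Lemma~\ref{lem:phi-quot}(i)–(ii) through the recursive/union definitions~\eqref{eq:9fhqriquk4ht}, \eqref{eq:gfo3oig06zin}, \eqref{eq:x425gfz5autj}, \eqref{eq:hh4v6kmbk61n} and matching the $[\,\cdot\,]_k$ truncations via~\eqref{eq:s9qdkaq7m7ap}; and in direction~(ii) there is the additional subtlety that the unfolded $F$ may have more tables than $\tilde F$ and one must isolate a minimal non-empty closed set (Lemma~\ref{lem:closed}) to guarantee irreducibility and hence a well-defined, equal average codeword length. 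I expect these verifications to be relegated to appendices, with the main text presenting the constructions and the chain of equalities/inequalities.
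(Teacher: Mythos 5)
Your part (ii) sketch is a legitimate, if different, construction from the paper's: you index the unfolded code-tuple by $A \in \langle[\tilde F]\rangle$ with one fixed $\phi_A$ per $A$, whereas the paper indexes by pairs $(A,\phi) \in [\tilde F]\times\Phi_k$ so that the second coordinate tracks the accumulated ``phase'' through the transitions (see (\ref{eq:079ii8hhglqh})--(\ref{eq:1aj44hktwb5i})). Your smaller state space works because the invariant you need --- $\PREF^k_{F,A} \subseteq A$, i.e.\ the analogue of Lemma~\ref{lem:expanded-pref} --- only depends on the set $A$ and not on which $\phi$ sent $\bar A$ to $A$, and the stationary-distribution matching goes through via the projection $A \mapsto \bar A$. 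The paper's larger index set buys something extra: it lets Algorithms~\ref{alg:encoding} and~\ref{alg:decoding} run directly on $\tilde F$ by carrying $(A,\phi)$ as state, without materializing the code-tuple $F$. Both are fine for Theorem~\ref{thm:equiv}(ii) itself; you do need to spell out the induction and, as you note, restrict to a minimal non-empty closed set (Lemma~\ref{lem:closed}) to get regularity.

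Part (i) has a genuine gap. You start from Lemma~\ref{thm:differ}, which only guarantees that the sets $\PREF^k_{F^\dagger,A}$ are \emph{distinct}, not that they lie in distinct $\sim_k$-equivalence classes. It is entirely possible --- indeed this is the whole point of introducing $\sim_k$ --- that $[F^\dagger]$ contains $A_1 \neq A_2$ with $A_1 \sim_k A_2$ and hence $\bar A_1 = \bar A_2$. In that case your assignment $\tilde f_{\bar A}(s) \coloneqq \phi_A(f^\dagger_A(s))$ is ill-posed (two conflicting definitions of the same $\tilde f_{\bar A}$), and even if you arbitrarily discard one of $A_1,A_2$, the inequality $\tilde L(\tilde F) \le L(F^\dagger)$ no longer follows: deleting a code table from an irreducible code-tuple generally changes the stationary distribution and can increase the average codeword length. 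Collapsing equivalent tables \emph{without worsening} $L$ is precisely the hard step, and the paper devotes a separate result to it, Lemma~\ref{lem:to-semi}, which iteratively merges such pairs using the potential vector of Lemma~\ref{lem:potential} together with the $L$-comparison criterion of Lemma~\ref{lem:improve} (and Lemma~\ref{lem:chooseone} for regularity of the merged object). Note also that once Lemma~\ref{lem:to-semi} is in hand, the paper's proof of (i) does \emph{not} compose the code tables with any $\phi$ at all: it simply re-chooses the representatives so that $\bar A = A$ for all $A \in [F^\dagger]$ and sets $\tilde f_A = f_A$, $\tilde\trans_A = \trans_A$ verbatim --- the symmetrization is entirely hidden inside the merging lemma. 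Your ``commute $\PREF^k$ with $\phi$'' step is the delicate part of direction (ii), but the crux of (i) is this merge step, which your proposal is missing.
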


Theorem \ref{thm:equiv} implies that for any $\tilde{F} \in \tilde{\mathscr{F}}_{k\hopt}$, there exists $F \in \mathscr{F}_{k\hopt}$ and vice versa.
Therefore, the problem of finding $F \in \mathscr{F}_{k\hopt}$ can be reduced to the problem of finding $\tilde{F} \in \tilde{\mathscr{F}}_{k\hopt}$,
in which the number of code tables to be considered is extremely smaller as shown in Table \ref{tab:orbit-num}.

We defer the proof of Theorem \ref{thm:equiv} (i) to Appendix \ref{subsec:proof-equiv1},
and we prove Theorem \ref{thm:equiv} (ii) in the next subsection by constructing a concrete code-tuple $F \in \mathscr{F}_{\reg} \cap \mathscr{F}_{\ext} \cap \mathscr{F}_{k\hdec}$ such that $L(F) = \tilde{L}(\tilde{F})$ for a given $\tilde{F} \in \tilde{\mathscr{F}}_{\comp} \cap \tilde{\mathscr{F}}_{\reg} \cap \tilde{\mathscr{F}}_{\ext} \cap \tilde{\mathscr{F}}_{k\hdec}$.

\subsection{The Coding Procedure with a Reduced-Code-Tuple}
\label{subsec:expand}

Assume that we are given $\tilde{F} = (\tilde{f}, \tilde{\trans}) \in \tilde{\mathscr{F}}_{\comp} \cap \tilde{\mathscr{F}}_{\reg} \cap \tilde{\mathscr{F}}_{\ext} \cap \tilde{\mathscr{F}}_{k\hdec}$.
For every $A \in [\tilde{F}]$ and $s \in \mathcal{S}$, we choose and fix a mapping $\psi_{A, s} \in \Phi_k$ such that
\begin{equation}
\label{eq:ks8m7ci0bnrg}
\psi_{A, s}(\overline{\tilde{\trans}_A(s)}) = \tilde{\trans}_A(s)
\end{equation} 
and suppose that the information of $\tilde{\trans}_A(s)$ is given as the pair $( \overline{\tilde{\trans}_A(s)}, \psi_{A, s} ) \in [\tilde{F}] \times \Phi_k$.
Note that $\psi_{A, s} \in \Phi_k$ satisfying (\ref{eq:ks8m7ci0bnrg}) is not unique, 
and we fix arbitrarily chosen one for each $A \in [\tilde{F}]$ and $s \in \mathcal{S}$.

We define a code-tuple $\hat{F}=(\hat{f}, \hat{\trans})$ as follows:
the domain of $\tilde{F}$ is defined as
\begin{eqnarray}
\label{eq:079ii8hhglqh}
[\hat{F}] &\coloneqq& [\tilde{F}] \times \Phi_k, 
\end{eqnarray}
that is, each index is a pair $(A, \phi) \in [\tilde{F}] \times \Phi_k$;
the mappings $f_{(A, \phi)}$ and $\trans_{(A, \phi)}$ are defined as
\begin{eqnarray}
\hat{f}_{(A, \phi)}(s) &\coloneqq& \phi(\tilde{f}_{A}(s)), \label{eq:atqlr6s1yeoz}\\
\hat{\trans}_{(A, \phi)}(s) &\coloneqq& \left( \overline{\tilde{\trans}_A(s)}, \quot{\phi}{\tilde{f}_{A}(s)} \circ \psi_{A, s}\right) \label{eq:1aj44hktwb5i}
\end{eqnarray}
for $(A, \phi) \in [\hat{F}] = [\tilde{F}] \times \Phi_k$ and $s \in \mathcal{S}$.
Note that (\ref{eq:079ii8hhglqh})--(\ref{eq:1aj44hktwb5i}) do indeed define a code-tuple:
since $[\tilde{F}]$ and $\Phi_k$ are finite sets, the domain $[\hat{F}]$ is also a finite set;
since $\tilde{\trans}_A(s) \in \langle [\tilde{F}] \rangle$ by Definition \ref{def:semi} (iii), it holds indeed that $\overline{\tilde{\trans}_A(s)} \in [\tilde{F}]$ in (\ref{eq:1aj44hktwb5i});
the mapping $\quot{\phi}{\tilde{f}_{A}(s)} \circ \psi_{A, s}$ of (\ref{eq:1aj44hktwb5i}) is indeed in $\Phi_k$ by $\phi, \psi_{A, s} \in \Phi_k$, Lemma \ref{lem:phi-quot} (ii) and Lemma \ref{lem:Phi-group}.

\begin{table}
\caption{The RCT $\tilde{F}$ in Table \ref{tab:reduced}, where $\tilde{\trans}_A(s)$ is given in the form $( \overline{\tilde{\trans}_A(s)}, \psi_{A, s} ) \in [\tilde{F}] \times \Phi_k$}
\label{tab:reduced-kai}
\centering
\begin{tabular}{c | lclclc}
\hline
$s \in \mathcal{S}$ & $\tilde{f}_{\{00\}}$ & $\tilde{\trans}_{\{00\}}$ & $\tilde{f}_{\{00, 10\}}$ & $\tilde{\trans}_{\{00, 10\}}$ & $\tilde{f}_{\{00, 01, 10, 11\}}$ & $\tilde{\trans}_{\{00, 01, 10, 11\}}$\\
\hline
a & 001 & $(\{00, 10\}, \phi_{010})$ & 1 & $(\{00\}, \phi_{010})$ & 1 & $(\{00, 10\}, \phi_{010})$\\
b & 000 & $(\{00\}, \phi_{000})$ & 1001 & $(\{00, 01, 10, 11\}, \phi_{000})$ & 1 & $(\{00\}, \phi_{110})$\\
c & 00 & $(\{00\}, \phi_{010})$ & $\lambda$ & $(\{00\}, \phi_{000})$ & 100 & $(\{00, 01, 10, 11\}, \phi_{000})$\\
d & 001 & $(\{00, 10\}, \phi_{001})$ & 1000 & $(\{00, 01, 10, 11\}, \phi_{000})$ & 0 & $(\{00, 01, 10, 11\}, \phi_{000})$\\
\hline
\end{tabular}
\end{table}

\begin{example}
We consider the code-tuple $\hat{F} = (\hat{f}, \hat{\trans})$ defined by (\ref{eq:079ii8hhglqh})--(\ref{eq:1aj44hktwb5i}) for $\tilde{F}$ in Table \ref{tab:reduced}.
Suppose that $\tilde{\trans}_A(s)$ of $\tilde{F}$ is given in the form of the pair $( \overline{\tilde{\trans}_A(s)}, \psi_{A, s} ) \in [\tilde{F}] \times \Phi_2$ as Table \ref{tab:reduced-kai},
where $\phi_{b_1b_2b_3}$ for $b_1b_2b_3 \in \mathcal{C}^3$ denotes the mapping $\phi \in \Phi_2$ determined by $\phi^{\ast}(\lambda) = b_1, \phi^{\ast}(0) = b_2, \phi^{\ast}(1) = b_3$.
Then $\hat{F}$ consists of $24$ code tables and has the domain 
$[F] = [\tilde{F}] \times \Phi_2 = \{\{00\}, \{00, 10\}, \{00, 01, 10, 11\}\} 
\times \{\phi_{000}, \phi_{001}, \phi_{010}, \phi_{011}, \phi_{100}, \phi_{101}, \phi_{110}, \phi_{111}\}$.

Since $\hat{F}$ has too many mappings to display all here, we show the mappings only for $(A, \phi) = {(\{00, 10\}, \phi_{101})}$:
the mappings $\hat{f}_{(\{00, 10\}, \phi_{101})}$ and $\hat{\trans}_{(\{00, 10\}, \phi_{101})}$ are as follows.
\begin{eqnarray*}
\hat{f}_{(\{00, 10\}, \phi_{101})}(\mathrm{a})
&=& \phi_{101}(1) = 0,\\
\hat{f}_{(\{00, 10\}, \phi_{101})}(\mathrm{b})
&=& \phi_{101}(1001) = 0101,\\
\hat{f}_{(\{00, 10\}, \phi_{101})}(\mathrm{c})
&=& \phi_{101}(\lambda) = \lambda,\\
\hat{f}_{(\{00, 10\}, \phi_{101})}(\mathrm{d})
&=& \phi_{101}(1000) = 0100,
\end{eqnarray*}
\begin{eqnarray*}
\hat{\trans}_{(\{00, 10\}, \phi_{101})}(\mathrm{a})
&=& \left( \{00\}, \quot{\phi_{101}}{1} \circ \phi_{010}\right)
= \left( \{00\}, \phi_{110}\right),\\
\hat{\trans}_{(\{00, 10\}, \phi_{101})}(\mathrm{b})
&=& \left( \{00, 01, 10, 11\}, \quot{\phi_{101}}{1001} \circ \phi_{000}\right)
= \left( \{00, 01, 10, 11\}, \phi_{000}\right),\\
\hat{\trans}_{(\{00, 10\}, \phi_{101})}(\mathrm{c})
&=& \left( \{00\}, \quot{\phi_{101}}{\lambda} \circ \phi_{000}\right)
= \left( \{00\}, \phi_{101}\right),\\
\hat{\trans}_{(\{00, 10\}, \phi_{101})}(\mathrm{d})
&=& \left( \{00, 01, 10, 11\}, \quot{\phi_{101}}{1000} \circ \phi_{000}\right)
= \left( \{00, 01, 10, 11\}, \phi_{000}\right).
\end{eqnarray*}
\end{example}

Next, let $\mathcal{I}$ be a minimal non-empty closed set of $\hat{F}$, defined in Definition \ref{def:closed},
and we define a code-tuple $F=(f,\trans)$ as
\begin{eqnarray}
[F] &\coloneqq& \mathcal{I},  \label{eq:wh6g7yi11crb}
\end{eqnarray}
and for $(A, \phi) \in [F] = \mathcal{I}$ and $s \in \mathcal{S}$,
\begin{eqnarray}
f_{(A, \phi)}(s) &\coloneqq& \hat{f}_{(A, \phi)}(s) = \phi(\tilde{f}_{A}(s)), \label{eq:9tr7le1wumbs} \\
\trans_{(A, \phi)}(s) &\coloneqq& \hat{\trans}_{(A, \phi)}(s) = \left( \overline{\tilde{\trans}_A(s)}, \quot{\phi}{\tilde{f}_{A}(s)} \circ \psi_{A, s}\right). \label{eq:otmlflag72l3}
\end{eqnarray}

Then the code-tuple $F$ defined by (\ref{eq:wh6g7yi11crb})--(\ref{eq:otmlflag72l3}) is a desired code-tuple as shown as the following Lemma \ref{lem:expand}, which completes the proof of Theorem \ref{thm:equiv} (ii). 

\begin{lemma}
\label{lem:expand}
The code-tuple $F$ defined by (\ref{eq:wh6g7yi11crb})--(\ref{eq:otmlflag72l3}) satisfies $F \in \mathscr{F}_{\reg} \cap \mathscr{F}_{\ext} \cap \mathscr{F}_{k\hdec}$ and $L(F) = \tilde{L}(\tilde{F})$.
\end{lemma}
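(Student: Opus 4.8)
The plan is to verify the four required properties of $F$ --- extendability, membership in $\mathscr{F}_{k\hdec}$, regularity, and the equality $L(F) = \tilde L(\tilde F)$ --- by exploiting the ``bookkeeping'' structure of the construction, in which each index $(A,\phi)$ carries a code table that is a $\phi$-twisted copy of $\tilde f_A$. The central technical fact I would establish first is a \emph{twisting identity}: for the code-tuple $\hat F$ of \eqref{eq:079ii8hhglqh}--\eqref{eq:1aj44hktwb5i}, and hence for its restriction $F$, one has $\hat f_{(A,\phi)}^{\ast}(\pmb x) = \phi\bigl(\tilde f_A^{\ast}(\pmb x)\bigr)$ for a suitable notion of $\tilde f_A^{\ast}$, together with the rule that the ``$\phi$-component'' of $\hat\trans_{(A,\phi)}^{\ast}(\pmb x)$ is $\phi/{\tilde f_A^{\ast}(\pmb x)}$ composed with the accumulated $\psi$'s, and the ``$A$-component'' tracks $\overline{\tilde\trans_A^{\ast}(\pmb x)}$. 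This is proved by induction on $|\pmb x|$ using Lemma \ref{lem:phi-quot} (i) (which gives $\phi(\pmb b\pmb b') = \phi(\pmb b)(\phi/\pmb b)(\pmb b')$), the composition/inverse closure in Lemma \ref{lem:Phi-group}, and the defining recursions \eqref{eq:fstar}, \eqref{eq:tstar}. From this identity one immediately reads off, via \eqref{eq:s9qdkaq7m7ap} and the characterization \eqref{eq:pref3}, that $\PREF^k_{\hat F,(A,\phi)} = \phi\bigl(\PREF^k_{\tilde F,A}\bigr) = \phi(A)$, using $\tilde F \in \tilde{\mathscr F}_{\comp}$ so that $\PREF^k_{\tilde F,A} = A$. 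In particular $\PREF^k_{\hat F,(A,\phi)} \ne \emptyset$ for every index (since $\tilde F \in \tilde{\mathscr F}_{\ext}$ forces $\emptyset \notin [\tilde F]$ and $\phi$ is a bijection), so $\hat F \in \mathscr F_{\ext}$, and a fortiori $F \in \mathscr F_{\ext}$ because $[F] = \mathcal I \subseteq [\hat F]$.

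Next I would handle \emph{$k$-bit delay decodability}. Since $\mathcal I$ is closed under $\trans_{(A,\phi)}$, the sets $\PREF^k_{F,(A,\phi)}$ and $\bar\PREF^k_{F,(A,\phi)}(\cdot)$ coincide with those of $\hat F$, so it suffices to check the conditions (a), (b) of Definition \ref{def:k-bitdelay} for $\hat F$. Here I would relate $\bar\PREF^k_{\hat F,(A,\phi)}(\hat f_{(A,\phi)}(s))$ to $\bar\PREF^k_{\tilde F,A}(\tilde f_A(s))$ via the same twisting: using Lemma \ref{lem:pref} (ii) and Definition \ref{def:pref-tilde} (ii), one gets $\bar\PREF^k_{\hat F,(A,\phi)}(\phi(\tilde f_A(s))) = (\phi/\tilde f_A(s))\bigl(\bar\PREF^k_{\tilde F,A}(\tilde f_A(s))\bigr)$, a bijective image. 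Likewise the ``$A$-component of $\hat\trans_{(A,\phi)}(s)$'' is $\overline{\tilde\trans_A(s)}$, and $\PREF^k_{\hat F,\hat\trans_{(A,\phi)}(s)} = (\phi/\tilde f_A(s))\bigl(\psi_{A,s}(\overline{\tilde\trans_A(s)})\bigr) = (\phi/\tilde f_A(s))\bigl(\tilde\trans_A(s)\bigr)$ by \eqref{eq:ks8m7ci0bnrg}. Since $\phi/\tilde f_A(s)$ is a bijection, the intersection in Definition \ref{def:k-bitdelay}(a) for $\hat F$ is empty iff $\tilde\trans_A(s) \cap \bar\PREF^k_{\tilde F,A}(\tilde f_A(s)) = \emptyset$, which is exactly Definition \ref{def:semi-class}(iv)(a). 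Condition (b) is similar: if $\hat f_{(A,\phi)}(s) = \hat f_{(A,\phi)}(s')$ then $\phi(\tilde f_A(s)) = \phi(\tilde f_A(s'))$, so $\tilde f_A(s) = \tilde f_A(s')$ by injectivity of $\phi$, and then the two relevant $\PREF^k$ sets are images under the \emph{same} bijection $\phi/\tilde f_A(s) = \phi/\tilde f_A(s')$ of $\tilde\trans_A(s)$ and $\tilde\trans_A(s')$, whose intersection is empty by Definition \ref{def:semi-class}(iv)(b). Hence $\hat F \in \mathscr F_{k\hdec}$ and therefore $F \in \mathscr F_{k\hdec}$.

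For \emph{regularity}, I would invoke Lemma \ref{lem:closed}: $F$ is built so that $[F] = \mathcal I$ is a minimal non-empty closed set of $\hat F$, and since $F$ is the sub-code-tuple of $\hat F$ on the closed set $\mathcal I$, $[F]$ is itself a minimal non-empty closed set \emph{of $F$}; Lemma \ref{lem:closed} then gives $F \in \mathscr F_{\irr} \subseteq \mathscr F_{\reg}$. Finally, for the length equality $L(F) = \tilde L(\tilde F)$: by definition $\tilde L(\tilde F) = L(F_0)$ where $F_0$ is the direct realization of $\tilde F$, whose code tables are $f^{0}_A(s) = \tilde f_A(s)$ and $\trans^0_A(s) = \overline{\tilde\trans_A(s)}$. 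I would show that the natural projection $\pi\colon (A,\phi) \mapsto A$ from $[F]$ onto a closed set of $F_0$ is a ``merging'' of states that preserves transition probabilities: $Q_{(A,\phi),(A',\phi')}(F) $ summed over the fibre of $A'$ equals $Q^0_{A,A'}$, because $\trans_{(A,\phi)}(s)$ has $A$-component $\overline{\tilde\trans_A(s)} = \trans^0_A(s)$ and $Q$ depends only on $\mu$ and the index-to-index map. Since $|f_{(A,\phi)}(s)| = |\phi(\tilde f_A(s))| = |\tilde f_A(s)| = |f^0_A(s)|$ by Definition \ref{def:Phi}(a), the per-table lengths also only depend on $A$, i.e. $L_{(A,\phi)}(F) = L_A(F_0)$. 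Pushing the unique stationary distribution of $F$ forward along $\pi$ yields a stationary distribution of $F_0$ restricted to the image closed set, and because the lengths are constant on fibres, $L(F) = \sum \pi_{(A,\phi)}(F) L_{(A,\phi)}(F) = \sum_A \bigl(\sum_{\phi}\pi_{(A,\phi)}(F)\bigr) L_A(F_0) = L(F_0) = \tilde L(\tilde F)$; here one must be slightly careful that $F_0$ restricted to the relevant closed set has the pushed-forward distribution as \emph{its} stationary distribution and that the code tables outside that closed set carry zero stationary mass, which follows from regularity of $F_0$ (equivalently of $\tilde F$, since $\tilde F \in \tilde{\mathscr F}_{\reg}$) and Lemma \ref{lem:kernel}.

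The main obstacle I anticipate is the twisting identity in the first paragraph: getting the induction to go through cleanly requires carefully defining the ``starred'' versions $\tilde f_A^{\ast}, \tilde\trans_A^{\ast}$ for RCTs (iterating the map $s \mapsto \tilde\trans_A(s) \in \langle[\tilde F]\rangle$, which steps outside $[\tilde F]$), and then tracking simultaneously \emph{both} coordinates of $\hat\trans^{\ast}_{(A,\phi)}(\pmb x)$ while the accumulated quotient map $\phi/(\cdot)$ and the accumulated $\psi$-compositions interleave correctly via Lemma \ref{lem:phi-quot}(i) and the cocycle-type relation $(\phi/\pmb b)/\pmb b' = \phi/(\pmb b\pmb b')$. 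Once that identity is in hand, the remaining verifications are essentially bijection-chasing and the standard stationary-distribution pushforward argument.
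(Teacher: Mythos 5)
The parts of your plan for regularity (Lemma \ref{lem:closed}), for $k$-bit delay decodability (comparing $\PREF$ and $\bar\PREF$ of $F$ to those of $\tilde F$ through quotient maps), and for $L(F)=\tilde L(\tilde F)$ (fibre-constant lengths and transition probabilities under the projection $(A,\phi)\mapsto A$, then pushing forward the stationary distribution) all match the paper's route in substance, and they are correct. Note, however, that in the decodability part you assert \emph{equalities} such as $\bar\PREF^k_{\hat F,(A,\phi)}(\phi(\tilde f_A(s)))=(\phi/\tilde f_A(s))\bigl(\bar\PREF^k_{\tilde F,A}(\tilde f_A(s))\bigr)$ and $\PREF^k_{\hat F,\hat\trans_{(A,\phi)}(s)}=(\phi/\tilde f_A(s))(\tilde\trans_A(s))$; the paper establishes only the inclusions $\subseteq$ (Lemma \ref{lem:expanded-pref}, Corollary \ref{cor:expanded-pref}, and (\ref{eq:fr12poy890jk})). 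The inclusions suffice for the decodability check, so this is a harmless overclaim there, but it signals the real problem with your extendability argument.

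The genuine gap is in $F\in\mathscr F_{\ext}$. You derive it from the claimed identity $\PREF^k_{\hat F,(A,\phi)}=\phi(A)$, which you in turn want to read off from a twisting identity $\hat f^\ast_{(A,\phi)}(\pmb x)=\phi(\tilde f^\ast_A(\pmb x))$. Two problems. First, the natural ``starred'' map on the RCT side is $f'^\ast_A$ for the direct realization $F'$, and the naive identity $\hat f^\ast_{(A,\phi)}(\pmb x)=\phi(f'^\ast_A(\pmb x))$ is \emph{false}: unwinding (\ref{eq:fstar}) with (\ref{eq:1aj44hktwb5i}) produces, after one step, the map $\quot{\phi}{\tilde f_A(x_1)}\circ\psi_{A,x_1}$ acting on a string that $\quot{\phi}{\tilde f_A(x_1)}$ alone would have to act on, and $\psi_{A,x_1}$ does not act as the identity on strings of length $<k$. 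What survives is only the length equality, which is exactly Lemma \ref{lem:expand-coding}. Second, even granting a corrected twisting identity, you would obtain $\PREF^k_{\hat F,(A,\phi)}=\phi(\PREF^k_{F',A})$, not $\phi(\PREF^k_{\tilde F,A})$; compliance gives $\PREF^k_{\tilde F,A}=A$, but $\PREF^k_{F',A}$ is a different quantity (it replaces $\tilde\trans_A(s)$ by $\PREF^k_{F',\overline{\tilde\trans_A(s)}}$, and $\tilde\trans_A(s)$ need not equal its representative), so $\PREF^k_{F',A}=A$ is not automatic. Any attempt to prove the equality $\PREF^k_{\hat F,(A,\phi)}=\phi(A)$ directly from Lemma \ref{lem:pref}\,(i) bottoms out in the same equality at the next state, with no well-founded quantity to induct on. The paper sidesteps this entirely: it only proves the inclusion $\PREF^k_{F,(A,\phi)}\subseteq\phi(A)$ and then proves extendability by a separate minimality argument, choosing $(A,\phi)\in\kernel_F$ with minimal $|A|$, using $k$-bit delay decodability of $\tilde F$ together with $|\overline{\tilde\trans_A(s)}|=|\tilde\trans_A(s)|$ to show that two distinct $s,s'$ with $f_{(A,\phi)}(s)=f_{(A,\phi)}(s')=\lambda$ would force $|\overline{\tilde\trans_A(s)}|\le|A|-1$, contradicting minimality, and then appealing to reachability from $\kernel_F$. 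You will need to either supply a genuine proof that $\PREF^k_{\hat F,(A,\phi)}=\phi(A)$ (which does not look easy and may well require extendability first) or replace this step with an argument of the paper's type.
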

We can see $F \in \mathscr{F}_{\reg}$ of Lemma \ref{lem:expand} directly from (\ref{eq:wh6g7yi11crb}) and Lemma \ref{lem:closed}.
The proofs of $L(F) = \tilde{L}(\tilde{F})$, $F \in \mathscr{F}_{k\hdec}$, and $F \in \mathscr{F}_{\ext}$ are deferred to Appendices \ref{subsec:proof-expand1}--\ref{subsec:proof-expand3}, respectively.

By Theorem \ref{thm:equiv}, given a $k$-bit delay optimal RCT $\tilde{F}$,
we can obtain a $k$-bit delay optimal code-tuple as $F$ defined by (\ref{eq:wh6g7yi11crb})--(\ref{eq:otmlflag72l3}) and use it for coding.
In practical coding process, we can use $\hat{F}$ defined by (\ref{eq:079ii8hhglqh})--(\ref{eq:1aj44hktwb5i}) instead of $F$ defined by (\ref{eq:wh6g7yi11crb})--(\ref{eq:otmlflag72l3}):
$\hat{F}$ is generally not regular and thus $L(\hat{F})$ is not necessarily defined; however, $\hat{F}$ achieves the equivalent efficiency to the average codeword length $\tilde{L}(\tilde{F})$ of the direct realization $F' = (f', \trans')$ of $\tilde{F}$ in the following sense.

\begin{lemma}
\label{lem:expand-coding}
For any $(A, \phi) \in [\hat{F}]$ and $\pmb{x} \in \mathcal{S}^{\ast}$, we have $|\hat{f}^{\ast}_{(A, \phi)}(\pmb{x})| = |f'^{\ast}_{A}(\pmb{x})|$.
\end{lemma}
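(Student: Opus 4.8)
The plan is to prove Lemma~\ref{lem:expand-coding} by induction on the length of $\pmb{x}$. The key auxiliary claim, which I would prove simultaneously with the length equality, is a pair of structural identities describing how $\hat{f}^{\ast}_{(A,\phi)}$ and $\hat{\trans}^{\ast}_{(A,\phi)}$ relate to the corresponding starred mappings of the direct realization $F'=(f',\trans')$: namely, for all $(A,\phi)\in[\hat{F}]$ and $\pmb{x}\in\mathcal{S}^{\ast}$,
\begin{equation*}
\hat{f}^{\ast}_{(A,\phi)}(\pmb{x}) = \phi\bigl(f'^{\ast}_{A}(\pmb{x})\bigr), \qquad \hat{\trans}^{\ast}_{(A,\phi)}(\pmb{x}) = \bigl(\trans'^{\ast}_{A}(\pmb{x}),\, \psi\bigr)
\end{equation*}
for some $\psi\in\Phi_k$ (the precise $\psi$ being determined by $\phi$ and the $\psi_{A,s}$, but its identity does not matter for the lemma). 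Since every $\phi\in\Phi_k$ preserves length by Definition~\ref{def:Phi}~(a), the first identity immediately yields $|\hat{f}^{\ast}_{(A,\phi)}(\pmb{x})| = |f'^{\ast}_{A}(\pmb{x})|$, which is the assertion of the lemma.

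First I would set up the base case $\pmb{x}=\lambda$: both sides of the first identity are $\lambda$ by \eqref{eq:fstar}, and $\phi(\lambda)=\lambda$ since $|\phi(\lambda)|=|\lambda|=0$; the transition identity holds with $\psi=\phi$. For the inductive step, write $\pmb{x}=s\,\pmb{y}$ with $s=x_1$. Using \eqref{eq:fstar} for $\hat{f}^{\ast}$ and then the definitions \eqref{eq:atqlr6s1yeoz}--\eqref{eq:1aj44hktwb5i} of $\hat{f}$ and $\hat{\trans}$, one gets
\begin{equation*}
\hat{f}^{\ast}_{(A,\phi)}(s\pmb{y}) = \phi(\tilde{f}_A(s))\,\hat{f}^{\ast}_{(\overline{\tilde{\trans}_A(s)},\ \quot{\phi}{\tilde{f}_A(s)}\circ\psi_{A,s})}(\pmb{y}).
\end{equation*}
Applying the induction hypothesis to $\pmb{y}$ with the index $(\overline{\tilde{\trans}_A(s)},\ \quot{\phi}{\tilde{f}_A(s)}\circ\psi_{A,s})$ turns the second factor into $\bigl(\quot{\phi}{\tilde{f}_A(s)}\circ\psi_{A,s}\bigr)\bigl(f'^{\ast}_{\overline{\tilde{\trans}_A(s)}}(\pmb{y})\bigr)$. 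On the other side, $f'^{\ast}_{A}(s\pmb{y}) = f'_A(s)\,f'^{\ast}_{\trans'_A(s)}(\pmb{y}) = \tilde{f}_A(s)\,f'^{\ast}_{\overline{\tilde{\trans}_A(s)}}(\pmb{y})$ by Definition~\ref{def:direct} (recall $f'_A=\tilde{f}_A$ and $\trans'_A(s)=\overline{\tilde{\trans}_A(s)}$). So it remains to check
\begin{equation*}
\phi\bigl(\tilde{f}_A(s)\,\pmb{w}\bigr) = \phi(\tilde{f}_A(s))\,\bigl(\quot{\phi}{\tilde{f}_A(s)}\circ\psi_{A,s}\bigr)(\pmb{w})
\end{equation*}
where $\pmb{w}=f'^{\ast}_{\overline{\tilde{\trans}_A(s)}}(\pmb{y})$. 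By Lemma~\ref{lem:phi-quot}~(i), $\phi(\tilde{f}_A(s)\,\pmb{w}) = \phi(\tilde{f}_A(s))\,\quot{\phi}{\tilde{f}_A(s)}(\pmb{w})$, so the identity would follow if $\quot{\phi}{\tilde{f}_A(s)}(\pmb{w}) = \bigl(\quot{\phi}{\tilde{f}_A(s)}\circ\psi_{A,s}\bigr)(\pmb{w})$ — but this is \emph{false} in general. The resolution is that I should carry the induction hypothesis in the form $\hat{f}^{\ast}_{(A',\phi')}(\pmb{y}) = \phi'(f'^{\ast}_{A'}(\pmb{y}))$, and here $\phi' = \quot{\phi}{\tilde{f}_A(s)}\circ\psi_{A,s}$, so the second factor is exactly $\bigl(\quot{\phi}{\tilde{f}_A(s)}\circ\psi_{A,s}\bigr)(f'^{\ast}_{\overline{\tilde{\trans}_A(s)}}(\pmb{y}))$ — which equals $\quot{\phi}{\tilde{f}_A(s)}\bigl(\psi_{A,s}(f'^{\ast}_{\overline{\tilde{\trans}_A(s)}}(\pmb{y}))\bigr)$. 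To match $\phi(\tilde{f}_A(s)\,f'^{\ast}_{\overline{\tilde{\trans}_A(s)}}(\pmb{y}))$ I would then need $\psi_{A,s}$ to act trivially on $f'^{\ast}_{\overline{\tilde{\trans}_A(s)}}(\pmb{y})$ up to the relevant prefix, which again need not hold. The correct fix, and the genuinely delicate point, is that the statement of the lemma only concerns \emph{lengths}, so I do not actually need the refined identity $\hat{f}^{\ast}_{(A,\phi)}(\pmb{x})=\phi(f'^{\ast}_A(\pmb{x}))$ on the nose; I only need $|\hat{f}^{\ast}_{(A,\phi)}(\pmb{x})| = |f'^{\ast}_A(\pmb{x})|$ together with $\hat{\trans}^{\ast}_{(A,\phi)}(\pmb{x}) = (\trans'^{\ast}_A(\pmb{x}),\psi)$ for \emph{some} $\psi\in\Phi_k$.

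Reorganizing accordingly: I would prove by induction on $|\pmb{x}|$ the joint statement ``for every $(A,\phi)\in[\hat{F}]$ there is $\psi\in\Phi_k$ with $\hat{\trans}^{\ast}_{(A,\phi)}(\pmb{x}) = (\trans'^{\ast}_A(\pmb{x}),\psi)$, and $|\hat{f}^{\ast}_{(A,\phi)}(\pmb{x})| = |f'^{\ast}_A(\pmb{x})|$.'' The length equality is then robust: in the inductive step the first codeword block contributes $|\phi(\tilde{f}_A(s))| = |\tilde{f}_A(s)| = |f'_A(s)|$ by Definition~\ref{def:Phi}~(a), and the recursive tail has matching length by the induction hypothesis applied at index $(\overline{\tilde{\trans}_A(s)},\,\cdot\,)$, whose first coordinate $\overline{\tilde{\trans}_A(s)} = \trans'_A(s)$ matches the index appearing in $f'^{\ast}_A(s\pmb{y}) = f'_A(s)\,f'^{\ast}_{\trans'_A(s)}(\pmb{y})$. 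The transition-index tracking follows from \eqref{eq:tstar}, \eqref{eq:1aj44hktwb5i}, Definition~\ref{def:direct}, and Lemma~\ref{lem:f_T}~(ii), with the new $\psi$ being $\quot{\phi}{\tilde{f}_A(s)}\circ\psi_{A,s}$ composed (on the correct side) with the $\psi$ supplied by the induction hypothesis — which lies in $\Phi_k$ by Lemma~\ref{lem:phi-quot}~(ii) and Lemma~\ref{lem:Phi-group}. The main obstacle is exactly this bookkeeping of the $\Phi_k$-component: one must be careful that the first coordinate of $\hat{\trans}^{\ast}$ depends only on the RCT data (via the $\overline{\tilde{\trans}_A(s)}$, i.e.\ via $\trans'$) and not on the $\phi$ being carried, so that the recursion on indices closes; once that decoupling is observed, the length statement is a short consequence of length-preservation in $\Phi_k$.
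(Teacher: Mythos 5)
Your proof is correct, and it is essentially the paper's argument made self-contained. The paper's proof defines the projection $\varphi((A,\phi))=A$, verifies the single-step intertwining identities \eqref{eq:12yxaf48at5p} and \eqref{eq:ohfzbcgbftvx} (that $\varphi$ carries $\hat{\trans}$-transitions to $\trans'$-transitions and preserves per-symbol codeword lengths), and then cites \cite[Lemma~7(i)]{IEICE2023} to lift these facts to the starred maps; your joint inductive claim --- the first coordinate of $\hat{\trans}^{\ast}_{(A,\phi)}(\pmb{x})$ equals $\trans'^{\ast}_{A}(\pmb{x})$, and $|\hat{f}^{\ast}_{(A,\phi)}(\pmb{x})| = |f'^{\ast}_{A}(\pmb{x})|$ --- is precisely the content of that lift, worked out directly from \eqref{eq:fstar}, \eqref{eq:tstar}, \eqref{eq:atqlr6s1yeoz}--\eqref{eq:1aj44hktwb5i}, and Definition~\ref{def:Phi}~(a). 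Your detour showing that the stronger on-the-nose identity $\hat{f}^{\ast}_{(A,\phi)} = \phi\circ f'^{\ast}_{A}$ fails (because $\psi_{A,s}$ need not act trivially on the tail) is correct and well spotted; retreating to the lengths-only formulation is exactly the right move. Two small points of cleanup: the clause ``$\psi\in\Phi_k$'' in your hypothesis is automatic, since $\hat{\trans}^{\ast}$ maps $[\hat{F}]$ into $[\hat{F}]=[\tilde{F}]\times\Phi_k$ by construction, so the induction really only needs to track the first coordinate; and your description of the new $\psi$ as a composite of $\quot{\phi}{\tilde{f}_A(s)}\circ\psi_{A,s}$ with the inductive $\psi$ is imprecise (the new $\psi$ is simply whatever the induction hypothesis returns at the updated index), though this bookkeeping slip is harmless because only the existence of \emph{some} $\psi\in\Phi_k$ is used.
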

See Appendix \ref{subsec:proof-expand-coding} for the proof of Lemma \ref{lem:expand-coding}.

Further, we can perform the coding process using only $\tilde{F}$ without constructing $\hat{F}$ itself explicitly.
Algorithm \ref{alg:encoding} shows the encoding procedure using only $\tilde{F}$: we start with an arbitrarily chosen index $(A, \phi) \in [\hat{F}] = [\tilde{F}] \times \Phi_k$ and then repeat to encode a source symbol and update the index of the current code table according to (\ref{eq:qwshe4o3q759}).

 \begin{algorithm}
\caption{Encoding procedure without explicit construction of $\hat{F}$.}
\begin{algorithmic}[1]
\label{alg:encoding}
 \REQUIRE{The used RCT $\tilde{F} = (\tilde{f}, \tilde{\trans}) \in \tilde{\mathscr{F}}$, the index of the initial code table $(A, \phi) \in [\tilde{F}] \times \Phi_k$, and the source sequence $\pmb{x} \in \mathcal{S}^{\ast}$.}
 \ENSURE{The codeword sequence $\hat{f}^{\ast}_{(A, \phi)}(\pmb{x})$ of $\pmb{x}$ encoded with $\hat{F}$ corresponding to $\tilde{F}$.}
 \STATE {$\pmb{b} \gets \lambda$.}
 \FOR {$i \gets 1$ \TO $|\pmb{x}|$}
 \STATE{$\pmb{b} \gets \pmb{b}\phi(\tilde{f}_A(x_i))$.}
 \STATE{$(A, \phi) \gets \left( \overline{\tilde{\trans}_A(x_i)}, \quot{\phi}{\tilde{f}_{A}(x_i)} \circ \psi_{A, x_i}\right)$.}
\ENDFOR 
 \RETURN $\pmb{b}$
 \end{algorithmic}
\end{algorithm}

Also, Algorithm \ref{alg:decoding} shows the decoding procedure using only $\tilde{F}$:
we start with the initial index $(A, \phi) \in [\hat{F}] = [\tilde{F}] \times \Phi_k$ shared with the encoder, and
as long as there exists $s \in \mathcal{S}$ satisfying (\ref{eq:dqovdwqo208r}), we recover $s$ and update the current index $(A, \phi)$ by (\ref{eq:qwshe4o3q759}) repeatedly.
It is guaranteed that at most one $s \in \mathcal{S}$ satisfies (\ref{eq:dqovdwqo208r}) because of the $k$-bit delay decodability of $\tilde{F}$.

 \begin{algorithm}
\caption{Decoding procedure without explicit construction of $\hat{F}$.}
\begin{algorithmic}[1]
\label{alg:decoding}
 \REQUIRE{The used RCT $\tilde{F} \in \tilde{\mathscr{F}}$, the index of the initial code table $(A, \phi) \in [\tilde{F}] \times \Phi_k$, and the codeword sequence $\pmb{b} \in \mathcal{C}^{\ast}$.}
 \ENSURE{The source sequence $\pmb{x}$ obtained by decoding $\pmb{b}$ with $\hat{F}$ corresponding to $\tilde{F}$.}
 \STATE {$\pmb{x} \gets \lambda$.}
 \STATE \textbf{do}
 \STATE {\quad $\pmb{b}' \gets \phi^{-1}(\pmb{b})$.}
 \STATE {\quad Let $s \gets \mathcal{S}$ be a symbol such that
 \begin{equation}
 \label{eq:dqovdwqo208r}
 \exists \pmb{c} \in \mathcal{C}^{\ast} \,\,\mathtt{s.t.}\,\, \tilde{f}_A(s)\tilde{\trans}_{A}(s) \owns \pmb{c} \preceq \pmb{b}'.
 \end{equation}
 \quad If there exists no $s \in \mathcal{S}$ satisfying (\ref{eq:dqovdwqo208r}), then {\bf break}.}
 \STATE{\quad $\pmb{x} \gets \pmb{x}s, \,\, \pmb{b} \gets \phi(\tilde{f}_A(s))^{-1}\pmb{b}$.}
 \STATE{\quad $(A, \phi) \gets \left( \overline{\tilde{\trans}_A(s)}, \quot{\phi}{\tilde{f}_{A}(s)} \circ \psi_{A, s}\right)$.}
\STATE \textbf{loop}
\RETURN $\pmb{x}$
 \end{algorithmic}
\end{algorithm}

In the coding process, each index $A \in [\tilde{F}]$ can be represented as the bit sequence of length $2^k$ that indicates whether $\pmb{b} \in A$ or $\pmb{b} \not\in A$ for each $\pmb{b} \in \mathcal{C}^k$.
Also, as stated after Lemma \ref{lem:phi-represent}, a mapping $\phi \in \Phi_k$ can be encoded to the bit sequence of length $|\mathcal{C}^{\leq k-1}| = 2^{k}-1$ that represents the values of $\phi^{\ast}(\pmb{b})$ for each $\pmb{b} \in \mathcal{C}^{\leq k-1}$.
Lemma \ref{lem:phi-represent} and the following Lemma \ref{lem:quot-calc} can be used for the calculations in the coding process.
The proof of Lemma \ref{lem:quot-calc} is in Appendix \ref{subsec:proof-quot-calc}.

\begin{lemma}
\label{lem:quot-calc}
For any integer $k \geq 0$ , the following statements (i)--(iii) hold.
\begin{enumerate}[(i)]
\item For any $\phi \in \Phi_k$ and $\pmb{d}, \pmb{b} \in \mathcal{C}^{\ast}$, 
we have $\left(\quot{\phi}{\pmb{d}}\right)^{\ast}(\pmb{b}) = \phi^{\ast}(\pmb{db})$.

\item For any $\phi, \psi \in \Phi_k$ and $\pmb{b} \in \mathcal{C}^{\ast}$, 
we have $(\phi \circ \psi)^{\ast}(\pmb{b}) = \phi^{\ast}(\psi(\pmb{b})) \oplus \psi^{\ast}(\pmb{b})$.

\item For any $\phi \in \Phi_k$ and $\pmb{b} \in \mathcal{C}^{\ast}$,
we have $(\phi^{-1})^{\ast}(\phi(\pmb{b})) = \phi^{\ast}(\pmb{b})$.
\end{enumerate}
\end{lemma}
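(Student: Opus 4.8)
The plan is to obtain all three identities by unwinding Definitions~\ref{def:phi-quot} and~\ref{def:phi-star} and repeatedly applying the factorization $\phi(\pmb{b}\pmb{b}') = \phi(\pmb{b})\quot{\phi}{\pmb{b}}(\pmb{b}')$ from Lemma~\ref{lem:phi-quot}~(i), together with the elementary cancellation rules for the operator $\pmb{x}^{-1}$ recalled in Section~\ref{sec:preliminary} and the free-monoid cancellation ``$\pmb{u}\pmb{v} = \pmb{u}\pmb{v}' \Rightarrow \pmb{v} = \pmb{v}'$''. Of the three parts, (iii) will follow immediately from (ii), (i) is a short chain-rule computation, and the only part requiring genuine care is (ii).

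For (i), I would first establish the chain rule $\quot{\left(\quot{\phi}{\pmb{d}}\right)}{\pmb{b}} = \quot{\phi}{\pmb{d}\pmb{b}}$. Expanding $\phi(\pmb{d}\pmb{b}\pmb{c})$ with Lemma~\ref{lem:phi-quot}~(i) in two ways — once as $\phi(\pmb{d})\quot{\phi}{\pmb{d}}(\pmb{b})\,\quot{\left(\quot{\phi}{\pmb{d}}\right)}{\pmb{b}}(\pmb{c})$ (peeling off $\pmb{d}$, then $\pmb{b}$ via Lemma~\ref{lem:phi-quot}~(i) applied to $\quot{\phi}{\pmb{d}} \in \Phi_K$), and once as $\phi(\pmb{d})\quot{\phi}{\pmb{d}}(\pmb{b})\,\quot{\phi}{\pmb{d}\pmb{b}}(\pmb{c})$ (peeling off $\pmb{d}$, then $\pmb{b}$, then $\pmb{c}$, using $\phi(\pmb{d}\pmb{b}) = \phi(\pmb{d})\quot{\phi}{\pmb{d}}(\pmb{b})$) — and cancelling the common left factor $\phi(\pmb{d})\quot{\phi}{\pmb{d}}(\pmb{b})$ gives the chain rule. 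Setting $\pmb{c} = 0$ and invoking Definition~\ref{def:phi-star} then yields $\left(\quot{\phi}{\pmb{d}}\right)^{\ast}(\pmb{b}) = \quot{\phi}{\pmb{d}\pmb{b}}(0) = \phi^{\ast}(\pmb{d}\pmb{b})$.

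For (ii), the crucial observations are that $\psi^{\ast}(\pmb{b}) = \quot{\psi}{\pmb{b}}(0)$ is a single bit and that $\psi(\pmb{b}0) = \psi(\pmb{b})\,\psi^{\ast}(\pmb{b})$ by Lemma~\ref{lem:phi-quot}~(i). Applying $\phi$ and Lemma~\ref{lem:phi-quot}~(i) once more gives $(\phi \circ \psi)(\pmb{b}0) = \phi(\psi(\pmb{b}))\,\quot{\phi}{\psi(\pmb{b})}(\psi^{\ast}(\pmb{b}))$. Since $\quot{\phi}{\psi(\pmb{b})}$ is a length-preserving bijection of $\mathcal{C}^{\ast}$ by Lemmas~\ref{lem:phi-quot}~(ii) and~\ref{lem:phi-bijection}, it permutes $\{0,1\}$, so for the single bit $c = \psi^{\ast}(\pmb{b})$ we get $\quot{\phi}{\psi(\pmb{b})}(c) = c \oplus \quot{\phi}{\psi(\pmb{b})}(0) = c \oplus \phi^{\ast}(\psi(\pmb{b}))$ by Definition~\ref{def:phi-star}. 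Hence $(\phi \circ \psi)(\pmb{b}0)$ is $(\phi \circ \psi)(\pmb{b})$ followed by the single bit $\psi^{\ast}(\pmb{b}) \oplus \phi^{\ast}(\psi(\pmb{b}))$, and therefore $(\phi \circ \psi)^{\ast}(\pmb{b}) = (\phi \circ \psi)(\pmb{b})^{-1}(\phi \circ \psi)(\pmb{b}0) = \phi^{\ast}(\psi(\pmb{b})) \oplus \psi^{\ast}(\pmb{b})$; note $\phi \circ \psi \in \Phi_k$ by Lemma~\ref{lem:Phi-group}, so the left-hand side is legitimate.

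For (iii), I would merely specialize (ii), applying it with $\phi$ replaced by $\phi^{-1}$ and $\psi$ by $\phi$ (both in $\Phi_k$ by Lemma~\ref{lem:Phi-group}): this gives $(\phi^{-1} \circ \phi)^{\ast}(\pmb{b}) = (\phi^{-1})^{\ast}(\phi(\pmb{b})) \oplus \phi^{\ast}(\pmb{b})$, and since $\phi^{-1} \circ \phi = \mathrm{id}_{\mathcal{C}^{\ast}}$ and $\mathrm{id}_{\mathcal{C}^{\ast}}^{\ast}(\pmb{b}) = 0$ for all $\pmb{b}$ (immediate from Definitions~\ref{def:phi-quot} and~\ref{def:phi-star}), we conclude $(\phi^{-1})^{\ast}(\phi(\pmb{b})) = \phi^{\ast}(\pmb{b})$. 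The hard part will be the bookkeeping in part (ii): one must keep careful track of the empty and one-letter sequences through the $\pmb{x}^{-1}$ manipulations and confirm that $\quot{\phi}{\psi(\pmb{b})}$ genuinely maps single bits to single bits before exploiting the additive structure on $\mathcal{C}$; everything else is routine substitution.
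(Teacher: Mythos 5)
Your proposal is correct and follows essentially the same route as the paper: unwind Definitions~\ref{def:phi-quot} and~\ref{def:phi-star}, factorize via Lemma~\ref{lem:phi-quot}~(i), use that elements of $\Phi_k$ permute $\{0,1\}$ on single letters, and derive (iii) from (ii) via $\phi^{-1}\circ\phi = \mathrm{id}_{\mathcal{C}^{\ast}}$. The only cosmetic differences are that you package the intermediate step in (i) as an explicit chain rule $\quot{\left(\quot{\phi}{\pmb{d}}\right)}{\pmb{b}} = \quot{\phi}{\pmb{d}\pmb{b}}$ where the paper just cancels inline, and your deduction of (iii) is slightly more direct than the paper's ``add $\phi^{\ast}(\pmb{b}) \oplus \phi^{\ast}(\pmb{b})$'' manipulation.
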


\begin{example}
\label{ex:reduced-encode}
We show the encoding process of $\pmb{x} = \mathrm{acdb}$ with the RCT $\tilde{F}$ in Table \ref{tab:reduced-kai}.
We choose $(\{00, 01, 10, 11\}, \phi_{000})$ as the index of the first code table.
Then the encoding process is as follows.

\begin{itemize}
\item $x_1 = \mathrm{a}$ is encoded by $\hat{f}_{(\{00, 01, 10, 11\}, \phi_{000})}$ to
\begin{equation*}
\hat{f}_{(\{00, 01, 10, 11\}, \phi_{000})}(\mathrm{a})
= \phi_{000}(\tilde{f}_{\{00, 01, 10, 11\}}(\mathrm{a}))
= \phi_{000}(1)
= 1
\end{equation*}
and the next index is 
\begin{equation*}
\hat{\trans}_{(\{00, 01, 10, 11\}, \phi_{000})}(\mathrm{a})
= \left( \{00, 10\}, \quot{\phi_{000}}{1} \circ \phi_{010}\right)
= \left( \{00, 10\}, \phi_{010}\right).
\end{equation*}

\item $x_2 = \mathrm{c}$ is encoded by $\hat{f}_{(\{00, 10\}, \phi_{010})}$ to
\begin{equation*}
\hat{f}_{(\{00, 10\}, \phi_{010})}(\mathrm{c})
= \phi_{010}(\tilde{f}_{\{00, 10\}}(\mathrm{c}))
= \phi_{010}(\lambda)
= \lambda
\end{equation*}
and the next index is 
\begin{equation*}
\hat{\trans}_{(\{00, 10\}, \phi_{010})}(\mathrm{c})
= \left( \{00\}, \quot{\phi_{010}}{\lambda} \circ \phi_{000}\right)\
= \left( \{00\}, \phi_{010}\right).
\end{equation*}

\item $x_3 = \mathrm{d}$ is encoded by $\hat{f}_{(\{00\}, \phi_{010})}$ to
\begin{equation*}
\hat{f}_{(\{00\}, \phi_{010})}(\mathrm{d})
= \phi_{010}(\tilde{f}_0(\mathrm{d}))
= \phi_{010}(001)
= 011
\end{equation*}
and the next index is 
\begin{equation*}
\hat{\trans}_{(\{00\}, \phi_{010})}(\mathrm{d})
= \left( \{00, 10\}, \quot{\phi_{010}}{001} \circ \phi_{001}\right)
= \left( \{00, 10\}, \phi_{001}\right).
\end{equation*}

\item $x_4 = \mathrm{b}$ is encoded by $\hat{f}_{(\{00, 10\}, \phi_{001})}$ to
\begin{equation*}
\hat{f}_{(\{00, 10\}, \phi_{001})}(\mathrm{b})
= \phi_{001}(\tilde{f}_{\{00, 10\}}(\mathrm{b}))
= \phi_{001}(1001)
= 1101
\end{equation*}
and the next index is 
\begin{equation*}
\hat{\trans}_{(\{00, 10\}, \phi_{001})}(\mathrm{b})
= \left( \{00, 01, 10, 11\}, \quot{\phi_{001}}{1001} \circ \phi_{000}\right)
= \left( \{00, 01, 10, 11\}, \phi_{000}\right).
\end{equation*}
\end{itemize}

Consequently, the codeword sequence is
\begin{equation*}
\hat{f}^{\ast}_{(\{00, 01, 10, 11\}, \phi_{000})}(\mathrm{acdb})
= \hat{f}_{(\{00, 01, 10, 11\}, \phi_{000})}(\mathrm{a})\hat{f}_{(\{00, 10\}, \phi_{010})}(\mathrm{c})\hat{f}_{(\{00\}, \phi_{010})}(\mathrm{d})\hat{f}_{(\{00, 10\}, \phi_{001})}(\mathrm{b})
= 1 011 1101.
\end{equation*}
\end{example}

\begin{example}
We show the decoding process of the codeword sequence $10111101$ obtained in Example \ref{ex:reduced-encode}.
We suppose that the given codeword sequence is $\pmb{b} = \hat{f}^{\ast}_{(\{00, 01, 10, 11\}, \phi_{000})}(\mathrm{acdb})00 = 1011110100$, that is, the additional $2$ bits $00$ noted in Remark \ref{rem:not-unique} is appended to the codeword sequence.
The decoder is also shared $(\{00, 01, 10, 11\}, \phi_{000})$, the initial index of encoding, with the encoder.

\begin{itemize}
\item At the beginning of the first iteration, the initial configuration is $(A, \phi) = (\{00, 01, 10, 11\}, \phi_{000}), \pmb{b} = 1011110100$.
Then
\begin{equation*}
\pmb{b}' = \phi_{000}^{-1}(\pmb{b}) = \phi_{000}(\pmb{b}) = 1011\ldots
\end{equation*}
and 
\begin{equation*}
\tilde{f}_{\{00, 01, 10, 11\}}(\mathrm{a}) \tilde{\trans}_{\{00, 01, 10, 11\}}(\mathrm{a}) = 1\{01, 10\} \owns 101 \preceq \pmb{b}'.
\end{equation*}
Therefore, we recover $x_1 = \mathrm{a}$ and the next index is 
\begin{equation*}
\hat{\trans}_{(\{00, 01, 10, 11\}, \phi_{000})}(\mathrm{a})
= \left( \{00, 10\}, \quot{\phi_{000}}{1} \circ \phi_{010}\right)
= \left( \{00, 10\}, \phi_{010}\right).
\end{equation*}

\item At the beginning of the second iteration, the current configuration is $(A, \phi) = \left( \{00, 10\}, \phi_{010}\right), \pmb{b} = 011110100$.
Then
\begin{equation*}
\pmb{b}' = \phi_{010}^{-1}(\pmb{b}) = \phi_{010}(\pmb{b}) = 0011\ldots
\end{equation*}
and 
\begin{equation*}
\tilde{f}_{\{00, 10\}}(\mathrm{c}) \tilde{\trans}_{\{00, 10\}}(\mathrm{c}) = \{00\} \owns 00 \preceq \pmb{b}'.
\end{equation*}
Therefore, we recover $x_2 = \mathrm{c}$ and the next index is
\begin{equation*}
\hat{\trans}_{(\{00, 10\}, \phi_{010})}(\mathrm{c})
= \left( \{00\}, \quot{\phi_{010}}{\lambda} \circ \phi_{000}\right)\
= \left( \{00\}, \phi_{010}\right).
\end{equation*}

\item At the beginning of the third iteration, the current configuration is $(A, \phi) = \left( \{00\}, \phi_{010}\right), \pmb{b} = 011110100$.
Then
\begin{equation*}
\pmb{b}' = \phi_{010}^{-1}(\pmb{b}) = \phi_{010}(\pmb{b}) = 00111\ldots
\end{equation*}
and 
\begin{equation*}
\tilde{f}_{\{00\}}(\mathrm{d}) \tilde{\trans}_{\{00\}}(\mathrm{d}) = 001\{00, 11\} \owns 00111 \preceq \pmb{b}'.
\end{equation*}
Therefore, we recover $x_3 = \mathrm{d}$ and the next index is
\begin{equation*}
\hat{\trans}_{(\{00\}, \phi_{010})}(\mathrm{d})
= \left( \{00, 10\}, \quot{\phi_{010}}{001} \circ \phi_{001}\right)
= \left( \{00, 10\}, \phi_{001}\right).
\end{equation*}

\item At the beginning of the fourth iteration, the current configuration is $(A, \phi) = \left( \{00, 10\}, \phi_{001}\right)), \pmb{b} = 110100$.
Then
\begin{equation*}
\pmb{b}' = \phi_{001}^{-1}(\pmb{b}) = \phi_{001}(\pmb{b}) = 100100
\end{equation*}
and 
\begin{equation*}
\tilde{f}_{\{00, 10\}}(\mathrm{b}) \tilde{\trans}_{\{00, 10\}}(\mathrm{b}) = 1001\{00, 01, 10, 11\} \owns 110100 \preceq \pmb{b}'.
\end{equation*}
Therefore, we recover $x_4 = \mathrm{b}$ and the next index is
\begin{equation*}
\hat{\trans}_{(\{00, 10\}, \phi_{001})}(\mathrm{b})
= \left( \{00, 01, 10, 11\}, \quot{\phi_{001}}{1001} \circ \phi_{000}\right)
= \left( \{00, 01, 10, 11\}, \phi_{000}\right).
\end{equation*}

\item At the beginning of the fifth iteration, the current configuration is $(A, \phi) = \left( \{00, 01, 10, 11\}, \phi_{000}\right), \pmb{b} = 00$.
Now, since there exists no $s \in \mathcal{S}$ satisfying (\ref{eq:dqovdwqo208r}), we terminate the decoding process.
 \end{itemize}

Consequently, we recover the original source sequence $\pmb{x} = \mathrm{acdb}$ correctly.

Note that in each step of the decoding process, it is not necessary to calculate the entire sequence $\pmb{b}' = \phi^{-1}(\pmb{b})$, but only a sufficiently long prefix of $\pmb{b}'$.
\end{example}

\section{Conclusion}
\label{sec:conclusion}

In this paper, we proposed a method to reduce the number of code tables to be considered to discuss $k$-bit delay optimal code-tuples.

We first defined the set $\Phi_k$ of mappings and divided $\mathscr{C}^k$ into equivalence classes using $\Phi_k$.
Then we introduced a reduced-code-tuple (RCT), in which each equivalence class is associated with at most one code table.
We next proved Theorem \ref{thm:equiv} that for any $\tilde{F} \in \tilde{\mathscr{F}}_{k\hopt}$, there exists $F \in \mathscr{F}_{k\hopt}$ and vice versa. 
This theorem implies that the construction of a $k$-bit delay optimal code-tuple can be reduced to the construction of a $k$-bit delay optimal RCT.
We also presented a coding procedure using an RCT equivalent to the coding procedure using a code-tuple.

Since the number of the equivalence classes is dramatically less than the upper bound $2^{(2^k)}$ in \cite{IEICE2023} as shown by Theorem \ref{thm:orbit-num} (cf.~Table \ref{tab:orbit-num}),
we can dramatically reduce the number of code tables to be considered in the theoretical analysis, code construction, and coding process by replacing code-tuples with RCTs.

As mentioned in Remark \ref{rem:differ}, the upper bound $2^{(2^k)}$ can be slightly improved to $2^{(2^k)} - 2^{(2^{k-1} + 1)} + 1$ by \cite{JSAIT2022}.
Applying this fact to the results of this paper also slightly improves the upper bounds $a_k$ in Table \ref{tab:orbit-num} to $a'_k \coloneqq  a_{k-1}(a_{k-1}-1)/2$ in Table \ref{tab:orbit-num2}.
The upper bound $a'_2 = 3$ in Table \ref{tab:orbit-num2} can be improved further to $2$: it is sufficient to consider at most two code tables because it is proven by  \cite{Hashimoto2023} that there exists a $2$-bit delay optimal code-tuple that is also an AIFV-$2$ code.

  \begin{table*}
  \caption{The upper bound $a_k$ in Table \ref{tab:orbit-num} and the improved upper bound $a'_k$}
 \label{tab:orbit-num2}
  \centering
\begin{tabular}{c|rrrrrrr}
 \multicolumn{1}{c|}{$k$} & \multicolumn{1}{c}{1} & \multicolumn{1}{c}{2} & \multicolumn{1}{c}{3} & \multicolumn{1}{c}{4} & \multicolumn{1}{c}{5} & \multicolumn{1}{c}{6} & \multicolumn{1}{c}{7}  \\
\hline
$a_k$ & 3 & 6 & 21 & 231 & 26796 & 359026206  & 64449908476890321\\
$a'_k$ & 1 & 3 & 15 & 210 & 26565 & 358999410  & 64449908117864115
 \end{tabular}
  \end{table*}

\appendix

\section{The Proofs}
\label{sec:proofs}

\subsection{Proof of Lemma \ref{lem:pref}}
\label{subsec:proof-pref}

\begin{proof}[Proof of Lemma \ref{lem:pref}]
(Proof of (i))
For any $\pmb{c} \in \mathcal{C}^k$, we have
\begin{eqnarray*}
\pmb{c} \in \PREF^k_{F, i}
&\eqlab{A}{\iff}& {\exists}\pmb{x} \in \mathcal{S}^+ \,\,\mathtt{s.t.}\,\, f^{\ast}_i(\pmb{x}) \succeq \pmb{c}\\
&\iff& {\exists}s \in \mathcal{S}, \pmb{x} \in \mathcal{S}^{\ast} \,\,\mathtt{s.t.}\,\, f^{\ast}_i(s\pmb{x}) \succeq \pmb{c}\\
&\eqlab{B}\iff& {\exists}s \in \mathcal{S}, \pmb{x} \in \mathcal{S}^{\ast} \,\,\mathtt{s.t.}\,\, f_i(s)f^{\ast}_{\trans_i(s)}(\pmb{x}) \succeq \pmb{c}\\
&\eqlab{C}{\iff}& {\exists}s \in \mathcal{S} \,\,\mathtt{s.t.}\,\, \pmb{c} \in \left[ f_i(s)\PREF^{k}_{F, \trans_i(s)} \right]_k\\
&\iff& \pmb{c} \in \bigcup_{s \in \mathcal{S}} \left[f_i(s)\PREF^{k}_{F, \trans_i(s)} \right]_k,
\end{eqnarray*}
where
(A) follows from (\ref{eq:pref3}),
(B) follows from (\ref{eq:fstar}),
and (C) follows from (\ref{eq:pref3}).

(Proof of (ii))
For any $\pmb{c} \in \mathcal{C}^k$, we have
\begin{eqnarray*}
\pmb{c} \in \bar{\PREF}^k_{F, i}(\pmb{b})
&\eqlab{A}{\iff}& {\exists}\pmb{x} \in \mathcal{S}^+ \,\,\mathtt{s.t.}\,\, (f_i(x_1) \succ \pmb{b}, f^{\ast}_i(\pmb{x}) \succeq \pmb{bc})\\
&\iff& {\exists}s \in \mathcal{S}, \pmb{x} \in \mathcal{S}^{\ast} \,\,\mathtt{s.t.}\,\, (f_i(s) \succ \pmb{b}, f^{\ast}_i(s\pmb{x}) \succeq \pmb{bc})\\
&\eqlab{B}\iff& {\exists}s \in \mathcal{S}, \pmb{x} \in \mathcal{S}^{\ast} \,\,\mathtt{s.t.}\,\, (f_i(s) \succ \pmb{b}, f_i(s)f^{\ast}_{\trans_i(s)}(\pmb{x}) \succeq \pmb{b}\pmb{c})\\
&\iff& {\exists}s \in \mathcal{S}, \pmb{x} \in \mathcal{S}^{\ast} \,\,\mathtt{s.t.}\,\, (f_i(s) \succ \pmb{b}, \pmb{b}^{-1}f_i(s)f^{\ast}_{\trans_i(s)}(\pmb{x}) \succeq \pmb{c})\\
&\eqlab{C}{\iff}& {\exists}s \in \mathcal{S} \,\,\mathtt{s.t.}\,\, \left(f_i(s) \succ \pmb{b}, \pmb{c} \in \left[ \pmb{b}^{-1}f_i(s)\PREF^{k}_{F, \trans_i(s)} \right]_k \right)\\
&\iff& \pmb{c} \in \bigcup_{\substack{s \in \mathcal{S},\\ f_i(s) \succ \pmb{b}}} \left[ \pmb{b}^{-1}f_i(s)\PREF^{k}_{F, \trans_i(s)} \right]_k,
\end{eqnarray*}
where
(A) follows from (\ref{eq:pref2}),
(B) follows from (\ref{eq:fstar}),
and (C) follows from (\ref{eq:pref3}).
\end{proof}

\subsection{Proof of Lemma \ref{lem:closed}}
\label{subsec:proof-closed}

\begin{proof}[Proof of Lemma \ref{lem:closed}]
We choose and fix $p \in [F]$ arbitrarily and define a set $\mathcal{I} \subseteq [F]$ as
\begin{equation}
\label{eq:y5srs1ad8npa}
\mathcal{I} \coloneqq \{i \in [F] : {\forall}\pmb{x}' \in \mathcal{S}^{\ast}, \trans^{\ast}_i(\pmb{x}') \neq  p\}.
\end{equation}

Then $\mathcal{I}$ is a closed set of $F$: we have $\trans^{\ast}_{i}(\pmb{x}) \in \mathcal{I}$ for any $i \in \mathcal{I}$ and $\pmb{x} \in \mathcal{S}^{\ast}$ because 
\begin{equation*}
\forall\pmb{x}' \in \mathcal{S}^{\ast}, \trans^{\ast}_{\trans^{\ast}_i(\pmb{x})}(\pmb{x}')
\eqlab{A}{=} \trans^{\ast}_i(\pmb{xx}')
\eqlab{B}{\neq} p,
\end{equation*}
where
(A) follows from Lemma \ref{lem:f_T} (ii),
and (B) follows from $i \in \mathcal{I}$ and (\ref{eq:y5srs1ad8npa}).

Also, $\mathcal{I}$ is a proper subset of $[F]$:
\begin{equation*}
\mathcal{I}
\eqlab{A}{\subseteq} [F] \setminus \{p\}
\subsetneq [F],
\end{equation*}
where
(A) follows since $p \not\in \mathcal{I}$ from (\ref{eq:y5srs1ad8npa}).

Therefore, $\mathcal{I}$ is a closed proper subset of $[F]$.
Since $[F]$ is a minimal non-empty closed set of $F$, it must hold that $\mathcal{I} = \emptyset$.
Consequently, for any $i \in [F]$, there exists $\pmb{x}' \in \mathcal{S}^{\ast}$ such that $\trans^{\ast}_{i}(\pmb{x}') = p$.
Namely, we have $p \in \kernel_{F}$.

Since $p$ is arbitrarily chosen from $[F]$, we obtain $[F] = \kernel_F$ as desired.
\end{proof}

\subsection{Proof of Lemma \ref{lem:phi-bijection}}
\label{subsec:proof-phi-bijection}

\begin{proof}[proof of Lemma \ref{lem:phi-bijection}]
(Injectivity)
For any $\pmb{b}, \pmb{b}' \in \mathcal{C}^{\ast}$, we have
\begin{eqnarray*}
\phi(\pmb{b}) = \phi(\pmb{b}')
&\iff& \phi(\pmb{b}) \preceq \phi(\pmb{b}'), \phi(\pmb{b}) \succeq \phi(\pmb{b}')\\
&\eqlab{A}{\iff}& \pmb{b} \preceq \pmb{b}', \pmb{b} \succeq \pmb{b}'\\
&\iff& \pmb{b} = \pmb{b}'
\end{eqnarray*}
as desired,
where (A) follows from Definition \ref{def:Phi} (b).

(Surjectivity)
By Definition \ref{def:Phi} (a), we have
\begin{equation}
\label{eq:v6r2ot892s3h}
\phi(\mathcal{C}^l) \subseteq \mathcal{C}^l
\end{equation}
for any integer $l \geq 0$.
In (\ref{eq:v6r2ot892s3h}), the equality holds because
\begin{eqnarray*}
|\phi(\mathcal{C}^l)| = |\mathcal{C}^l|
\end{eqnarray*}
by the injectivity of $\phi$.
Namely, 
\begin{equation}
\label{eq:ll36goxxdomg}
\phi(\mathcal{C}^l) = \mathcal{C}^l
\end{equation}
holds for any integer $l \geq 0$.
Therefore, for any $\pmb{b}' \in \mathcal{C}^{\ast}$, there exists $\pmb{b} \in \mathcal{C}^{l}$ such that $\phi(\pmb{b}) = \pmb{b}'$, where $l = |\pmb{b}'|$.
\end{proof}

\subsection{Proof of Lemma \ref{lem:phi-quot}}
\label{subsec:proof-phi-quot}

\begin{proof}[Proof of Lemma \ref{lem:phi-quot}]
(Proof of (i))
We have
\begin{equation*}
\phi(\pmb{bb}')
\eqlab{A}{=} \phi(\pmb{b})\phi(\pmb{b})^{-1}\phi(\pmb{bb}')
\eqlab{B}{=}  \phi(\pmb{b})\quot{\phi}{\pmb{b}}(\pmb{b}'),
\end{equation*}
where
(A) follows since $\phi(\pmb{b}) \preceq \phi(\pmb{bb}')$ is implied by $\pmb{b} \preceq \pmb{bb}'$ and Definition \ref{def:Phi} (b),
and (B) follows from (\ref{eq:r6xuob4upqw5}).

(Proof of (ii))
We show $\quot{\phi}{\pmb{d}}$ satisfies Definition \ref{def:Phi} (a)--(c).
\begin{itemize}
\item (Definition \ref{def:Phi} (a)) For any $\pmb{b} \in \mathcal{C}^{\ast}$, we have
\begin{equation*}
\left|\quot{\phi}{\pmb{d}}(\pmb{b})\right|
\eqlab{A}{=}  |\phi(\pmb{d})^{-1}\phi(\pmb{d}\pmb{b})|
=  -|\phi(\pmb{d})| + |\phi(\pmb{d}\pmb{b})|
\eqlab{B}{=}  -|\pmb{d}| + |\pmb{d}\pmb{b}|
= |\pmb{b}|,
\end{equation*}
where
(A) follows from (\ref{eq:r6xuob4upqw5}),
and (B) follows from Definition \ref{def:Phi} (a).

\item (Definition \ref{def:Phi} (b))
For any $\pmb{b}, \pmb{b}' \in \mathcal{C}^{\ast}$, we have
\begin{eqnarray*}
\pmb{b} \preceq \pmb{b}'
&\iff& \pmb{d}\pmb{b} \preceq \pmb{d}\pmb{b}' \\
&\eqlab{A}{\iff}& \phi(\pmb{d}\pmb{b}) \preceq \phi(\pmb{d}\pmb{b}') \\
&\eqlab{B}{\iff}& \phi(\pmb{d})^{-1}\phi(\pmb{d}\pmb{b}) \preceq \phi(\pmb{d})^{-1}\phi(\pmb{d}\pmb{b}') \\
&\eqlab{C}{\iff}& \quot{\phi}{\pmb{d}}(\pmb{b}) \preceq \quot{\phi}{\pmb{d}}(\pmb{b}'),
\end{eqnarray*}
where
(A) follows from Definition \ref{def:Phi} (b),
(B) follows since $\phi(\pmb{d}) \preceq \phi(\pmb{db})$ and  $\phi(\pmb{d}) \preceq \phi(\pmb{db}')$ hold by Definition \ref{def:Phi} (b),
and (C) follows from (\ref{eq:r6xuob4upqw5}).

\item (Definition \ref{def:Phi} (c))
For any $\pmb{b} \in \mathcal{C}^{\geq K}$, we have
\begin{eqnarray*}
\suff^{K}\left(\quot{\phi}{\pmb{d}}(\pmb{b})\right)
&\eqlab{A}{=}&  \suff^{K}(\phi(\pmb{d})^{-1}\phi(\pmb{d}\pmb{b}))\\
&=&  \suff^{K+|\phi(\pmb{d})|}(\phi(\pmb{d}\pmb{b}))\\
&\eqlab{B}{=}&  \suff^{K+|\pmb{d}|}(\phi(\pmb{d}\pmb{b}))\\
&\eqlab{C}{=}&  \suff^{K+|\pmb{d}|-k}(\suff^k(\phi(\pmb{d}\pmb{b})))\\
&\eqlab{D}{=}&  \suff^{K+|\pmb{d}|-k}(\suff^k(\pmb{d}\pmb{b}))\\
&=&  \suff^{K+|\pmb{d}|}(\pmb{d}\pmb{b})\\
&=&  \suff^{K}(\pmb{b}),
\end{eqnarray*}
where
(A) follows from (\ref{eq:r6xuob4upqw5}),
(B) follows from Definition \ref{def:Phi} (a),
(C) follows from $K + |\pmb{d}| = \max\{0, k-|\pmb{d}|\} + |\pmb{d}| = \max\{|\pmb{d}|, k\} \geq k$,
and (D) follows from Definition \ref{def:Phi} (c).
\end{itemize}

Since $K = \max\{0, k - |\pmb{d}|\} \leq k$, we also have $\quot{\phi}{\pmb{d}} \in \Phi_K \subseteq \Phi_k$.
\end{proof}

\subsection{Proof of Lemma \ref{lem:phi-represent}}
\label{subsec:proof-phi-represent}

\begin{proof}[Proof of Lemma \ref{lem:phi-represent}]
(Proof of (i))
In general, for any $\psi \in \Phi_k$ and $c \in \mathcal{C}$, we have
\begin{equation}
\label{eq:3ggv1jl6ft06}
\psi(c) = c \oplus \psi(0),
\end{equation}
that is, $\psi(0) = 0 \oplus \psi(0)$ and $\psi(1) = 1 \oplus \psi(0)$
because $\psi(\{0, 1\}) = \{0, 1\}$ holds by (\ref{eq:ll36goxxdomg}).

Now, we prove the desired assertion (\ref{eq:uheypssietwo}) by induction on $|\pmb{b}|$.
The base case $|\pmb{b}| = 0$ is trivial.
For the induction step for $|\pmb{b}| \geq 1$, we have
\begin{eqnarray*}
\phi(\pmb{b}) &=& \phi([\pmb{b}]_{|\pmb{b}|-1} b_{|\pmb{b}|}) \\
&\eqlab{A}=& \phi([\pmb{b}]_{|\pmb{b}|-1}) \quot{\phi}{[\pmb{b}]_{|\pmb{b}|-1}}(b_{|\pmb{b}|}) \\
&\eqlab{B}=& (b_1\oplus\phi^{\ast}([\pmb{b}]_0))(b_2\oplus\phi^{\ast}([\pmb{b}]_1))\ldots (b_{|\pmb{b}|-1}\oplus\phi^{\ast}([\pmb{b}]_{|\pmb{b}|-2})) \quot{\phi}{[\pmb{b}]_{|\pmb{b}|-1}}(b_{|\pmb{b}|})\\
&\eqlab{C}=& (b_1\oplus\phi^{\ast}([\pmb{b}]_0))(b_2\oplus\phi^{\ast}([\pmb{b}]_1))\ldots (b_{|\pmb{b}|-1}\oplus\phi^{\ast}([\pmb{b}]_{|\pmb{b}|-2})) \left(b_{|\pmb{b}|}\oplus\quot{\phi}{[\pmb{b}]_{|\pmb{b}|-1}}(0)\right)\\
&\eqlab{D}=& (b_1\oplus\phi^{\ast}([\pmb{b}]_0))(b_2\oplus\phi^{\ast}([\pmb{b}]_1))\ldots (b_{|\pmb{b}|-1}\oplus\phi^{\ast}([\pmb{b}]_{|\pmb{b}|-2})) (b_{|\pmb{b}|}\oplus\phi^{\ast}([\pmb{b}]_{|\pmb{b}|-1}))
\end{eqnarray*}
as desired, where
(A) follows from Lemma \ref{lem:phi-quot} (i),
(B) follows from the induction hypothesis,
(C) follows from (\ref{eq:3ggv1jl6ft06}),
and (D) follows from (\ref{eq:phi-star}).

(Proof of (ii))
We have
\begin{eqnarray*}
\phi^{\ast}(\pmb{d})
&\eqlab{A}=&\quot{\phi}{\pmb{d}}(0)\\
&=& \suff^{|\phi(\pmb{d})|}\left(\phi(\pmb{d}) \quot{\phi}{\pmb{d}}(0)\right)\\
&\eqlab{B}{=}& \suff^{|\phi(\pmb{d})|}(\phi(\pmb{d}0)) \\
&\eqlab{C}{=}& \suff^{|\pmb{d}|}(\phi(\pmb{d}0)) \\
&\eqlab{D}{=}& \suff^{|\pmb{d}|-k}(\suff^k(\phi(\pmb{d}0))) \\
&\eqlab{E}{=}& \suff^{|\pmb{d}|-k}(\suff^k(\pmb{d}0)) \\
&=& \suff^{|\pmb{d}|}(\pmb{d}0) \\
&=& 0,
\end{eqnarray*}
where
(A) follows from (\ref{eq:phi-star}),
(B) follows from Lemma \ref{lem:phi-quot} (i),
(C) follows from Definition \ref{def:Phi} (a),
(D) follows since $|\pmb{d}|-k \geq 0$ by $\pmb{d} \in \mathcal{C}^{\geq k}$,
and (E) follows from Definition \ref{def:Phi} (c).
\end{proof}

\subsection{Proof of Theorem \ref{thm:orbit-num}}
\label{subsec:proof-orbit-num}

The proof of Theorem \ref{thm:orbit-num} relies on the following Lemma \ref{lem:obrit-num}.

\begin{lemma}
\label{lem:obrit-num}
Let $k \geq 1$ be an integer and let $A,  B \in \mathscr{C}^k$ be represented as $A = 0 A_0 \cup 1 A_1$ and $ B = 0 B_0 \cup 1 B_1$ by $A_0, A_1,  B_0,  B_1 \in \mathscr{C}^{k-1}$.
Then $A \sim_k  B$ if and only if
there exists $c \in \mathcal{C}$ such that $A_0 \sim_{k-1}  B_c$ and $A_1 \sim_{k-1}  B_{\bar{c}}$,
where $\bar{c}$ denotes the negation of $c \in \mathcal{C}$, that is, $\bar{0} \coloneqq 1$ and $\bar{1} \coloneqq 0$.
\end{lemma}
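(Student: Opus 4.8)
The plan is to exploit the tree-like structure of the mappings in $\Phi_k$: any such mapping either fixes or flips the first letter, and on each of the two branches indexed by $0$ and $1$ it restricts, via the operator $\quot{\phi}{\cdot}$, to a mapping in $\Phi_{k-1}$. Concretely, for $\phi\in\Phi_k$ with $k\geq 1$, Lemma \ref{lem:phi-quot} (ii) gives $\quot{\phi}{0},\quot{\phi}{1}\in\Phi_{k-1}$ (here $|0|=|1|=1$ and $\max\{0,k-1\}=k-1$), Lemma \ref{lem:phi-quot} (i) gives $\phi(d\pmb{y})=\phi(d)\,\quot{\phi}{d}(\pmb{y})$ for $d\in\mathcal{C}$ and $\pmb{y}\in\mathcal{C}^{\ast}$, and Lemma \ref{lem:phi-represent} (i) applied to single letters gives $\phi(d)=d\oplus\phi^{\ast}(\lambda)$, so that $\phi$ sends $0\mathcal{C}^{\ast}$ and $1\mathcal{C}^{\ast}$ onto $\phi(0)\mathcal{C}^{\ast}$ and $\phi(1)\mathcal{C}^{\ast}=\overline{\phi(0)}\mathcal{C}^{\ast}$. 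Throughout I also use that the decomposition $C=0C_0\cup 1C_1$ of any $C\in\mathscr{C}^k$ into $C_0,C_1\in\mathscr{C}^{k-1}$ is unique, since $0\mathcal{C}^{k-1}$ and $1\mathcal{C}^{k-1}$ partition $\mathcal{C}^k$.

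For the necessity, suppose $\phi\in\Phi_k$ satisfies $\phi(A)=B$ and set $c\coloneqq\phi^{\ast}(\lambda)\in\mathcal{C}$, so that $\phi(0)=c$ and $\phi(1)=\bar{c}$. Applying $\phi$ to $A=0A_0\cup1A_1$ and using the facts above,
\[
B=\phi(A)=\phi(0A_0)\cup\phi(1A_1)=c\,\quot{\phi}{0}(A_0)\cup\bar{c}\,\quot{\phi}{1}(A_1).
\]
Reading off the two branches of this equality and invoking uniqueness of the decomposition of $B$ gives $\quot{\phi}{0}(A_0)=B_0,\ \quot{\phi}{1}(A_1)=B_1$ when $c=0$, and $\quot{\phi}{0}(A_0)=B_1,\ \quot{\phi}{1}(A_1)=B_0$ when $c=1$; in both cases $\quot{\phi}{0}(A_0)=B_c$ and $\quot{\phi}{1}(A_1)=B_{\bar{c}}$. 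Since $\quot{\phi}{0},\quot{\phi}{1}\in\Phi_{k-1}$, this yields $A_0\sim_{k-1}B_c$ and $A_1\sim_{k-1}B_{\bar{c}}$, as wanted.

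For the sufficiency, suppose $c\in\mathcal{C}$ and $\psi_0,\psi_1\in\Phi_{k-1}$ satisfy $\psi_0(A_0)=B_c$ and $\psi_1(A_1)=B_{\bar{c}}$. Define $\phi\colon\mathcal{C}^{\ast}\to\mathcal{C}^{\ast}$ by $\phi(\lambda)\coloneqq\lambda$ and $\phi(d\pmb{y})\coloneqq(d\oplus c)\,\psi_d(\pmb{y})$ for $d\in\mathcal{C}$, $\pmb{y}\in\mathcal{C}^{\ast}$, and then verify $\phi\in\Phi_k$ directly from Definition \ref{def:Phi}: (a) is immediate since $\psi_0,\psi_1$ preserve length; (b) reduces, for non-empty arguments $d\pmb{y}$ and $d'\pmb{y}'$, to $d\oplus c=d'\oplus c\iff d=d'$ together with property (b) of $\psi_d$; and (c) for $\pmb{b}=b_1\pmb{b}'\in\mathcal{C}^{\geq k}$ follows from $\suff^k(\phi(\pmb{b}))=\suff^{k-1}(\psi_{b_1}(\pmb{b}'))=\suff^{k-1}(\pmb{b}')=\suff^k(\pmb{b})$, using $\pmb{b}'\in\mathcal{C}^{\geq k-1}$ and property (c) of $\psi_{b_1}$. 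With $\phi\in\Phi_k$ in hand, $\phi(A)=\phi(0A_0)\cup\phi(1A_1)=c\,\psi_0(A_0)\cup\bar{c}\,\psi_1(A_1)=cB_c\cup\bar{c}B_{\bar{c}}$, and this set equals $0B_0\cup1B_1=B$ whether $c=0$ or $c=1$; hence $A\sim_kB$.

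The argument is mostly bookkeeping, so the main thing to be careful about is tracking the bit-flip $c$ through the decomposition: when $c=1$ the blocks $A_0,A_1$ line up with $B_1,B_0$ rather than with $B_0,B_1$, which is exactly why the statement asks only for the existence of \emph{some} $c\in\mathcal{C}$. The other step that needs genuine, if routine, verification is that the piecewise-defined $\phi$ in the sufficiency part really lies in $\Phi_k$ and not merely in the larger class of length-preserving, prefix-monotone bijections; this is where condition (c) of Definition \ref{def:Phi} and the hypothesis $k\geq 1$ enter.
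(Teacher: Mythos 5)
Your proof is correct and takes essentially the same route as the paper: necessity by setting $c = \phi(0)$ (your $\phi^{\ast}(\lambda)$ is identical, since $\quot{\phi}{\lambda}=\phi$) and applying Lemma \ref{lem:phi-quot} to restrict to $\quot{\phi}{0},\quot{\phi}{1}\in\Phi_{k-1}$; sufficiency by defining $\phi$ piecewise from $\psi_0,\psi_1\in\Phi_{k-1}$ and checking Definition \ref{def:Phi} (a)--(c) directly, which matches (\ref{eq:lfc0hqyuauug}) in the paper. The only cosmetic differences are notational (your $(d\oplus c)\psi_d(\pmb{y})$ versus the paper's explicit case split on $b_1$), and the bookkeeping is handled correctly, including the $c=1$ case where $A_0,A_1$ line up with $B_1,B_0$.
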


\begin{proof}[Proof of Lemma \ref{lem:obrit-num}]

(Necessity)
Assume $A \sim_k  B$.
Then there exists $\phi \in \Phi_k$ such that
\begin{equation}
\label{eq:mnek0mjij7au}
\phi(A) =  B.
\end{equation}
Putting $c \coloneqq \phi(0)$, we have $\bar{c} = \phi(1)$; and thus
\begin{eqnarray*}
c\quot{\phi}{0}(A_0) \cup \bar{c}\quot{\phi}{1}(A_1)
&=& \phi(0)\quot{\phi}{0}(A_0) \cup \phi(1)\quot{\phi}{1}(A_1)\\
&\eqlab{A}{=}&  \phi(0 A_0) \cup \phi(1 A_1)\\
&\eqlab{B}{=}&  \phi(0 A_0 \cup 1 A_1)\\
&=& \phi(A)\\
&\eqlab{C}=&  B\\
&=& c B_c \cup \bar{c} B_{\bar{c}},
\end{eqnarray*}
where
(A) follows from Lemma \ref{lem:phi-quot} (i),
(B) follows from Lemma \ref{lem:phi-bijection},
and (C) follows from (\ref{eq:mnek0mjij7au}).
Comparing both sides, we obtain $B_c = \quot{\phi}{0}(A_0)$ and $B_{\bar{c}} = \quot{\phi}{1}(A_1)$.
Since $\quot{\phi}{0} , \quot{\phi}{1} \in \Phi_{k-1}$ by Lemma \ref{lem:phi-quot} (ii), it follows that 
$A_0 \sim_{k-1}  B_c$ and $A_1 \sim_{k-1}  B_{\bar{c}}$  as desired.

(Sufficiency)
Assume that there exists $c \in \mathcal{C}$ such that $A_0 \sim_{k-1}  B_c$ and $A_1 \sim_{k-1}  B_{\bar{c}}$.
Then there exist $\phi_0, \phi_1 \in \Phi_{k-1}$ such that
\begin{equation}
\label{eq:96gxg4g80e6c}
\phi_0(A_0) =  B_c, \quad \phi_1(A_1) =  B_{\bar{c}}.
\end{equation}
We define $\phi \colon \mathcal{C}^{\ast} \to \mathcal{C}^{\ast}$ as
\begin{equation}
\label{eq:lfc0hqyuauug}
\phi(\pmb{b}) = 
\begin{cases}
\lambda &\,\,\text{if}\,\, \pmb{b} = \lambda,\\
c\phi_0(\suff(\pmb{b})) &\,\,\text{if}\,\, \pmb{b} \succeq 0,\\
\bar{c}\phi_1(\suff(\pmb{b})) &\,\,\text{if}\,\, \pmb{b} \succeq 1
\end{cases}
\end{equation}
for $\pmb{b} \in \mathcal{C}^{\ast}$.
We show $\phi \in \Phi$, that is, $\phi$ satisfies Definition \ref{def:Phi} (a)--(c).

\begin{itemize}
\item (Definition \ref{def:Phi} (a)) We have
\begin{eqnarray*}
|\phi(\pmb{b})|
&\eqlab{A}{=}&\begin{cases}
|\lambda| &\,\,\text{if}\,\, \pmb{b} = \lambda,\\
|c\phi_0(\suff(\pmb{b}))| &\,\,\text{if}\,\, \pmb{b} \succeq 0,\\
|\bar{c}\phi_1(\suff(\pmb{b}))| &\,\,\text{if}\,\, \pmb{b} \succeq 1
\end{cases}\\
&=& \begin{cases}
|\pmb{b}| &\,\,\text{if}\,\, \pmb{b} = \lambda,\\
1+|\phi_{b_1}(\suff(\pmb{b}))| &\,\,\text{if}\,\, \pmb{b} \neq \lambda,\\
\end{cases}\\
&\eqlab{B}{=}& \begin{cases}
|\pmb{b}| &\,\,\text{if}\,\, \pmb{b} = \lambda,\\
1+|\suff(\pmb{b})| &\,\,\text{if}\,\, \pmb{b} \neq \lambda,\\
\end{cases}\\
&=& |\pmb{b}|,
\end{eqnarray*}
where
$b_1$ denotes the first letter of $\pmb{b}$, and
(A) follows from (\ref{eq:lfc0hqyuauug}),
and (B) follows since $\phi_{b_1}$ satisfies Definition \ref{def:Phi} (a).

\item(Definition \ref{def:Phi} (b))
($\implies$)
Assume $\pmb{b} \preceq \pmb{b}'$.
If $\pmb{b} = \lambda$, then
\begin{equation*}
\phi(\pmb{b}) = \phi(\lambda) \eqlab{A}{=} \lambda \preceq \phi(\pmb{b}')
\end{equation*}
as desired, where
(A) follows from the first case of (\ref{eq:lfc0hqyuauug}).
Now, we assume $\pmb{b} \neq \lambda$.
Then by $\pmb{b} \preceq \pmb{b}'$, it follows that $\suff(\pmb{b}) \preceq \suff(\pmb{b}')$,
which yields
\begin{equation}
\label{eq:hvqkex16o0fn}
\phi_0(\suff(\pmb{b})) \preceq \phi_0(\suff(\pmb{b}'))
\end{equation}
since $\phi_0 \in \Phi$ satisfies Definition \ref{def:Phi} (b).

We consider the case $b_1 = 0$ and the case $b_1 = 1$ separately.
In the case $b_1 = 0$, we have
\begin{equation*}
\phi(\pmb{b})
\eqlab{A}{=} c\phi_0(\suff(\pmb{b}))
\eqlab{B}{\preceq} c\phi_0(\suff(\pmb{b}'))
\eqlab{C}{=} \phi(\pmb{b}'),
\end{equation*}
where
(A) follows from the second case of (\ref{eq:lfc0hqyuauug}),
(B) follows from (\ref{eq:hvqkex16o0fn}),
and (C) follows from the second case of (\ref{eq:lfc0hqyuauug}).
For the other case $b_1 = 1$, the symmetric argument can be applied.

($\impliedby$)
Assume $\phi(\pmb{b}) \preceq \phi(\pmb{b}')$.
If $\phi(\pmb{b}) = \lambda$, then
\begin{equation*}
\pmb{b} \eqlab{A}{=} \lambda \preceq \pmb{b}'
\end{equation*}
as desired, where
(A) follows from (\ref{eq:lfc0hqyuauug}) and $\phi(\pmb{b}) = \lambda$.
Now, we assume $\phi(\pmb{b}) \neq \lambda$.
Then by $\phi(\pmb{b}) \preceq \phi(\pmb{b}')$, the first letters of $\phi(\pmb{b})$ and $\phi(\pmb{b}')$ must be equal:
we have exactly one of $c \preceq \phi(\pmb{b}) \preceq \phi(\pmb{b}')$ and $\bar{c} \preceq \phi(\pmb{b}) \preceq \phi(\pmb{b}')$.

We now consider the case $c \preceq \phi(\pmb{b}) \preceq \phi(\pmb{b}')$, which is possible only if
\begin{equation}
\label{eq:onyo2e2duoah}
b_1 = b'_1 = 0
\end{equation}
and accordingly
\begin{equation*}
\phi(\pmb{b}) = c\phi_0(\suff(\pmb{b})), \quad \phi(\pmb{b}') = c\phi_0(\suff(\pmb{b}'))
\end{equation*}
by (\ref{eq:lfc0hqyuauug}).
Hence, we have
\begin{equation*}
c\phi_0(\suff(\pmb{b})) = \phi(\pmb{b}) \preceq \phi(\pmb{b}') = c\phi_0(\suff(\pmb{b}'))
\end{equation*}
and thus
\begin{equation*}
\phi_0(\suff(\pmb{b})) \preceq \phi_0(\suff(\pmb{b'})),
\end{equation*}
which yields
\begin{equation}
\label{eq:72vprqs7olk2}
\suff(\pmb{b}) \preceq \suff(\pmb{b'})
\end{equation}
since $\phi_0 \in \Phi$ satisfies Definition \ref{def:Phi} (b).
Therefore, we obtain
\begin{equation*}
\pmb{b} = b_1\suff(\pmb{b})
\eqlab{A}{=} b'_1\suff(\pmb{b})
\eqlab{B}{\preceq} b'_1\suff(\pmb{b'}) = \pmb{b}'
\end{equation*}
as desired, where 
(A) follows from (\ref{eq:onyo2e2duoah}),
and (B) follows from (\ref{eq:72vprqs7olk2}).
In the other case $\bar{c} \preceq \phi(\pmb{b}) \preceq \phi(\pmb{b}')$, the symmetric argument can be applied.

\item (Definition \ref{def:Phi} (c))
For any $\pmb{b} \in \mathcal{C}^{\geq k}$,
we have
\begin{eqnarray*}
\suff^k(\phi(\pmb{b}))
&\eqlab{A}{=}&\begin{cases}
\suff^k(c\phi_0(\suff(\pmb{b}))) &\,\,\text{if}\,\, \pmb{b} \succeq 0,\\
\suff^k(\bar{c}\phi_1(\suff(\pmb{b}))) &\,\,\text{if}\,\, \pmb{b} \succeq 1
\end{cases}\\
&=&\begin{cases}
\suff^{k-1}(\phi_0(\suff(\pmb{b}))) &\,\,\text{if}\,\, \pmb{b} \succeq 0,\\
\suff^{k-1}(\phi_1(\suff(\pmb{b}))) &\,\,\text{if}\,\, \pmb{b} \succeq 1
\end{cases}\\
&\eqlab{B}{=}& \suff^{k-1}(\suff(\pmb{b}))\\
&=& \suff^k(\pmb{b}),
\end{eqnarray*}
where
(A) follows from (\ref{eq:lfc0hqyuauug}) since $\pmb{b} \neq \lambda$ by the assumption $k \geq 1$,
and (B) follows from $\phi_0, \phi_1 \in \Phi_{k-1}$ and Definition \ref{def:Phi} (c).
\end{itemize}

Therefore, we conclude that $\phi \in \Phi_k$, and we obtain 
\begin{equation*}
\phi(A)
= \phi(0 A_0 \cup 1 A_1)
\eqlab{A}{=} \phi(0 A_0) \cup \phi(1 A_1)
\eqlab{B}{=} c\phi_0(A_0) \cup \bar{c}\phi_1(A_1)
\eqlab{C}{=} c B_c \cup \bar{c} B_{\bar{c}}
=  B,
\end{equation*}
so that $A \sim  B$,
where
(A) follows from Lemma \ref{lem:phi-bijection},
(B) follows from the second and third cases of (\ref{eq:lfc0hqyuauug}),
and (C) follows from (\ref{eq:96gxg4g80e6c}).
\end{proof}

\begin{proof}[Proof of Theorem \ref{thm:orbit-num}]
For $k = 0$, we have $\mathscr{C}^0 / {\sim_0} = \{\emptyset, \{\lambda\}\}$, and thus $a_0 = 2$.
For $k \geq 1$, two elements $A, B \in \mathscr{C}^k$ satisfy $A \sim_k  B$ if and only if unordered pairs $\{\langle  A_0 \rangle, \langle  A_1 \rangle\}$ and $\{\langle  B_0 \rangle, \langle  B_1 \rangle\}$ are identical by Lemma \ref{lem:obrit-num}, 
where the unordered pairs are considered as multisets, that is, $\langle  A_0 \rangle$ and $\langle  A_1 \rangle$ (resp.~$\langle  B_0 \rangle$ and $\langle  B_1 \rangle$) can be the same elements of $\mathscr{C}^{k-1} / {\sim}_{k-1}$.
Therefore, $a_k$ is equal to the number of ways to choose an unordered pair of equivalence classes in $\mathscr{C}^{k-1} / {\sim}_{k-1}$, that is, $a_k = \binom{a_{k-1}+1}{2}$.
\end{proof}

\subsection{Proof of Theorem \ref{thm:equiv} (i)}
\label{subsec:proof-equiv1}

The proof of Theorem \ref{thm:equiv} (i) relies on the following Lemmas \ref{lem:irr}--\ref{lem:to-semi}.
See \cite{IEICE2023} for the proofs of Lemmas \ref{lem:irr}--\ref{lem:improve}.

\begin{lemma}[{\cite[Lemmas 9, 10]{IEICE2023}}]
\label{lem:irr}
For any $F=(f, \trans) \in \mathscr{F}_{\reg}$, there exists $\bar{F} = (\bar{f}, \bar{\trans}) \in \mathscr{F}_{\irr}$ satisfying the following conditions (a)--(d).
\begin{enumerate}[(a)]
\item $L(\bar{F}) = L(F)$.
\item $[\bar{F}] = \kernel_F \subseteq [F]$.
\item For any $A \in [\bar{F}]$, it holds that $\bar{f}_{A} = f_{A}$.
\item For any $A \in [\bar{F}]$, it holds that $\bar{\trans}_{A} = \trans_{A}$.
\end{enumerate}
\end{lemma}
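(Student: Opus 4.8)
The plan is to obtain $\bar F$ by simply discarding from $F$ every code table whose index lies outside $\kernel_F$, and then to check that the restricted tuple still works. First I would record that $\kernel_F$ is a non-empty closed set of $F$ in the sense of Definition \ref{def:closed}: non-emptiness is Lemma \ref{lem:kernel} (i) together with $F \in \mathscr{F}_{\reg}$, and closedness follows because, given $i \in \kernel_F$ and $\pmb{x} \in \mathcal{S}^{\ast}$, for any $j \in [F]$ one picks $\pmb{y}$ with $\trans^{\ast}_j(\pmb{y}) = i$ and then $\trans^{\ast}_j(\pmb{y}\pmb{x}) = \trans^{\ast}_{\trans^{\ast}_j(\pmb{y})}(\pmb{x}) = \trans^{\ast}_i(\pmb{x})$ by Lemma \ref{lem:f_T} (ii), so $\trans^{\ast}_i(\pmb{x}) \in \kernel_F$. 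In particular $\trans_i(s) \in \kernel_F$ for all $s$, so it is legitimate to define $\bar F$ by $[\bar F] \coloneqq \kernel_F$ and $\bar f_A \coloneqq f_A$, $\bar\trans_A \coloneqq \trans_A$ for $A \in \kernel_F$; conditions (b)--(d) then hold by construction.

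Next I would verify $\bar F \in \mathscr{F}_{\irr}$. A short induction on $|\pmb{x}|$ using the recursions (\ref{eq:tstar}) and (\ref{eq:fstar}) shows $\bar\trans^{\ast}_j(\pmb{x}) = \trans^{\ast}_j(\pmb{x})$ (and likewise $\bar f^{\ast}_j = f^{\ast}_j$) for every $j \in \kernel_F$, the inductive step being valid because the first transition $\trans_j(x_1)$ again lies in $\kernel_F$. Hence, for $i, j \in \kernel_F$, choosing $\pmb{x}$ with $\trans^{\ast}_j(\pmb{x}) = i$ gives $\bar\trans^{\ast}_j(\pmb{x}) = i$, so $i \in \kernel_{\bar F}$; the reverse inclusion $\kernel_{\bar F} \subseteq [\bar F] = \kernel_F$ is automatic. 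Thus $\kernel_{\bar F} = [\bar F]$, i.e., $\bar F \in \mathscr{F}_{\irr} \subseteq \mathscr{F}_{\reg}$.

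For (a), the idea is that the unique stationary distribution of $\bar F$ is just $\pmb{\pi}(F)$ with its zero coordinates dropped. Since $\pi_i(F) = 0$ for $i \notin \kernel_F$ by Lemma \ref{lem:kernel} (ii), the vector $\pmb{\pi}' \coloneqq (\pi_i(F))_{i \in \kernel_F}$ satisfies $\sum_{i \in \kernel_F}\pi_i(F) = 1$; and because $Q_{i,j}(\bar F) = Q_{i,j}(F)$ for $i, j \in \kernel_F$ while the coordinates of $\pmb{\pi}(F)$ outside $\kernel_F$ vanish, the identity $\pmb{\pi}(F)Q(F) = \pmb{\pi}(F)$ restricted to the coordinates in $\kernel_F$ reads exactly $\pmb{\pi}'Q(\bar F) = \pmb{\pi}'$. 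By regularity of $\bar F$, this forces $\pi_i(\bar F) = \pi_i(F)$ for $i \in \kernel_F$, and since $L_i(\bar F) = L_i(F)$ there as well,
\begin{equation*}
L(\bar F) = \sum_{i \in [\bar F]}\pi_i(\bar F)L_i(\bar F) = \sum_{i \in \kernel_F}\pi_i(F)L_i(F) = \sum_{i \in [F]}\pi_i(F)L_i(F) = L(F),
\end{equation*}
the last step again using $\pi_i(F) = 0$ for $i \notin \kernel_F$.

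The closedness check and the induction are routine; the only step requiring genuine care is the claim that $\pmb{\pi}'$ solves the stationary equations of $\bar F$ and the subsequent appeal to uniqueness. The crux is that deleting the rows and columns of $Q(F)$ indexed outside $\kernel_F$ leaves the balance equations on $\kernel_F$ untouched precisely because the stationary mass on those deleted indices is zero — which is exactly what Lemma \ref{lem:kernel} (ii) provides — so no probability ``leaks'' in the restriction.
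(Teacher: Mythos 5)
Your proof is correct and uses the natural restriction-to-$\kernel_F$ construction that conditions (b)--(d) of the lemma already telegraph; the paper itself defers to \cite[Lemmas 9, 10]{IEICE2023} rather than reproducing a proof, but your argument supplies all the needed steps. The key points — closedness of $\kernel_F$ via Lemma \ref{lem:kernel} (i) and Lemma \ref{lem:f_T} (ii), that $\bar\trans^{\ast}_j = \trans^{\ast}_j$ on $\kernel_F$, and that dropping the zero-mass coordinates (Lemma \ref{lem:kernel} (ii)) of $\pmb{\pi}(F)$ yields the unique stationary distribution of $\bar F$ — are all justified correctly.
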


\begin{lemma}[{\cite[Lemma 13]{IEICE2023}}]
\label{lem:chooseone}
For any $F=(f, \trans) \in \mathscr{F}_{\irr}$, $\mathcal{I} \subseteq [F]$, and $P \in \mathcal{I}$,
the code-tuple $F'=(f', \trans') \in \mathscr{F}$ defined as (\ref{eq:v2572xoo0ye0})--(\ref{eq:hbm12zjixhzy}) satisfies $F' \in \mathscr{F}_{\reg}$:
\begin{align}
\label{eq:v2572xoo0ye0}
[F'] &\coloneqq [F]
\end{align}
and for $A \in [F']$ and $s \in \mathcal{S}$,
\begin{align}
f'_A(s) &\coloneqq f_A(s), \label{eq:cp8qdsjrwe14}\\
\trans'_A(s) &\coloneqq 
\begin{cases}
P & \,\,\text{if}\,\, \trans_A(s) \in \mathcal{I},\\
\trans_A(s) & \,\,\text{if}\,\, \trans_A(s) \not\in \mathcal{I}.
\end{cases}
\label{eq:hbm12zjixhzy}
\end{align}
\end{lemma}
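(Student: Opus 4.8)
The plan is to reduce to the combinatorial characterization of $\mathscr{F}_{\reg}$ in Lemma \ref{lem:kernel} (i): it suffices to show $\kernel_{F'} \neq \emptyset$, and I will in fact show the sharper statement $P \in \kernel_{F'}$, i.e.\ that for every $j \in [F'] = [F]$ there is some $\pmb{x} \in \mathcal{S}^{\ast}$ with $\trans'^{\ast}_j(\pmb{x}) = P$.

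Fix $j \in [F']$. If $j = P$ then $\trans'^{\ast}_j(\lambda) = j = P$ and we are done, so assume $j \neq P$. Since $F \in \mathscr{F}_{\irr}$ means $\kernel_F = [F] \owns P$, there is $\pmb{x} = x_1 x_2 \cdots x_n \in \mathcal{S}^{\ast}$ with $n \geq 1$ and $\trans^{\ast}_j(\pmb{x}) = P$. Track the two state-paths driven by $\pmb{x}$: put $j_0 \coloneqq j$ and $j_t \coloneqq \trans_{j_{t-1}}(x_t)$ for $1 \le t \le n$ (so $j_n = P$), and $j'_0 \coloneqq j$ and $j'_t \coloneqq \trans'_{j'_{t-1}}(x_t)$. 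By the defining rule (\ref{eq:hbm12zjixhzy}) of $\trans'$, these paths coincide up to the first step at which the $\trans'$-path attempts to enter $\mathcal{I}$: letting $t^{\ast}$ be the least index in $\{1, \dots, n\}$ with $\trans_{j'_{t^{\ast}-1}}(x_{t^{\ast}}) \in \mathcal{I}$, a straightforward induction on $t$ shows that for every $t < t^{\ast}$ one has $\trans_{j'_{t-1}}(x_t) \notin \mathcal{I}$ (by minimality of $t^{\ast}$), hence $\trans'_{j'_{t-1}}(x_t) = \trans_{j'_{t-1}}(x_t)$, and consequently $j'_t = j_t$. At step $t^{\ast}$ this gives $j'_{t^{\ast}} = \trans'_{j'_{t^{\ast}-1}}(x_{t^{\ast}}) = P$ since $\trans_{j'_{t^{\ast}-1}}(x_{t^{\ast}}) = \trans_{j_{t^{\ast}-1}}(x_{t^{\ast}}) \in \mathcal{I}$, so $\trans'^{\ast}_j([\pmb{x}]_{t^{\ast}}) = P$.

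It remains to see that such a $t^{\ast}$ actually exists. If it did not, then $\trans_{j'_{t-1}}(x_t) \notin \mathcal{I}$ for all $t \in \{1, \dots, n\}$, and the same induction yields $j'_t = j_t$ for every $t \le n$; in particular $P = j_n = j'_n = \trans'_{j'_{n-1}}(x_n) = \trans_{j'_{n-1}}(x_n) \notin \mathcal{I}$, contradicting $P \in \mathcal{I}$. Hence $t^{\ast}$ exists, $P$ is reachable from $j$ under $\trans'$, and thus $P \in \kernel_{F'}$, so $F' \in \mathscr{F}_{\reg}$ by Lemma \ref{lem:kernel} (i).

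The only point requiring care is the simultaneous bookkeeping of the invariant ``$j'_t = j_t$ and no redirection has yet occurred'' along the prefix $x_1 \cdots x_{t^{\ast}-1}$, which is exactly what makes the $\trans$- and $\trans'$-paths agree before the first redirection; granting that, both the reachability of $P$ and the existence of $t^{\ast}$ are immediate, so I do not anticipate a real obstacle.
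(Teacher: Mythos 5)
Your proof is correct. Since $F \in \mathscr{F}_{\irr}$ gives $\kernel_F = [F]$, every $j$ has a $\trans$-path to $P \in \mathcal{I}$; you correctly observe that the $\trans'$-path shadows the $\trans$-path until the first step at which $\trans$ would enter $\mathcal{I}$, at which moment $\trans'$ jumps to $P$, and that such a step must occur because the $\trans$-path ends at $P \in \mathcal{I}$. That establishes $P \in \kernel_{F'}$, hence $\kernel_{F'} \neq \emptyset$ and $F' \in \mathscr{F}_{\reg}$ by Lemma \ref{lem:kernel}~(i). The minor wrinkle of defining $t^{\ast}$ in terms of the $\trans'$-path rather than the $\trans$-path is harmless since the two coincide below $t^{\ast}$, as your induction shows; one could equivalently take $t^{\ast}$ to be the least $t$ with $j_t \in \mathcal{I}$, which exists because $j_n = P$. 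Note that the paper itself does not give a proof of this lemma -- it is imported verbatim from \cite[Lemma~13]{IEICE2023} -- so there is no in-paper argument to compare against; your route via Lemma \ref{lem:kernel}~(i) and reachability of $P$ is the natural one and matches the combinatorial characterization of regularity that the paper makes available.
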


\begin{lemma}[{\cite[Lemma 14]{IEICE2023}}]
\label{lem:potential}
For any $F \in \mathscr{F}_{\reg}$, there exists a vector $(h_A)_{A \in [F]} \in \mathbb{R}^{|F|}$ satisfying
\begin{equation}
\label{eq:potential}
{\forall}A \in [F], L(F) = L_A(F) + \sum_{A' \in [F]}(h_{A'} - h_A)Q_{A, A'}(F).
\end{equation}
\end{lemma}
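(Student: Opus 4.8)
The plan is to read \eqref{eq:potential} as a discrete Poisson equation for the Markov chain with transition matrix $Q(F)$ and solve it by a Fredholm‑alternative argument, with regularity entering through a dimension count. First I would reformulate the claim: each row of $Q(F)$ sums to $1$, because $\sum_{A' \in [F]} Q_{A, A'}(F) = \sum_{s \in \mathcal{S}} \mu(s) = 1$. Hence, writing $\pmb{h} \coloneqq (h_A)_{A \in [F]}$ and $\pmb{\ell} \coloneqq (L_A(F))_{A \in [F]}$ as column vectors, $\pmb{1}$ for the all‑ones column vector, and $I$ for the $|F| \times |F|$ identity matrix, the condition \eqref{eq:potential} is equivalent to
\begin{equation*}
(I - Q(F))\,\pmb{h} = \pmb{\ell} - L(F)\,\pmb{1}.
\end{equation*}
So it suffices to show that $\pmb{b} \coloneqq \pmb{\ell} - L(F)\,\pmb{1}$ lies in the image of the linear map $I - Q(F)$.

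By the fundamental theorem of linear algebra, $\mathrm{image}(I - Q(F))$ equals the orthogonal complement of $\ker\bigl((I - Q(F))^{\top}\bigr) = \{\pmb{v} \in \mathbb{R}^{|F|} : \pmb{v}\,Q(F) = \pmb{v}\}$, the space of (unnormalized) stationary vectors of $Q(F)$. Therefore it is enough to prove (1) that this space is the line spanned by $\pmb{\pi}(F)$, and (2) that $\pmb{\pi}(F) \cdot \pmb{b} = 0$.

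For (1), let $N \coloneqq \{\pmb{v} \in \mathbb{R}^{|F|} : \pmb{v}\,Q(F) = \pmb{v}\}$. The linear functional $\pmb{v} \mapsto \sum_{A \in [F]} v_A$ does not vanish identically on $N$, since by \cite[Lemma 6]{IEICE2023} some stationary distribution (on which it takes the value $1$) lies in $N$; hence its restriction to $N$ has kernel of dimension $\dim N - 1$. If $\dim N \geq 2$, translating this kernel by a stationary distribution would produce infinitely many solutions of \eqref{eq:stationary1}--\eqref{eq:stationary2}, contradicting $F \in \mathscr{F}_{\reg}$. Thus $\dim N = 1$, and since $\pmb{\pi}(F) \in N$ is nonzero, $N = \mathbb{R}\,\pmb{\pi}(F)$. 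For (2), I would compute directly, using \eqref{eq:evaluation} and \eqref{eq:stationary2},
\begin{equation*}
\pmb{\pi}(F) \cdot \pmb{b} = \sum_{A \in [F]} \pi_A(F) L_A(F) - L(F) \sum_{A \in [F]} \pi_A(F) = L(F) - L(F) = 0.
\end{equation*}
Combining (1) and (2) gives $\pmb{b} \perp N$, hence $\pmb{b} \in \mathrm{image}(I - Q(F))$, and any preimage $\pmb{h}$ is the required vector.

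The step I expect to be the main obstacle is (1): extracting from the definition of regularity — uniqueness of the \emph{normalized} stationary distribution — the stronger statement that the entire fixed‑space $N$ of $Q(F)$ is one‑dimensional. This hinges on the functional $\pmb{v} \mapsto \sum_A v_A$ being nonzero on $N$, which is precisely what the guaranteed existence of a nonnegative stationary distribution supplies; everything else is routine linear algebra.
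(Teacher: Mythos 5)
Your proof is correct. The reformulation $(I - Q(F))\,\pmb{h} = \pmb{\ell} - L(F)\,\pmb{1}$ is right (using that each row of $Q(F)$ sums to $1$), the Fredholm-alternative reduction is standard, and both subclaims check out: the left fixed-space $N$ of $Q(F)$ must be one-dimensional because otherwise the affine solution set of \eqref{eq:stationary1}--\eqref{eq:stationary2} would have positive dimension (note that Definition~\ref{def:stationary} imposes no nonnegativity, so adding a fixed vector with zero coordinate-sum really does produce another admissible stationary distribution), and the orthogonality $\pmb{\pi}(F)\cdot\pmb{b} = 0$ is exactly \eqref{eq:evaluation} combined with \eqref{eq:stationary2}.

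One caveat on comparison: the present paper does not give its own proof of Lemma~\ref{lem:potential}; it simply cites \cite[Lemma 14]{IEICE2023} and relegates the argument there. So there is no in-paper proof to compare against. Your route — viewing \eqref{eq:potential} as a Poisson (average-cost) equation and solving it by checking that the right-hand side is orthogonal to the one-dimensional left null space of $I - Q(F)$ — is the standard linear-algebra treatment of this statement and is self-contained given only the existence of a nonnegative stationary distribution (from \cite[Lemma 6]{IEICE2023}) and the uniqueness supplied by regularity.
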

A vector $(h_A)_{A \in [F]} \in \mathbb{R}^{|F|}$ is satisfying (\ref{eq:potential}) is not unique.
For $F \in \mathscr{F}_{\reg}$, let $h(F)$ denote an arbitrarily chosen and fixed vector $(h_A)_{A \in [F]} \in \mathbb{R}^{|F|}$ satisfying (\ref{eq:potential}).

\begin{lemma}[{\cite[Lemma 15]{IEICE2023}}]
\label{lem:improve}
For any $F=(f, \trans), F'=(f', \trans') \in \mathscr{F}_{\reg}$ such that $[F] = [F']$,
if the following conditions (a) and (b) hold, then $L(F') \leq L(F)$.
\begin{enumerate}[(a)]
\item $L_A(F) = L_A(F')$ for any $A \in [F]$.
\item $h_{\trans_A(s)}(F) \geq h_{\trans'_A(s)}(F)$ for any $A \in [F]$ and $s \in \mathcal{S}$.
\end{enumerate}
\end{lemma}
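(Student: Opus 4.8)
The plan is to run the standard policy-improvement argument from the theory of Markov reward processes, using the potential vector $h(F)$ supplied by Lemma \ref{lem:potential}. The idea is that $h(F)$ plays the role of a bias term satisfying a Poisson-type identity for $F$, while the stationary distribution of $F'$ satisfies its own invariance equation; condition (b) is precisely what lets us pass between the two.

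First I would fix the stationary distribution $\pmb{\pi}(F') = (\pi_A(F'))_{A \in [F']}$ of $F'$, which is nonnegative (the unique stationary distribution of a regular code-tuple coincides with a nonnegative one). Writing out $L(F') = \sum_{A \in [F']} \pi_A(F') L_A(F')$ and invoking condition (a) together with $[F] = [F']$, this equals $\sum_{A \in [F]} \pi_A(F') L_A(F)$; the point of condition (a) is exactly to let us evaluate the per-table lengths of $F'$ using the data of $F$. Next I would rewrite (\ref{eq:potential}) in the equivalent form
\[
L_A(F) = L(F) + h_A(F) - \sum_{A' \in [F]} h_{A'}(F)\, Q_{A, A'}(F),
\]
which is legitimate because $\sum_{A' \in [F]} Q_{A, A'}(F) = \sum_{s \in \mathcal{S}} \mu(s) = 1$. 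Substituting this into the expression for $L(F')$ and using $\sum_{A} \pi_A(F') = 1$ yields
\[
L(F') = L(F) + \sum_{A \in [F]} \pi_A(F') h_A(F) - \sum_{A \in [F]} \pi_A(F') \sum_{A' \in [F]} h_{A'}(F)\, Q_{A, A'}(F).
\]

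Now I would use the elementary identity $\sum_{A' \in [F]} h_{A'}(F)\, Q_{A, A'}(F) = \sum_{s \in \mathcal{S}} \mu(s)\, h_{\trans_A(s)}(F)$, and the analogous one with $\trans'$ and $Q(F')$, so that condition (b) together with $\pi_A(F') \geq 0$ gives
\[
\sum_{A \in [F]} \pi_A(F') \sum_{A' \in [F]} h_{A'}(F)\, Q_{A, A'}(F) \;\geq\; \sum_{A \in [F]} \pi_A(F') \sum_{A' \in [F]} h_{A'}(F)\, Q_{A, A'}(F').
\]
Combined with the previous display, this shows $L(F') \leq L(F) + \sum_{A} \pi_A(F') h_A(F) - \sum_{A} \pi_A(F') \sum_{A'} h_{A'}(F) Q_{A,A'}(F')$. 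Finally I would close the argument using stationarity of $\pmb{\pi}(F')$ under $F'$: since $\sum_A \pi_A(F') Q_{A, A'}(F') = \pi_{A'}(F')$, the last double sum collapses to $\sum_{A'} \pi_{A'}(F') h_{A'}(F)$, which cancels the term $\sum_A \pi_A(F') h_A(F)$, leaving $L(F') \leq L(F)$.

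The only place that needs genuine care is the bookkeeping of which transition matrix appears where: the potential $h(F)$ and the identity (\ref{eq:potential}) belong to $F$, whereas the stationary distribution and its invariance equation belong to $F'$, and condition (b) is the hinge that lets the $Q(F)$-sum be replaced by the $Q(F')$-sum. Everything else is routine rearrangement of finite sums, so I do not expect any real difficulty beyond this sign- and index-tracking.
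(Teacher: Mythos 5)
The paper does not prove this lemma itself; it defers to the reference [IEICE2023] (see the sentence preceding Lemma~\ref{lem:irr} in Appendix~\ref{subsec:proof-equiv1}). Your argument is nevertheless correct and is exactly the standard policy-improvement calculation one would expect the cited reference to use: rewrite (\ref{eq:potential}) as $L_A(F) = L(F) + h_A(F) - \sum_{A'} h_{A'}(F)Q_{A,A'}(F)$ using $\sum_{A'} Q_{A,A'}(F)=1$, average against the nonnegative stationary vector $\pmb{\pi}(F')$, use condition (b) and $\pi_A(F')\geq 0$ to replace the $Q(F)$-average of $h(F)$ by the $Q(F')$-average, and then invoke stationarity $\pmb{\pi}(F')Q(F')=\pmb{\pi}(F')$ to cancel the two $h$-terms. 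All the hinges you flag (nonnegativity of $\pmb{\pi}(F')$, condition (a) for substituting $L_A(F)$ for $L_A(F')$, the identity $\sum_{A'} h_{A'}(F)Q_{A,A'}(F)=\sum_s \mu(s)\,h_{\trans_A(s)}(F)$) are exactly the right ones, and I see no gap.
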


\begin{lemma}
\label{lem:to-semi}
For any $F \in \mathscr{F}_{\reg}$, 
there exists $F^{\dag} = (f^{\dag}, \trans^{\dag}) \in \mathscr{F}_{\irr}$ satisfying the following conditions $(\mathrm{a})$--$(\mathrm{e})$.
\begin{itemize}
\item[$(\mathrm{a})$] $L(F^{\dag}) \leq L(F)$.
\item[$(\mathrm{b})$] $[F^{\dag}] \subseteq [F]$.
\item[$(\mathrm{c})$] for any $A \in [F^{\dag}]$, it holds that $f^{\dag}_{A} = f_{A}$.
\item[$(\mathrm{d})$] for any $A \in [F^{\dag}]$ and $s \in \mathcal{S}$, it holds that $\trans^{\dag}_{A}(s) \sim_k \trans_{A}(s)$.
\item[$(\mathrm{e})$] for any $A, A' \in [F^{\dag}]$, if $A \sim_k  A'$, then $A =  A'$.
\end{itemize}
\end{lemma}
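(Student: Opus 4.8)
The plan is to prove Lemma \ref{lem:to-semi} by induction on $|[F]|$, using Lemmas \ref{lem:irr}, \ref{lem:chooseone}, \ref{lem:potential}, and \ref{lem:improve}. Given $F \in \mathscr{F}_{\reg}$, first apply Lemma \ref{lem:irr} to obtain $\bar{F} \in \mathscr{F}_{\irr}$ with $L(\bar{F}) = L(F)$, $[\bar{F}] = \kernel_F \subseteq [F]$, and $\bar{f}_A = f_A$, $\bar{\trans}_A = \trans_A$ for all $A \in [\bar{F}]$. If no two distinct indices of $[\bar{F}]$ are $\sim_k$-equivalent, then $F^{\dag} \coloneqq \bar{F}$ already satisfies (a)--(e): (a), (b), (e) are immediate, (c) holds since $\bar{f}_A = f_A$, and (d) holds by reflexivity of $\sim_k$ since $\bar{\trans}_A = \trans_A$. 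This also disposes of the base case $|[F]| = 1$, since then $|[\bar{F}]| = |\kernel_F| = 1$ (recall $\kernel_F \neq \emptyset$ for regular $F$ by Lemma \ref{lem:kernel}).

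Otherwise, fix distinct $A, A' \in [\bar{F}]$ with $A \sim_k A'$ and set $\mathcal{I} \coloneqq \{B \in [\bar{F}] : B \sim_k A\}$, so $|\mathcal{I}| \geq 2$. Using Lemma \ref{lem:potential}, fix the vector $h(\bar{F})$ and choose $P \in \mathcal{I}$ with $h_P(\bar{F}) = \min_{B \in \mathcal{I}} h_B(\bar{F})$. Applying Lemma \ref{lem:chooseone} to $(\bar{F}, \mathcal{I}, P)$ gives $F' \in \mathscr{F}_{\reg}$ with $[F'] = [\bar{F}]$, $f'_A = f_A$ by (\ref{eq:cp8qdsjrwe14}), and $\trans'_A(s) = P$ when $\bar{\trans}_A(s) \in \mathcal{I}$, $\trans'_A(s) = \bar{\trans}_A(s)$ otherwise, by (\ref{eq:hbm12zjixhzy}). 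Then $L(F') \leq L(\bar{F})$ by Lemma \ref{lem:improve} applied to $(\bar{F}, F')$: its condition (a) holds since $f'_A = \bar{f}_A$; its condition (b) holds because when $\bar{\trans}_A(s) \in \mathcal{I}$ we have $h_{\bar{\trans}_A(s)}(\bar{F}) \geq h_P(\bar{F}) = h_{\trans'_A(s)}(\bar{F})$ by the choice of $P$, and otherwise there is equality. Moreover, every index of $\mathcal{I} \setminus \{P\}$ is unreachable in $F'$, since every transition of $F'$ maps into $\{P\} \cup ([\bar{F}] \setminus \mathcal{I})$; hence $\kernel_{F'} \subsetneq [F']$.

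Now apply Lemma \ref{lem:irr} to $F'$ to obtain $\bar{F'} \in \mathscr{F}_{\irr}$ with $L(\bar{F'}) = L(F')$, $[\bar{F'}] = \kernel_{F'} \subsetneq [F'] = [\bar{F}] \subseteq [F]$, and $\bar{f'}_A = f'_A$, $\bar{\trans'}_A = \trans'_A$ on $[\bar{F'}]$. Since $|[\bar{F'}]| < |[F]|$ and $\mathscr{F}_{\irr} \subseteq \mathscr{F}_{\reg}$, the induction hypothesis yields $F^{\dag} \in \mathscr{F}_{\irr}$ satisfying (a)--(e) relative to $\bar{F'}$. Chaining back to $F$: (b) $[F^{\dag}] \subseteq [\bar{F'}] \subseteq [F]$; (c) $f^{\dag}_A = \bar{f'}_A = f'_A = f_A$ for $A \in [F^{\dag}]$; (d) for $A \in [F^{\dag}]$, $s \in \mathcal{S}$, $\trans^{\dag}_A(s) \sim_k \bar{\trans'}_A(s) = \trans'_A(s)$, and $\trans'_A(s) \sim_k \bar{\trans}_A(s) = \trans_A(s)$ because $\trans'_A(s)$ is either $\bar{\trans}_A(s)$ or $P$, and in the latter case $\bar{\trans}_A(s), P \in \mathcal{I}$ are $\sim_k$-equivalent, so $\trans^{\dag}_A(s) \sim_k \trans_A(s)$ by transitivity of $\sim_k$; (a) $L(F^{\dag}) \leq L(\bar{F'}) = L(F') \leq L(\bar{F}) = L(F)$; and (e) is inherited directly. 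This completes the induction.

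The only delicate points are (i) that the redirection does not increase the average codeword length and (ii) that the domain strictly shrinks after the second application of Lemma \ref{lem:irr}. Point (i) is precisely why the redirection target $P$ must be chosen to minimize the potential $h(\bar{F})$ over $\mathcal{I}$, so that Lemma \ref{lem:improve} applies; point (ii) follows from the explicit form of $\trans'$ in Lemma \ref{lem:chooseone}, which forces every index of $\mathcal{I} \setminus \{P\}$ out of $\kernel_{F'}$. Everything else is bookkeeping: composing the conclusions of the two invocations of Lemma \ref{lem:irr} and of the induction hypothesis, and invoking reflexivity and transitivity of $\sim_k$.
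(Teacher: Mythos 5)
Your proof is correct and follows essentially the same route as the paper's: induction on the number of code tables, reduce to an irreducible $\bar F$ via Lemma \ref{lem:irr}, redirect all transitions into the $\sim_k$-class $\mathcal{I} \cap [\bar F]$ to a potential-minimizing representative $P$ via Lemma \ref{lem:chooseone}, use Lemma \ref{lem:improve} to see the length does not increase, observe $\kernel_{F'}$ is a proper subset of $[F']$, irreducibilize again, and invoke the induction hypothesis. The paper names the two equivalent indices $B, B'$ and writes $\mathcal{I} = \langle B\rangle \cap [\bar F]$, which is your $\{B \in [\bar F] : B \sim_k A\}$; this and your explicit remark about the base case are the only (cosmetic) differences.
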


The proof of Lemma \ref{lem:to-semi} is analogous to the proof of {\cite[Theorem 1]{IEICE2023}}.

\begin{proof}[Proof of Lemma \ref{lem:to-semi}]
We prove by induction on $|F|$.
By Lemma \ref{lem:irr}, there exists $\bar{F} = (\bar{f}, \bar{\trans}) \in \mathscr{F}_{\irr}$ satisfying the following conditions $(\bar{\mathrm{a}})$--$(\bar{\mathrm{d}})$.
\begin{itemize}
\item[$(\bar{\mathrm{a}})$] $L(\bar{F}) = L(F)$.
\item[$(\bar{\mathrm{b}})$] $[\bar{F}]  \subseteq [F]$.
\item[$(\bar{\mathrm{c}})$] For any $A \in [\bar{F}]$, it holds that $\bar{f}_{A} = f_{A}$.
\item[$(\bar{\mathrm{d}})$] For any $A \in [\bar{F}]$, it holds that $\bar{\trans}_{A} = \trans_{A}$.
\end{itemize}
Because $(\bar{\mathrm{a}})$--$(\bar{\mathrm{d}})$ imply that $\bar{F}$ satisfies Lemma \ref{lem:to-semi} (a)--(d),
if $\bar{F}$ also satisfies Lemma \ref{lem:to-semi} (e), then $F^{\dagger} \coloneqq \bar{F}$ is a desired code-tuple.
Thus, we now assume that $\bar{F}$ does not satisfy Lemma \ref{lem:to-semi} (e).
Then we can choose $B,  B' \in [\bar{F}]$ such that $ B \neq  B'$ and $ B \sim_k  B'$.
Also, by Lemma \ref{lem:potential}, there exists a real-valued vector $h(\bar{F})$ satisfying (\ref{eq:potential}) for $\bar{F}$.
We define $F'=(f', \trans') \in \mathscr{F}$ as
\begin{align}
[F'] &\coloneqq [\bar{F}], \label{eq:d5mcfcrebtl1}
\end{align}
and for $A \in [F'] = [\bar{F}]$ and $s \in \mathcal{S}$,
\begin{align}
f'_{A}(s) &\coloneqq \bar{f}_{A}(s), \label{eq:21l9xdppl0h1}\\
\trans'_{A}(s) &\coloneqq 
\begin{cases}
P  & \,\,\text{if}\,\, \bar{\trans}_ A(s) \in \mathcal{I},\\
\bar{\trans}_{A}(s) & \,\,\text{if}\,\, \bar{\trans}_ A(s) \not\in \mathcal{I},
\end{cases}
\label{eq:ynxygdkzsrk1}
\end{align}
where
\begin{equation}
\label{eq:pa234xqjzgvp}
\mathcal{I} \coloneqq \langle  B \rangle \cap [\bar{F}] = \langle  B' \rangle \cap [\bar{F}]
\end{equation}
and we choose 
\begin{equation}
\label{eq:tuhir8am5say}
P \in \arg\min_{A \in \mathcal{I}} h_A(\bar{F})
\end{equation}
arbitrarily.
Then we obtain $F' \in \mathscr{F}_{\reg}$ by applying Lemma \ref{lem:chooseone} since $\bar{F} \in \mathscr{F}_{\irr}$.
Also, we can see
\begin{equation}
\label{eq:hpvmxtw3pohj}
L(F') \leq L(\bar{F})
\end{equation}
by applying Lemma \ref{lem:improve} because $F'$ satisfies (a) (resp.~(b)) of Lemma \ref{lem:improve} by (\ref{eq:21l9xdppl0h1}) (resp.~(\ref{eq:ynxygdkzsrk1})--(\ref{eq:tuhir8am5say})).

By Lemma \ref{lem:irr}, there exists $\bar{F'} = (\bar{f'}, \bar{\trans'}) \in \mathscr{F}_{\irr}$ satisfying the following conditions $(\bar{\mathrm{a}}')$--$(\bar{\mathrm{d}}')$.
\begin{enumerate}
\item[$(\bar{\mathrm{a}}')$] $L(\bar{F'}) = L(F')$.
\item[$(\bar{\mathrm{b}}')$] $[\bar{F'}] = \kernel_{F'} \subseteq [F']$.
\item[$(\bar{\mathrm{c}}')$] For any $A \in [\bar{F'}]$, it holds that $\bar{f'}_{A} = f'_{A}$.
\item[$(\bar{\mathrm{d}}')$] For any $A \in [\bar{F'}]$, it holds that $\bar{\trans'}_{A} = \trans'_{A}$.
\end{enumerate}

We have
\begin{equation}
\label{eq:jmckcr9oz3s3}
\kernel_{F'} \subseteq [F'] \eqlab{A}\setminus (\mathcal{I} \setminus \{P\}) \eqlab{B}  \subseteq [F'] \setminus (\{B, B'\} \setminus \{P\}) \subsetneq [F'],
\end{equation}
where
(A) follows because $A \not\in \kernel_{F'}$ holds for any $A \in \mathcal{I} \setminus \{P\}$ by (\ref{eq:ynxygdkzsrk1}),
(B) follows from $\{B, B'\} \subseteq \mathcal{I}$ by (\ref{eq:pa234xqjzgvp}),
and (C) follows because $(\{B, B'\} \setminus \{P\}) \neq \emptyset$ by $B \neq B'$.
Hence, it follows that
\begin{equation}
[\bar{F'}]
\eqlab{A}= \kernel_{F'}
\eqlab{B}{\subsetneq} [F']
\eqlab{C}= [\bar{F}]
\eqlab{D}{\subseteq} [F], \label{eq:kel9gxkuamvj}
\end{equation}
where
(A) follows from $(\bar{\mathrm{b}}')$,
(B) follows from (\ref{eq:jmckcr9oz3s3}),
(C) follows from (\ref{eq:d5mcfcrebtl1}),
and (D) follows from $(\bar{\mathrm{b}})$.
In particular, $|\bar{F'}| < |F|$ holds, so that we can apply the induction hypothesis to $\bar{F'}$ to see that
 there exists $F^{\dagger} = (f^{\dag}, \trans^{\dag}) \in \mathscr{F}_{\irr}$ satisfying the following conditions ($\mathrm{a}^{\dagger}$)--($\mathrm{e}^{\dagger}$).
\begin{itemize}
\item[$(\mathrm{a}^{\dag})$] $L(F^{\dag}) \leq L(\bar{F'})$.
\item[$(\mathrm{b}^{\dag})$] $[F^{\dag}] \subseteq [\bar{F'}]$.
\item[$(\mathrm{c}^{\dag})$] for any $A \in [F^{\dag}]$, it holds that $f^{\dag}_{A} = \bar{f'}_{A}$.
\item[$(\mathrm{d}^{\dag})$] for any $A \in [F^{\dag}]$ and $s \in \mathcal{S}$, it holds that $\trans^{\dag}_{A}(s) \sim_k \bar{\trans'}_{A}(s)$.
\item[$(\mathrm{e}^{\dag})$] for any $A, A' \in [F^{\dag}]$, if $A \sim_k A'$, then $A = A'$.
\end{itemize}
We can see that $F^{\dagger}$ is a desired code-tuple, that is, $F^{\dagger}$ satisfies Lemma \ref{lem:to-semi} (a)--(e) as follows.

\begin{itemize}
\item (Lemma \ref{lem:to-semi} (a))
We have
\begin{equation*}
L(F^{\dagger}) \eqlab{A}{\leq} L(\bar{F'})
\eqlab{B}{=} L(F')
\eqlab{C}{\leq} L(\bar{F})
\eqlab{D}{=} L(F),
\end{equation*}
where
(A) follows from ($\mathrm{a}^{\dagger}$),
(B) follows from $(\bar{\mathrm{a}}')$,
(C) follows from (\ref{eq:hpvmxtw3pohj}),
and (D) follows from $(\bar{\mathrm{a}})$.

\item (Lemma \ref{lem:to-semi} (b))
We have
\begin{align*}
[F^{\dagger}]
\eqlab{A}\subseteq [\bar{F'}]
\eqlab{B}\subseteq [F],
\end{align*}
where
(A) follows from ($\mathrm{b}^{\dagger}$),
and (B) follows from (\ref{eq:kel9gxkuamvj}).

\item (Lemma \ref{lem:to-semi} (c))
For any $A \in [F^{\dag}]$, we have
\begin{equation*}
f^{\dag}_{A}
\eqlab{A}= \bar{f}'_{A}
\eqlab{B}= f'_{A}
\eqlab{C}= \bar{f}_{A}
\eqlab{D}= f_{A},
\end{equation*}
where
(A) follows from $(\mathrm{c}^{\dag})$,
(B) follows from $(\bar{\mathrm{c}}')$ and $[F^{\dag}] \subseteq [\bar{F}']$,
(C) follows from (\ref{eq:21l9xdppl0h1}) and $[F^{\dag}] \subseteq [F']$,
and (D) follows from $(\bar{\mathrm{c}})$ and $[F^{\dag}] \subseteq [\bar{F}]$.

\item (Lemma \ref{lem:to-semi} (d))
For any $A \in [F^{\dag}]$ and $s \in \mathcal{S}$, we have
\begin{equation*}
\trans^{\dag}_{A}(s)
\eqlab{A}{\sim_k} \bar{\trans}'_{A}(s)
\eqlab{B}= \trans'_{A}(s)
\eqlab{C}{\sim_k} \bar{\trans}_{A}(s)
\eqlab{D}= \trans_{A}(s),
\end{equation*}
where
(A) follows from $(\mathrm{d}^{\dag})$,
(B) follows from $(\bar{\mathrm{d}}')$ and $[F^{\dag}] \subseteq [\bar{F}']$,
(C) follows from (\ref{eq:ynxygdkzsrk1}) and $[F^{\dag}] \subseteq [F']$,
and (D) follows from $(\bar{\mathrm{d}})$ and $[F^{\dag}] \subseteq [\bar{F}]$.

\item (Lemma \ref{lem:to-semi} (e)) Directly from ($\mathrm{e}^{\dagger}$).
\end{itemize}
\end{proof}

\begin{proof}[Proof of Theorem \ref{thm:equiv} (i)]
By Lemma \ref{thm:differ}, we may assume that if $A \neq A'$, then $\PREF^k_{F, A} \neq \PREF^k_{F, A'}$ without loss of generality.
Thus, we can reassign indices to the code tables so that
\begin{equation}
\label{eq:bild3bgtj5ik}
\forall A \in [F], \PREF^k_{F, A} =  A
\end{equation}
by changing the domain $[F] \subseteq \mathscr{C}^k$ if necessary.

By Lemma \ref{lem:to-semi}, there exists $F^{\dag} \in \mathscr{F}_{\irr}$ satisfying Lemma \ref{lem:to-semi} (a)--(e).
By Lemma \ref{lem:to-semi} (e), we may assume that all $A \in [F^{\dag}]$ are representatives of the equivalence classes,
that is, 
\begin{equation}
\label{eq:0rq8izx3ph42}
\forall A \in [F^{\dag}], \bar{A} = A
\end{equation}
by changing the choice of the representatives if necessary.
Then for any $A \in [F^{\dag}]$ and $s \in \mathcal{S}$, we have
\begin{equation}
\label{eq:ygvo0pm55gdx}
\trans^{\dag}_{A}(s) \eqlab{A}{=} \overline{\trans^{\dag}_{A}(s)} \eqlab{B}{=} \overline{\trans_{A}(s)},
\end{equation}
where 
(A) follows from (\ref{eq:0rq8izx3ph42}),
and (B) follows from Lemma \ref{lem:to-semi} (d).

Define an RCT $\tilde{F}=(\tilde{f}, \tilde{\trans}) \in \tilde{\mathscr{F}}$ as
\begin{eqnarray}
[\tilde{F}] &\coloneqq& [F^{\dag}] \label{eq:4fhbihitwkg5} 
\end{eqnarray}
and for $A \in [\tilde{F}] = [F^{\dag}]$ and $s \in \mathcal{S}$,
\begin{eqnarray}
\tilde{f}_{A}(s) &\coloneqq& f^{\dag}_{A}(s) \text{ }( = f_{A}(s) \text{ by Lemma \ref{lem:to-semi} (c)}), \label{eq:6ije1hv7t3po} \\
\tilde{\trans}_{A}(s) &\coloneqq& \trans_{A}(s). \label{eq:3r0fsh7bq6yp}
\end{eqnarray}
Note that $\tilde{\trans}_{A}$ is indeed a mapping conforming to Definition \ref{def:semi} (iii) because
for any $A \in [\tilde{F}]$ and $s \in \mathcal{S}$, we have
\begin{equation*}
\tilde{\trans}_{A}(s)
\eqlab{A}{=} \trans_{A}(s)
\in \langle \overline{\trans_{A}(s)} \rangle
\eqlab{B}{=} \langle \trans^{\dag}_{A}(s) \rangle
\subseteq \langle [F^{\dag}] \rangle
\eqlab{C}{=} \langle [\tilde{F}] \rangle,
\end{equation*}
where
(A) follows from (\ref{eq:3r0fsh7bq6yp}),
(B) follows from (\ref{eq:ygvo0pm55gdx}),
and (C) follows from (\ref{eq:4fhbihitwkg5}).

We show that $\tilde{F}$ is a desired RCT, that is, $\tilde{F} \in \tilde{\mathscr{F}}_{\comp} \cap \tilde{\mathscr{F}}_{\reg} \cap \tilde{\mathscr{F}}_{\ext} \cap \tilde{\mathscr{F}}_{k\hdec}$ and $\tilde{L}(\tilde{F}) \leq L(F)$.

\begin{itemize}
\item (Proof of $\tilde{F} \in \tilde{\mathscr{F}}_{\reg}$ and $\tilde{L}(\tilde{F}) \leq L(F)$)
The direct realization $F' = (f', \trans')$ of $\tilde{F}$ is identical to $F^{\dag}$ as shown as
\begin{equation}
f'_A(s) \eqlab{A}{=} \tilde{f}_{A}(s) \eqlab{B}{=} f^{\dag}_{A}(s), \quad
\trans'_A(s) \eqlab{C}{=} \overline{\tilde{\trans}_{A}(s)}
\eqlab{D}{=} \overline{\trans_{A}(s)}
\eqlab{E}{=} \trans^{\dag}_{A}(s)
\end{equation}
for any $A \in [\tilde{F}] = [F^{\dag}]$ and $s \in \mathcal{S}$,
where
(A) follows from (\ref{eq:z3m658av3wl7}),
(B) follows from (\ref{eq:6ije1hv7t3po}),
(C) follows from (\ref{eq:qwshe4o3q759}),
(D) follows from (\ref{eq:3r0fsh7bq6yp}),
and (E) follows from (\ref{eq:ygvo0pm55gdx}).
Since $F' = F^{\dag} \in \mathscr{F}_{\reg}$, we have $\tilde{F} \in \tilde{\mathscr{F}}_{\reg}$
and
\begin{equation*}
\tilde{L}(\tilde{F}) = L(F') = L(F^{\dag}) \eqlab{A}\leq L(F)
\end{equation*}
as desired, where (A) follows from Lemma \ref{lem:to-semi} (a).

\item (Proof of $\tilde{F} \in \tilde{\mathscr{F}}_{\ext}$)
It holds that
\begin{equation}
\label{eq:7id2oaebdaeh}
\forall A \in [\tilde{F}], 
 A \in [\tilde{F}]
\eqlab{A}{=} [F^{\dag}]
\eqlab{B}{\subseteq} [F],
\end{equation}
where
(A) follows from (\ref{eq:4fhbihitwkg5}),
and (B) follows from Lemma \ref{lem:to-semi} (b).
Hence, for any $A \in [\tilde{F}]$, we have
\begin{equation*}
 A \eqlab{A}{=} \PREF^k_{F, A}
\eqlab{B}{\neq} \emptyset,
\end{equation*}
where
(A) follows from (\ref{eq:bild3bgtj5ik}) and (\ref{eq:7id2oaebdaeh}),
and (B) follows because there exists $\pmb{x} \in \mathcal{S}^{\ast}$ such that $|f^{\ast}_A(\pmb{x})| \geq k$ by $F \in \mathscr{F}_{\ext}$ and Lemma \ref{lem:F_ext}.
This concludes that $[\tilde{F}] \not\owns \emptyset$.

\item (Proof of $\tilde{F} \in \tilde{\mathscr{F}}_{\comp}$)
For any $A \in [\tilde{F}]$, we have
\begin{equation*}
\label{eq:burwu8k4nh3r}
\PREF^k_{\tilde{F}, A}
\eqlab{A}{=} \bigcup_{s \in \mathcal{S}} \left[ \tilde{f}_{A}(s) \tilde{\trans}_{A}(s) \right]_k
\eqlab{B}{=} \bigcup_{s \in \mathcal{S}} \left[ f_{A}(s) \trans_{A}(s) \right]_k
\eqlab{C}{=} \bigcup_{s \in \mathcal{S}} \left[ f_{A}(s) \PREF^k_{F, \trans_{A}(s)} \right]_k
\eqlab{D}{=} \PREF^k_{F, A}
\eqlab{E}{=}  A,
\end{equation*}
where
(A) follows from (\ref{eq:x425gfz5autj}),
(B) follows from (\ref{eq:6ije1hv7t3po}) and (\ref{eq:3r0fsh7bq6yp}),
(C) follows from (\ref{eq:bild3bgtj5ik}) and (\ref{eq:7id2oaebdaeh}),
(D) follows from (\ref{eq:9fhqriquk4ht}),
and (E) follows from (\ref{eq:bild3bgtj5ik}) and (\ref{eq:7id2oaebdaeh}).

\item (Proof of $\tilde{F} \in \tilde{\mathscr{F}}_{k\hdec}$)
We show that $\tilde{F}$ satisfies the conditions (a) and (b) of Definition \ref{def:semi-class} (iv).

\begin{itemize}
\item (Definition \ref{def:semi-class} (iv) (a))
For any $A \in [\tilde{F}]$ and $\pmb{b} \in \mathcal{C}^{\ast}$, we have
\begin{eqnarray}
\bar{\PREF}^k_{\tilde{F}, A}(\pmb{b})
&\eqlab{A}{=}& \bigcup_{\substack{s \in \mathcal{S},\\ \tilde{f}_{A}(s) \succ \tilde{f}_{A}(s)}} \left[ \pmb{b}^{-1}\tilde{f}_{A}(s) \tilde{\trans}_{A}(s) \right]_k \nonumber\\
&\eqlab{B}{=}& \bigcup_{\substack{s \in \mathcal{S},\\ f_{A}(s) \succ f_{A}(s)}} \left[ \pmb{b}^{-1}f_{A}(s) \trans_{A}(s) \right]_k \nonumber\\
&\eqlab{C}{=}& \bigcup_{\substack{s \in \mathcal{S},\\ f_{A}(s) \succ f_{A}(s)}} \left[ \pmb{b}^{-1}f_{A}(s) \PREF^k_{F, \trans_{A}(s)} \right]_k \nonumber\\
&\eqlab{D}{=}& \bar{\PREF}^k_{F, A}(\pmb{b}), \label{eq:p90pugcr4jbq}
\end{eqnarray}
where
(A) follows from (\ref{eq:hh4v6kmbk61n}),
(B) follows from (\ref{eq:6ije1hv7t3po}) and (\ref{eq:3r0fsh7bq6yp}),
(C) follows from (\ref{eq:bild3bgtj5ik}) and (\ref{eq:7id2oaebdaeh}),
and (D) follows from (\ref{eq:gfo3oig06zin}).
Hence, for any $A \in [\tilde{F}]$ and $s \in \mathcal{S}$, we have
\begin{equation*}
\tilde{\trans}_{A}(s) \cap \bar{\PREF}^k_{\tilde{F}, A}(\tilde{f}_{A}(s))
\eqlab{A}{=} \tilde{\trans}_{A}(s) \cap \bar{\PREF}^k_{F, A}(\tilde{f}_{A}(s))
\eqlab{B}{=} \trans_{A}(s) \cap \bar{\PREF}^k_{F, A}(f_{A}(s))
\eqlab{C}{=} \PREF^k_{F, \trans_{A}(s)} \cap\bar{\PREF}^k_{F, A}(f_{A}(s))
\eqlab{D}{=} \emptyset,
\end{equation*}
where
(A) follows from (\ref{eq:p90pugcr4jbq}),
(B) follows from (\ref{eq:6ije1hv7t3po}) and (\ref{eq:3r0fsh7bq6yp}),
(C) follows from (\ref{eq:bild3bgtj5ik}) and (\ref{eq:7id2oaebdaeh}),
and (D) follows from $F \in \mathscr{F}_{k\hdec}$.

\item
(Definition \ref{def:semi-class} (iv) (b))
Choose $A \in [\tilde{F}]$ and $s, s' \in \mathcal{S}$ such that $\tilde{f}_{A}(s) = \tilde{f}_{A}(s')$ arbitrarily.
Then we have
\begin{equation}
\label{eq:y6cbq22ti6sg}
f_{A}(s)
\eqlab{A}{=} \tilde{f}_{A}(s)
= \tilde{f}_{A}(s')
\eqlab{B}{=} f_{A}(s'),
\end{equation}
where
(A) follows from (\ref{eq:6ije1hv7t3po}),
and (B) follows from (\ref{eq:6ije1hv7t3po}).
Therefore, we obtain
\begin{equation*}
\tilde{\trans}_{A}(s) \cap \tilde{\trans}_{A}(s')
\eqlab{A}{=} \trans_{A}(s) \cap \trans_{A}(s')
\eqlab{B}{=} \PREF^k_{F, \trans_{A}(s)} \cap \PREF^k_{F, \trans_{A}(s')}
\eqlab{C}{=} \emptyset,
\end{equation*}
where
(A) follows from (\ref{eq:3r0fsh7bq6yp}),
(B) follows from (\ref{eq:bild3bgtj5ik}),
and (C) follows from $F \in \mathscr{F}_{k\hdec}$ and (\ref{eq:y6cbq22ti6sg}).
\end{itemize}
\end{itemize}
\end{proof}

\subsection{Proof of Lemma \ref{lem:expand}: $L(F) = \tilde{L}(\tilde{F})$}
\label{subsec:proof-expand1}

The proof relies on \cite[Lemma 7]{IEICE2023}.

Let $F'=(f', \trans')$ be the direct realization of $\tilde{F}$, and we define a mapping $\varphi \colon [\hat{F}] \to [F']$ as
\begin{equation}
\label{eq:rg9zmfsn1btj}
\varphi((A, \phi)) = A
\end{equation}
for $(A, \phi) \in [\hat{F}]$.
Then for any $(A, \phi) \in [\hat{F}]$ and $s \in \mathcal{S}$, we have
\begin{equation}
\label{eq:12yxaf48at5p}
\varphi(\trans_{(A, \phi)}(s))
\eqlab{A}{=} \varphi \left( \left(\overline{\tilde{\trans}_{A}(s)},  \quot{\phi}{\tilde{f}_{A}(s)} \circ \psi_{A, s} \right) \right)
\eqlab{B}{=} \overline{\tilde{\trans}_{A}(s)}
\eqlab{C}{=} \trans'_{A}(s)
\eqlab{D}{=} \trans'_{\varphi((A, \phi))}(s),
\end{equation}
where
(A) follows from (\ref{eq:otmlflag72l3}),
(B) follows from (\ref{eq:rg9zmfsn1btj}),
(C) follows from (\ref{eq:qwshe4o3q759}),
and (D) follows from (\ref{eq:rg9zmfsn1btj}).
Also, for any $(A, \phi) \in [\hat{F}]$ and $s \in \mathcal{S}$, we have
\begin{equation}
\label{eq:ohfzbcgbftvx}
|\hat{f}_{(A, \phi)}(s)|
\eqlab{A}{=} |\phi(\tilde{f}_{A}(s))|
\eqlab{B}{=} |\tilde{f}_{A}(s)|
\eqlab{C}{=} |f'_{A}(s)|
\eqlab{D}{=} |f'_{\varphi((A, \phi))}(s)|,
\end{equation}
where
(A) follows from (\ref{eq:9tr7le1wumbs}),
(B) follows from Definition \ref{def:Phi} (a),
(C) follows from (\ref{eq:z3m658av3wl7}),
and (D) follows from (\ref{eq:rg9zmfsn1btj}).

Equations (\ref{eq:12yxaf48at5p}) and (\ref{eq:ohfzbcgbftvx}) are analogous to  \cite[Eqs.~(23) and (24)]{IEICE2023}.
Noting that the restriction $\varphi|_{[F]} \colon [F] \to [F']$ of $\varphi$ to $[F]$ also satisfies (\ref{eq:12yxaf48at5p}) and (\ref{eq:ohfzbcgbftvx}),
we can show $L(F) = L(F')$ by the almost identical argument to the proofs of \cite[Lemma 7 (iii) and (iv)]{IEICE2023} from \cite[Eqs.~(23) and (24)]{IEICE2023}.

\subsection{Proof of Lemma \ref{lem:expand}: $F \in \mathscr{F}_{k\hdec}$}
\label{subsec:proof-expand2}

We first prove the following Lemma \ref{lem:expanded-pref}.

\begin{lemma}
\label{lem:expanded-pref}
For any $(A, \phi) \in [F]$, we have $\PREF^k_{F, (A, \phi)} \subseteq \phi(A)$.
\end{lemma}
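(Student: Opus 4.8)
\textbf{Proof plan for Lemma \ref{lem:expanded-pref}.}

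The plan is to unfold the definition of $\PREF^k_{F, (A, \phi)}$ and push $\phi$ through each piece using the homomorphism-type identities already established for $\Phi_k$. Recall that by \eqref{eq:pref3}, $\pmb{c} \in \PREF^k_{F, (A, \phi)}$ iff there is $\pmb{x} \in \mathcal{S}^{\ast}$ with $f^{\ast}_{(A, \phi)}(\pmb{x}) \succeq \pmb{c}$, and since $F$ is obtained from $\hat F$ by restriction (see \eqref{eq:wh6g7yi11crb}--\eqref{eq:otmlflag72l3}), the code tables agree with those of $\hat F$ on $[F]$; so it suffices to work with $\hat f^{\ast}_{(A,\phi)}$. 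First I would prove the key intermediate claim: for every $(A,\phi) \in [\hat F]$ and $\pmb{x} \in \mathcal{S}^{\ast}$,
\begin{equation}
\label{eq:plan-key}
\hat f^{\ast}_{(A, \phi)}(\pmb{x}) = \phi\bigl(\tilde f'^{\ast}_{A}(\pmb{x})\bigr),
\end{equation}
where $\tilde f'$ is the code table family of the direct realization $F'$ of $\tilde F$ (so $\tilde f'_A = \tilde f_A$). This is proved by induction on $|\pmb{x}|$: the base case $\pmb{x} = \lambda$ is trivial, and for the inductive step one writes $\hat f^{\ast}_{(A,\phi)}(s\pmb{y}) = \hat f_{(A,\phi)}(s)\,\hat f^{\ast}_{\hat\trans_{(A,\phi)}(s)}(\pmb{y}) = \phi(\tilde f_A(s))\,\hat f^{\ast}_{(\overline{\tilde\trans_A(s)}, \,\phi/\tilde f_A(s) \,\circ\, \psi_{A,s})}(\pmb{y})$, applies the induction hypothesis (noting that the direct realization sends $A$ to $\overline{\tilde\trans_A(s)}$), and then uses Lemma \ref{lem:phi-quot} (i) in the form $\phi(\pmb{b}\pmb{b}') = \phi(\pmb{b})\,(\phi/\pmb{b})(\pmb{b}')$ together with the fact that $\psi_{A,s}(\overline{\tilde\trans_A(s)}) = \tilde\trans_A(s)$ from \eqref{eq:ks8m7ci0bnrg} — actually the $\psi_{A,s}$ only affects the transition index bookkeeping, not the codeword length/content recursion as written, so care is needed: I should track that $\tilde f'^{\ast}_{A}(s\pmb{y}) = \tilde f_A(s)\,\tilde f'^{\ast}_{\overline{\tilde\trans_A(s)}}(\pmb{y})$ matches up after applying $\phi$ and splitting via $\phi/\tilde f_A(s)$.

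Once \eqref{eq:plan-key} is in hand, the lemma follows quickly. Given $\pmb{c} \in \PREF^k_{F, (A, \phi)} \subseteq \PREF^k_{\hat F, (A,\phi)}$, pick $\pmb{x}$ with $\hat f^{\ast}_{(A,\phi)}(\pmb{x}) \succeq \pmb{c}$, i.e.\ $\phi(\tilde f'^{\ast}_A(\pmb{x})) \succeq \pmb{c}$. Since $\phi$ is a bijection preserving length and prefixes (Lemma \ref{lem:phi-bijection}, Definition \ref{def:Phi} (b)), write $\pmb{c} = \phi(\pmb{c}')$ for some $\pmb{c}' \in \mathcal{C}^k$ with $\tilde f'^{\ast}_A(\pmb{x}) \succeq \pmb{c}'$; hence $\pmb{c}' \in \PREF^k_{F', A}$. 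Because $\tilde F \in \tilde{\mathscr{F}}_{\comp}$, we have $\PREF^k_{F', A} = \PREF^k_{\tilde F, A} = A$ (this uses that $\PREF^k$ of the direct realization coincides with $\PREF^k_{\tilde F, A}$, which should follow from \eqref{eq:9fhqriquk4ht} applied to $F'$ together with \eqref{eq:x425gfz5autj} and compliance — I would verify this small identity as a preliminary step, or cite it if it appears in the surrounding text). Thus $\pmb{c}' \in A$, so $\pmb{c} = \phi(\pmb{c}') \in \phi(A)$, which is exactly the claim.

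The main obstacle will be establishing \eqref{eq:plan-key} cleanly, specifically getting the index-tracking to line up: the transition in $\hat F$ carries along the accumulated mapping $\phi/\tilde f_A(s) \circ \psi_{A,s}$, and one must check that iterating this is consistent with a single outer application of $\phi$ to the codeword sequence produced by the direct realization. The role of $\psi_{A,s}$ is subtle — it guarantees $\overline{\tilde\trans_A(s)}$ lands in $[\tilde F]$ while $\psi_{A,s}$ "remembers" the discrepancy between the representative and $\tilde\trans_A(s)$ — but for the codeword-length/content recursion \eqref{eq:plan-key} the $\psi_{A,s}$ factor is composed on the right of $\phi/\tilde f_A(s)$ and gets absorbed correctly because the direct realization's recursion uses $\overline{\tilde\trans_A(s)}$ as the next index; I would need to be careful that the inductive hypothesis is applied with the correct next index and mapping. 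A secondary minor point is confirming $\PREF^k_{F', A} = A$ for the direct realization of a compliant RCT, which I expect is either immediate from the definitions or an easy consequence of Lemma \ref{lem:pref} (i).
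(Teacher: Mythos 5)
Your key intermediate claim $\hat f^{\ast}_{(A,\phi)}(\pmb{x}) = \phi(f'^{\ast}_A(\pmb{x}))$ is false, and this is the central gap. You flag the concern about $\psi_{A,s}$ but then dismiss it as "only affecting the transition index bookkeeping, not the codeword length/content recursion" --- this is wrong. The next codeword in $\hat F$ is $\hat f_{\hat\trans_{(A,\phi)}(s)}(s') = (\quot{\phi}{\tilde f_A(s)} \circ \psi_{A,s})(\tilde f_{\overline{\tilde\trans_A(s)}}(s'))$, so $\psi_{A,s}$ is applied directly to the next symbol's codeword. For your claim to survive the induction step you would need $(\quot{\phi}{\tilde f_A(s)} \circ \psi_{A,s})$ and $\quot{\phi}{\tilde f_A(s)}$ to agree on $f'^{\ast}_{\overline{\tilde\trans_A(s)}}(\pmb{y})$, i.e.\ $\psi_{A,s}$ to fix that sequence, which is false in general. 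The identity $\psi_{A,s}(\overline{\tilde\trans_A(s)}) = \tilde\trans_A(s)$ is a \emph{set} equality (an image of a subset of $\mathcal{C}^k$), not a statement that $\psi_{A,s}$ fixes codeword sequences. Compare the paper's Lemma~\ref{lem:expand-coding}, which deliberately asserts only $|\hat f^{\ast}_{(A,\phi)}(\pmb{x})| = |f'^{\ast}_A(\pmb{x})|$; the sequences themselves are not equal.

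Your final step has a second independent error: you invoke $\PREF^k_{F', A} = A$ for the direct realization $F'$, citing compliance of $\tilde F$. But compliance says $\PREF^k_{\tilde F, A} = \bigcup_{s} [\tilde f_A(s)\,\tilde\trans_A(s)]_k = A$, where the inner sets are $\tilde\trans_A(s)$; whereas $\PREF^k_{F', A} = \bigcup_{s} [\tilde f_A(s)\,\PREF^k_{F', \overline{\tilde\trans_A(s)}}]_k$ by Lemma~\ref{lem:pref}~(i), which involves the representatives $\overline{\tilde\trans_A(s)}$ instead. Since $\overline{\tilde\trans_A(s)}$ and $\tilde\trans_A(s)$ are generally distinct sets within the same equivalence class, $\PREF^k_{F', A} = A$ does not follow; the direct realization of a compliant RCT is not itself compliant in this sense.

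The paper's proof sidesteps both obstacles by proving only a set-membership statement: for $0 \le k' \le k$, if $|f^{\ast}_{(A,\phi)}(\pmb{x})| \ge k'$ then $[f^{\ast}_{(A,\phi)}(\pmb{x})]_{k'} \in [\phi(A)]_{k'}$, by induction on $|\pmb{x}|$ with a case split on whether $|f_{(A,\phi)}(x_1)| \ge k'$. In the recursive case the induction hypothesis gives membership in $[(\quot{\phi}{\tilde f_A(x_1)} \circ \psi_{A,x_1})(\overline{\tilde\trans_A(x_1)})]_{k'-|f_{(A,\phi)}(x_1)|}$, and here the set equality $\psi_{A,x_1}(\overline{\tilde\trans_A(x_1)}) = \tilde\trans_A(x_1)$ is exactly what is needed to turn this into $[\quot{\phi}{\tilde f_A(x_1)}(\tilde\trans_A(x_1))]_{\dots}$, after which the compliance of $\tilde F$ itself (not of $F'$) finishes the argument via $\PREF^k_{\tilde F, A} = A$. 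To repair your plan you would need to abandon the pointwise identity and instead carry the set-membership formulation through the induction, essentially reproducing the paper's argument.
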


\begin{proof}[Proof of Lemma \ref{lem:expanded-pref}]
It suffices to prove
\begin{equation}
\label{eq:ifd5otujq7t0}
{\forall}(\pmb{x}, (A, \phi), k') \in \mathcal{S}^{\ast} \times [F] \times \{0, 1, 2, \ldots, k\},
\left( |f^{\ast}_{(A, \phi)}(\pmb{x})| \geq k' \implies \left[ f^{\ast}_{(A, \phi)}(\pmb{x}) \right]_{k'} \in \left[ \phi(A) \right]_{k'} \right)
\end{equation}
because
\begin{eqnarray*}
\PREF^k_{F, (A, \phi)}
\eqlab{A}{=} \{c \in \mathcal{C}^k : \pmb{x} \in \mathcal{S}^{\ast}, f^{\ast}_{(A, \phi)}(\pmb{x}) \succeq \pmb{c}\}
= \left\{\left[f^{\ast}_{(A, \phi)}(\pmb{x}) \right]_k \colon \pmb{x} \in \mathcal{S}^{\ast}, |f^{\ast}_{(A, \phi)}(\pmb{x})| \geq k \right\}
\eqlab{B}{\subseteq} \left[ \phi(A) \right]_k
= \phi(A)
\end{eqnarray*}
as desired, where
(A) follows from (\ref{eq:pref3}),
and (B) follows from (\ref{eq:ifd5otujq7t0}).
Thus, we prove (\ref{eq:ifd5otujq7t0}) by induction on $|\pmb{x}|$.
Choose  $(\pmb{x}, (A, \phi), k') \in \mathcal{S}^{\ast} \times [F] \times \{0, 1, 2, \ldots, k\}$ such that
\begin{equation}
\label{eq:suol5hppo8wa}
|f^{\ast}_{(A, \phi)}(\pmb{x})| \geq k'
\end{equation}
arbitrarily.

We first consider the base case $|\pmb{x}| = 0$.
This case is possible only if $|f^{\ast}_{(A, \phi)}(\pmb{x})| = k' = 0$ by (\ref{eq:suol5hppo8wa}).
Therefore, we have
\begin{equation*}
\left[ f^{\ast}_{(A, \phi)}(\pmb{x}) \right]_{k'}
= \left[ f^{\ast}_{(A, \phi)}(\pmb{x}) \right]_0
= \lambda
\in \{\lambda\}
\eqlab{A}= \left[\phi(A)\right]_0
= \left[\phi(A)\right]_{k'},
\end{equation*}
where
(A) is justified as follows:
by $\tilde{F} \in \tilde{\mathscr{F}}_{\ext}$, we have $\emptyset \not\in [\tilde{F}]$, which implies $\emptyset \not\in \langle [\tilde{F}] \rangle$;
on the other hand, $A \in [\tilde{F}]$ implies $\phi(A) \in \langle [\tilde{F}] \rangle$; hence, we have $\phi(A) \neq \emptyset$ and thus $\left[\phi(A)\right]_0 \neq \emptyset$.
This concludes that the assertion (\ref{eq:ifd5otujq7t0}) holds for the case $|\pmb{x}| = 0$.

We next consider the induction step for $|\pmb{x}| \geq 1$.
We first show 
\begin{equation}
\label{eq:ngwm3wseb9wy}
\left[ f^{\ast}_{(A, \phi)}(\pmb{x}) \right]_{k'} \in \left[ f_{(A, \phi)}(x_1)\quot{\phi}{\tilde{f}_{A}(x_1)}(\tilde{\trans}_{A}(x_1)) \right]_{k'}
\end{equation}
dividing into the following two cases: the case $|f_{(A, \phi)}(x_1)| \geq k'$ and the case $|f_{(A, \phi)}(x_1)| < k'$.

\begin{itemize}
\item The case $|f_{(A, \phi)}(x_1)| \geq k'$:
We have
\begin{equation*}
\left[f^{\ast}_{(A, \phi)}(\pmb{x}) \right]_{k'}
= \left[ f_{(A, \phi)}(x_1) \right]_{k'}
\eqlab{A}{\in} \left[f_{(A, \phi)}(x_1)\quot{\phi}{\tilde{f}_{A}(x_1)}(\tilde{\trans}_{A}(x_1)) \right]_{k'},
\end{equation*}
where
(A) follows from $\tilde{\trans}_{A}(x_1) \neq \emptyset$ by $\tilde{F} \in \mathscr{F}_{\ext}$.

\item The case $|f_{(A, \phi)}(x_1)| < k'$:
We have
\begin{eqnarray*}
\left[f^{\ast}_{(A, \phi)}(\pmb{x}) \right]_{k'}
&\eqlab{A}=& \left[ f_{(A, \phi)}(x_1)f^{\ast}_{\trans_{(A, \phi)}(x_1)}(\suff(\pmb{x})) \right]_{k'} \\
&\eqlab{B}{=} & f_{(A, \phi)}(x_1)\left[ f^{\ast}_{\trans_{(A, \phi)}(x_1)}(\suff(\pmb{x})) \right]_{k'-|f_{(A, \phi)}(x_1)|}\\
&\eqlab{C}{=} & f_{(A, \phi)}(x_1)\left[ f^{\ast}_{\left(\overline{\tilde{\trans}_{A}(x_1)}, \quot{\phi}{\tilde{f}_{A}(x_1)} \circ \psi_{A, x_1}\right)}(\suff(\pmb{x})) \right]_{k'-|f_{(A, \phi)}(x_1)|}\\
&\eqlab{D}{\in}& f_{(A, \phi)}(x_1)\left[ \quot{\phi}{\tilde{f}_{A}(x_1)} \circ \psi_{A, x_1}(\overline{\tilde{\trans}_{A}(x_1)}) \right]_{k'-|f_{(A, \phi)}(x_1)|}\\
&\eqlab{E}=& f_{(A, \phi)}(x_1)\left[ \quot{\phi}{\tilde{f}_{A}(x_1)} (\tilde{\trans}_{A}(x_1)) \right]_{k'-|f_{(A, \phi)}(x_1)|}\\
&=& \left[ f_{(A, \phi)}(x_1)\quot{\phi}{\tilde{f}_{A}(x_1)} (\tilde{\trans}_{A}(x_1)) \right]_{k'},\\
\end{eqnarray*}
where
(A) follows from (\ref{eq:fstar}),
(B) follows from $|f_{(A, \phi)}(x_1)| < k'$,
(C) follows from (\ref{eq:otmlflag72l3}),
(D) follows from the induction hypothesis since
\begin{equation*}
\left|f^{\ast}_{\left(\overline{\tilde{\trans}_{A}(x_1)}, \quot{\phi}{\tilde{f}_{A}(x_1)} \circ \psi_{A, x_1}\right)}(\suff(\pmb{x}))\right| \geq k' - |f_{(A, \phi)}(x_1)|,
\end{equation*}
and (E) follows from (\ref{eq:ks8m7ci0bnrg}).
\end{itemize}

From the above, (\ref{eq:ngwm3wseb9wy}) has been proven.
Therefore, we obtain
\begin{eqnarray*}
\left[ f^{\ast}_{(A, \phi)}(\pmb{x}) \right]_{k'}
&\eqlab{A}{\in}& \left[ f_{(A, \phi)}(x_1)\quot{\phi}{\tilde{f}_{A}(x_1)}(\tilde{\trans}_{A}(x_1)) \right]_{k'}\\
&\eqlab{B}{=}& \left[ \phi(\tilde{f}_{A}(x_1)) \quot{\phi}{\tilde{f}_{A}(x_1)}(\tilde{\trans}_{A}(x_1))  \right]_{k'} \\
&\eqlab{C}{=}&  \left[ \phi(\tilde{f}_{A}(x_1)\tilde{\trans}_{A}(x_1))  \right]_{k'} \\
&\subseteq& \left[ \phi \left( \bigcup_{s \in \mathcal{S}} \tilde{f}_{A}(s)\tilde{\trans}_{A}(s) \right)  \right]_{k'} \\
&\eqlab{D}{=}&  \phi \left(\left[ \bigcup_{s \in \mathcal{S}} \tilde{f}_{A}(s)\tilde{\trans}_{A}(s) \right]_{k'}  \right)  \\
&=& \phi \left( \left[ \bigcup_{s \in \mathcal{S}} \left[ \tilde{f}_{A}(s)\tilde{\trans}_{A}(s) \right]_{k} \right]_{k'} \right)   \\
&\eqlab{E}{=}& \phi \left( \left[ \PREF^{k}_{\tilde{F}, A} \right]_{k'} \right)   \\
&\eqlab{F}{=}& \left[ \phi \left( \PREF^{k}_{\tilde{F}, A} \right) \right]_{k'}  \\
&\eqlab{G}{=}& \left[ \phi (A) \right]_{k'}
\end{eqnarray*}
as desired, where
(A) follows from (\ref{eq:ngwm3wseb9wy}),
(B) follows from (\ref{eq:9tr7le1wumbs}),
(C) follows from Lemma \ref{lem:phi-quot} (i),
(D) follows from (\ref{eq:s9qdkaq7m7ap}),
(E) follows from (\ref{eq:x425gfz5autj}),
(F) follows from (\ref{eq:s9qdkaq7m7ap}),
and (G) follows from $\tilde{F} \in \tilde{\mathscr{F}}_{\comp}$.
\end{proof}

\begin{corollary}
\label{cor:expanded-pref}
For any $(A, \phi) \in [F]$ and $s \in \mathcal{S}$,
we have $\PREF^k_{F, \trans_{(A, \phi)}(s)} \subseteq  \quot{\phi}{\tilde{f}_{A}(s')}(\tilde{\trans}_{A}(s'))$.
\end{corollary}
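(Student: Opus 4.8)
The plan is to obtain the corollary directly from Lemma~\ref{lem:expanded-pref}. First I would check that the index $\trans_{(A, \phi)}(s)$ lies in $[F]$, which is required before Lemma~\ref{lem:expanded-pref} can be applied to it. Since $[F] = \mathcal{I}$ is a closed set of $\hat{F}$ by (\ref{eq:wh6g7yi11crb}) and Definition~\ref{def:closed}, and $(A, \phi) \in [F]$, applying the closure condition with the length-one sequence $s \in \mathcal{S}^{\ast}$ gives $\trans_{(A, \phi)}(s) = \hat{\trans}_{(A, \phi)}(s) \in \mathcal{I} = [F]$.

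Next I would unfold the definition (\ref{eq:otmlflag72l3}), namely $\trans_{(A, \phi)}(s) = \left( \overline{\tilde{\trans}_A(s)}, \, \quot{\phi}{\tilde{f}_{A}(s)} \circ \psi_{A, s} \right)$, and invoke Lemma~\ref{lem:expanded-pref} with this pair playing the role of $(A, \phi)$ there. This yields $\PREF^k_{F, \trans_{(A, \phi)}(s)} \subseteq \left( \quot{\phi}{\tilde{f}_{A}(s)} \circ \psi_{A, s} \right)\left( \overline{\tilde{\trans}_A(s)} \right)$.

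Finally I would simplify the right-hand side: by the definition of composition it equals $\quot{\phi}{\tilde{f}_{A}(s)}\left( \psi_{A, s}\left( \overline{\tilde{\trans}_A(s)} \right) \right)$, and by the defining property (\ref{eq:ks8m7ci0bnrg}) of $\psi_{A, s}$ we have $\psi_{A, s}\left( \overline{\tilde{\trans}_A(s)} \right) = \tilde{\trans}_A(s)$, so the inclusion becomes $\PREF^k_{F, \trans_{(A, \phi)}(s)} \subseteq \quot{\phi}{\tilde{f}_{A}(s)}\left( \tilde{\trans}_A(s) \right)$, which is the asserted bound (the symbol $s'$ appearing in the statement is a typographical slip for $s$). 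The argument presents no real difficulty; the only step needing a word of justification is the membership $\trans_{(A, \phi)}(s) \in [F]$, without which Lemma~\ref{lem:expanded-pref} would not be directly applicable.
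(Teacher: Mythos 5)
Your proof is correct and takes the same route as the paper's: unfold $\trans_{(A,\phi)}(s)$ via (\ref{eq:otmlflag72l3}), apply Lemma \ref{lem:expanded-pref} to that pair, and simplify using (\ref{eq:ks8m7ci0bnrg}). You are also right that the $s'$ in the displayed inclusion is a typo for $s$, and your explicit remark that $\trans_{(A,\phi)}(s)\in[F]$ (by the closedness of $\mathcal{I}$, which is what makes $F$ a well-defined code-tuple in the first place) is a small but legitimate piece of hygiene that the paper leaves implicit.
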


\begin{proof}[Proof of Corollary \ref{cor:expanded-pref}]
We have
\begin{equation*}
\PREF^k_{F, \trans_{(A, \phi)}(s)}
\eqlab{A}= \PREF^k_{F, \left( \overline{\tilde{\trans}_A(s)}, \quot{\phi}{\tilde{f}_{A}(s)} \circ \psi_{A, s}\right)}
\eqlab{B}\subseteq \quot{\phi}{\tilde{f}_{A}(s)} \circ \psi_{A, s}(\overline{\tilde{\trans}_{A}(s)})
\eqlab{C}= \quot{\phi}{\tilde{f}_{A}(s)}(\tilde{\trans}_{A}(s)),
\end{equation*}
where
(A) follows from (\ref{eq:1aj44hktwb5i}),
(B) follows from Lemma \ref{lem:expanded-pref},
and (C) follows from (\ref{eq:ks8m7ci0bnrg}).
\end{proof}

Now, to prove $F \in \mathscr{F}_{k\hdec}$,
we show that $F$ satisfies Definition \ref{def:k-bitdelay} (a) and (b) below.

(Definition \ref{def:k-bitdelay} (a))
Choose $(A, \phi) \in [F]$ and $s \in \mathcal{S}$ arbitrarily.
We have
\begin{eqnarray}
\bar{\PREF}^k_{F, \trans_{(A, \phi)}}(f_{(A, \phi)}(s))
&\eqlab{A}{=}& \bigcup_{\substack{s' \in \mathcal{S},\\ f_{(A, \phi)}(s') \succ f_{(A, \phi)}(s)}} \left[ f_{(A, \phi)}(s)^{-1} f_{(A, \phi)}(s') \PREF^k_{F, \trans_{(A, \phi)}(s')} \right]_k \nonumber\\
&\eqlab{B}{=}& \bigcup_{\substack{s' \in \mathcal{S},\\ \phi(\tilde{f}_{A}(s')) \succ \phi(\tilde{f}_{A}(s))}} \left[ \phi(\tilde{f}_{A}(s))^{-1} \phi(\tilde{f}_{A}(s')) \PREF^k_{F, \trans_{(A, \phi)}(s')} \right]_k \nonumber\\
&\eqlab{C}{=}& \bigcup_{\substack{s' \in \mathcal{S},\\ \tilde{f}_{A}(s')\succ \tilde{f}_{A}(s)}} \left[ \phi(\tilde{f}_{A}(s))^{-1} \phi(\tilde{f}_{A}(s')) \PREF^k_{F, \trans_{(A, \phi)}(s')}  \right]_k \nonumber\\
&\eqlab{D}{\subseteq}& \bigcup_{\substack{s' \in \mathcal{S},\\ \tilde{f}_{A}(s')\succ \tilde{f}_{A}(s)}} \left[ \phi(\tilde{f}_{A}(s))^{-1} \phi(\tilde{f}_{A}(s')) \quot{\phi}{\tilde{f}_{A}(s')}(\tilde{\trans}_{A}(s')) \right]_k \nonumber\\
&\eqlab{E}{=}& \bigcup_{\substack{s' \in \mathcal{S},\\ \tilde{f}_{A}(s')\succ \tilde{f}_{A}(s)}} \left[ \phi(\tilde{f}_{A}(s))^{-1}  \phi(\tilde{f}_{A}(s')\tilde{\trans}_{A}(s')) \right]_k \nonumber\\
&\eqlab{F}{=}& \bigcup_{\substack{s' \in \mathcal{S},\\ \tilde{f}_{A}(s')\succ \tilde{f}_{A}(s)}} \left[ \phi(\tilde{f}_{A}(s))^{-1}  \phi(\tilde{f}_{A}(s)\tilde{f}_{A}(s)^{-1}\tilde{f}_{A}(s')\tilde{\trans}_{A}(s')) \right]_k \nonumber\\
&\eqlab{G}{=}& \bigcup_{\substack{s' \in \mathcal{S},\\ \tilde{f}_{A}(s')\succ \tilde{f}_{A}(s)}} \left[ \quot{\phi}{\tilde{f}_{A}(s)} (\tilde{f}_{A}(s)^{-1}\tilde{f}_{A}(s')\tilde{\trans}_{A}(s')) \right]_k \nonumber\\
&\eqlab{H}{=}& \bigcup_{\substack{s' \in \mathcal{S},\\ \tilde{f}_{A}(s')\succ \tilde{f}_{A}(s)}} \quot{\phi}{\tilde{f}_{A}(s)} \left(\left[ \tilde{f}_{A}(s)^{-1}\tilde{f}_{A}(s')\tilde{\trans}_{A}(s') \right]_k \right) \nonumber\\
&\eqlab{I}{=}& \quot{\phi}{\tilde{f}_{A}(s)} \left( \bigcup_{\substack{s' \in \mathcal{S},\\ \tilde{f}_{A}(s')\succ \tilde{f}_{A}(s)}} \left[ \tilde{f}_{A}(s)^{-1}\tilde{f}_{A}(s')\tilde{\trans}_{A}(s') \right]_k \right) \nonumber\\
&\eqlab{J}{=}& \quot{\phi}{\tilde{f}_{A}(s)} \tilde{\PREF}^k_{\tilde{F}, A}(\tilde{f}_{A}(s)),  \label{eq:fr12poy890jk}
\end{eqnarray}
where
(A) follows from (\ref{eq:gfo3oig06zin}),
(B) follows from (\ref{eq:9tr7le1wumbs}),
(C) follows from Definition \ref{def:Phi} (b),
(D) follows from Corollary \ref{cor:expanded-pref},
(E) follows from Lemma \ref{lem:phi-quot} (i),
(F) follows from $\tilde{f}_{A}(s')\succ \tilde{f}_{A}(s)$,
(G) follows from (\ref{eq:r6xuob4upqw5}),
(H) follows from (\ref{eq:s9qdkaq7m7ap}),
(I) follows from Lemma \ref{lem:phi-bijection},
and (J) follows from (\ref{eq:hh4v6kmbk61n}).
Hence, we obtain
\begin{eqnarray*}
\PREF^k_{F, \trans_{(A, \phi)}(s)} \cap \bar{\PREF}^k_{F, \trans_{(A, \phi)}}(f_{(A, \phi)}(s))
&\eqlab{A}{\subseteq}& \quot{\phi}{\tilde{f}_{A}(s)}(\tilde{\trans}_{A}(s)) \cap \bar{\PREF}^k_{F, \trans_{(A, \phi)}}(f_{(A, \phi)}(s)) \\
&\eqlab{B}{\subseteq}& \quot{\phi}{\tilde{f}_{A}(s)}(\tilde{\trans}_{A}(s)) \cap \quot{\phi}{\tilde{f}_{A}(s)} \bar{\PREF}^k_{\tilde{F}, A}(\tilde{f}_{A}(s))\\
&\eqlab{C}{=}& \quot{\phi}{\tilde{f}_{A}(s)}(\tilde{\trans}_{A}(s) \cap \bar{\PREF}^k_{\tilde{F}, A}(\tilde{f}_{A}(s)))\\
&\eqlab{D}{=}& \quot{\phi}{\tilde{f}_{A}(s)}(\emptyset)\\
&=& \emptyset
 \end{eqnarray*}
as desired, where
(A) follows from Corollary \ref{cor:expanded-pref},
(B) follows from (\ref{eq:fr12poy890jk}),
(C) follows from Lemma \ref{lem:phi-bijection},
and (D) follows from Definition \ref{def:semi-class} (iv) (a).

(Definition \ref{def:k-bitdelay} (b))
Choose $(A, \phi) \in [F]$ and $s, s' \in \mathcal{S}$ such that $s \neq s'$ and $f_{(A, \phi)}(s) = f_{(A, \phi)}(s')$ arbitrarily.
Then we have
\begin{equation*}
\phi(\tilde{f}_{A}(s))
\eqlab{A}{=} f_{(A, \phi)}(s) 
= f_{(A, \phi)}(s')
\eqlab{B}{=} \phi(\tilde{f}_{A}(s')),
\end{equation*}
where
(A) follows from (\ref{eq:9tr7le1wumbs}),
and (B) follows from (\ref{eq:9tr7le1wumbs}).
Hence, we have
\begin{equation}
\label{eq:8he9ypp74cyb}
\tilde{f}_{A}(s) = \tilde{f}_{A}(s')
\end{equation}
by Lemma \ref{lem:phi-bijection}.
Therefore, we obtain
\begin{eqnarray*}
\PREF^k_{F, \trans_{(A, \phi)}(s)} \cap \PREF^k_{F, \trans_{(A, \phi)}(s')}  
&\eqlab{A}{\subseteq}& \quot{\phi}{\tilde{f}_{A}(s)}(\tilde{\trans}_{A}(s)) \cap \quot{\phi}{\tilde{f}_{A}(s')}(\tilde{\trans}_{A}(s')) \\
&\eqlab{B}{=}& \quot{\phi}{\tilde{f}_{A}(s)}(\tilde{\trans}_{A}(s)) \cap \quot{\phi}{\tilde{f}_{A}(s)}(\tilde{\trans}_{A}(s')) \\
&\eqlab{C}{=}& \quot{\phi}{\tilde{f}_{A}(s)}(\tilde{\trans}_{A}(s) \cap \tilde{\trans}_{A}(s')) \\
&\eqlab{D}{=}&\quot{\phi}{\tilde{f}_{A}(s)}(\emptyset) \\
&=& \emptyset,
\end{eqnarray*}
as desired, where
(A) follows from Corollary \ref{cor:expanded-pref},
(B) follows from (\ref{eq:8he9ypp74cyb}),
(C) follows Lemma \ref{lem:phi-bijection},
and (D) follows from Definition \ref{def:semi-class} (iv) (b) and (\ref{eq:8he9ypp74cyb}).

\subsection{Proof of Lemma \ref{lem:expand}: $F \in \mathscr{F}_{\ext}$}
\label{subsec:proof-expand3}

We choose
\begin{equation}
\label{eq:plg10theuj7z}
(A, \phi) \in \argmin_{(A', \phi') \in \kernel_F} |A'|
\end{equation}
arbitrarily.
We first show that there exists at most one $s \in \mathcal{S}$ such that $f_{(A, \phi)}(s) = \lambda$ by contradiction.
Assume that there exist $s, s' \in \mathcal{S}$ such that $s \neq s'$ and $f_{(A, \phi)}(s) = f_{(A, \phi)}(s') = \lambda$.
Then we have
\begin{equation}
\label{eq:vgeb48eh7caw}
\tilde{f}_{A}(s) = \tilde{f}_{A}(s') = \lambda
\end{equation}
because
\begin{equation*}
|\tilde{f}_{A}(s)| \eqlab{A}{=} |\phi(\tilde{f}_{A}(s))| \eqlab{B}{=} |f_{(A, \phi)}(s)| = 0,\quad
|\tilde{f}_{A}(s')| \eqlab{A}{=} |\phi(\tilde{f}_{A}(s'))| \eqlab{B}{=} |f_{(A, \phi)}(s')| = 0,
\end{equation*}
where
(A) follows from Definition \ref{def:Phi} (a),
and (B) follows from (\ref{eq:9tr7le1wumbs}).
Thus, we have
\begin{eqnarray}
|\overline{\tilde{\trans}_{A}(s)}| + |\overline{\tilde{\trans}_{A}(s')}|
&\eqlab{A}=& |\tilde{\trans}_{A}(s)| + |\tilde{\trans}_{A}(s')| \nonumber\\
&\eqlab{B}{=}& |\tilde{\trans}_{A}(s) \cup \tilde{\trans}_{A}(s')| \nonumber\\
&=& \left|\left[\tilde{\trans}_{A}(s)\right]_k \cup \left[\tilde{\trans}_{A}(s')\right]_k\right| \nonumber\\
&\eqlab{C}{=}& \left|\left[\tilde{f}_{A}(s)\tilde{\trans}_{A}(s)\right]_k \cup \left[\tilde{f}_{A}(s')\tilde{\trans}_{A}(s')\right]_k\right| \nonumber\\
&\leq& \left| \bigcup_{s'' \in \mathcal{S}} \left[\tilde{f}_{A}(s'')\tilde{\trans}_{A}(s'')\right]_k \right| \nonumber\\
&\eqlab{D}{=}& |\PREF^k_{\tilde{F}, A}| \nonumber\\
&\eqlab{E}{=}& |A|, \label{eq:3vxzhkr8quvj}
\end{eqnarray}
where
(A) follows from (\ref{eq:9k4gm8wmk5qb}),
(B) follows from Definition \ref{def:semi-class} (iv) (b) and (\ref{eq:vgeb48eh7caw}),
(C) follows from (\ref{eq:vgeb48eh7caw}),
(D) follows from (\ref{eq:x425gfz5autj}),
and (E) follows from $\tilde{F} \in \tilde{\mathscr{F}}_{\comp}$.
Therefore, the pair
\begin{equation}
\label{eq:pap8g02r1wmk}
\left(\overline{\tilde{\trans}_{A}(s)}, \quot{\phi}{\tilde{f}_{A}(s)} \circ \psi_{A, s}\right)
\eqlab{A}{=} \trans_{(A, \phi)}(s)
\eqlab{B}{\in} \kernel_F
\end{equation}
satisfies
\begin{equation}
\label{eq:aruv6d6t6ij1}
|\overline{\tilde{\trans}_{A}(s)}| \eqlab{C}{\leq} |A| - |\overline{\tilde{\trans}_{A}(s')}| \eqlab{D}{\leq} |A| - 1,
\end{equation}
where 
(A) follows from (\ref{eq:otmlflag72l3}),
(B) follows from $(A, \phi) \in \kernel_F$ by (\ref{eq:plg10theuj7z}),
(C) follows from (\ref{eq:3vxzhkr8quvj}),
and (D) follows from $\overline{\tilde{\trans}_{A}(s')} \in [\tilde{F}]$ and $\tilde{F} \in \mathscr{F}_{\ext}$.
Equations (\ref{eq:pap8g02r1wmk}) and (\ref{eq:aruv6d6t6ij1}) conflict with (\ref{eq:plg10theuj7z}).
Consequently, there exists at most one $s \in \mathcal{S}$ such that $f_{(A, \phi)}(s) = \lambda$.

Therefore, there exists $\hat{s} \in \mathcal{S}$ such that $|f_{(A, \phi)}(\hat{s})| \geq 1$ since $|\mathcal{S}| \geq 2$.
By $(A, \phi) \in \kernel_F$, for any $(A', \phi') \in [F]$, there exists $\pmb{x} \in \mathcal{S}^{\ast}$ such that 
$\trans^{\ast}_{(A', \phi')}(\pmb{x}) =  (A, \phi)$, so that
\begin{equation*}
|f^{\ast}_{(A', \phi')}(\pmb{x}\hat{s})| \eqlab{A}= |f^{\ast}_{(A', \phi')}(\pmb{x})f_{(A, \phi)}(\hat{s})| = |f^{\ast}_{(A', \phi')}(\pmb{x})| + |f_{(A, \phi)}(\hat{s})| \geq 1,
\end{equation*}
where
(A) follows from Lemma \ref{lem:f_T} (i).
This shows $F \in \mathscr{F}_{\ext}$.

\subsection{Proof of Lemma \ref{lem:expand-coding}}
\label{subsec:proof-expand-coding}

\begin{proof}[Proof of Lemma \ref{lem:expand-coding}]
The proof relies on \cite[Lemma 7]{IEICE2023}.

Using the mapping $\varphi \colon [\hat{F}] \to [F']$ defined as (\ref{eq:rg9zmfsn1btj}) and the analogous equations (\ref{eq:12yxaf48at5p}) and (\ref{eq:ohfzbcgbftvx}) to \cite[Eqs.~(23) and (24)]{IEICE2023},
we can show that for any $(A, \phi) \in [\hat{F}]$ and $\pmb{x} \in \mathcal{S}^{\ast}$, we have $|\hat{f}^{\ast}_{(A, \phi)}(\pmb{x})| = |f'^{\ast}_{\varphi((A, \phi))}(\pmb{x})| = |f'^{\ast}_A(\pmb{x})|$ by the almost identical argument to the proofs of \cite[Lemma 7 (i)]{IEICE2023} from \cite[Eqs.~(23) and (24)]{IEICE2023}.
\end{proof}

\subsection{Proof of Lemma \ref{lem:quot-calc}}
\label{subsec:proof-quot-calc}

\begin{proof}[Proof of Lemma \ref{lem:quot-calc}]
(Proof of (i))
We have
\begin{eqnarray*}
\left(\quot{\phi}{\pmb{d}}\right)^{\ast}(\pmb{b})
&\eqlab{A}=&  \quot{\left(\quot{\phi}{\pmb{d}}\right)}{\pmb{b}}(0) \\
&\eqlab{B}=& \quot{\phi}{\pmb{d}}(\pmb{b})^{-1} \quot{\phi}{\pmb{d}}(\pmb{b}0)\\
&\eqlab{C}=& \left( \phi(\pmb{d})^{-1} \phi(\pmb{db}) \right)^{-1} \phi(\pmb{d})^{-1} \phi(\pmb{db}0) \\
&=& \phi(\pmb{db})^{-1} \phi(\pmb{d}) \phi(\pmb{d})^{-1} \phi(\pmb{db}0)\\
&=& \phi(\pmb{db})^{-1} \phi(\pmb{db}0)\\
&\eqlab{D}=& \quot{\phi}{\pmb{db}}(0)\\
&\eqlab{E}=& \phi^{\ast}(\pmb{db}),
\end{eqnarray*}
where
(A) follows from (\ref{eq:phi-star}),
(B) follows from (\ref{eq:r6xuob4upqw5}),
(C) follows from (\ref{eq:r6xuob4upqw5}),
(D) follows from (\ref{eq:r6xuob4upqw5}),
and (E) follows from (\ref{eq:phi-star}).

(Proof of (ii))
We have
\begin{eqnarray*}
(\phi \circ \psi)^{\ast}(\pmb{b})
&\eqlab{A}=& \quot{\phi \circ \psi}{\pmb{b}}(0) \\
&\eqlab{B}=& (\phi \circ \psi)(\pmb{b})^{-1} (\phi \circ \psi)(\pmb{b}0) \\
&=& \phi(\psi(\pmb{b}))^{-1} \phi(\psi(\pmb{b}0)) \\
&\eqlab{C}=& \phi(\psi(\pmb{b}))^{-1} \phi \left( \psi(\pmb{b})\quot{\psi}{\pmb{b}}(0) \right) \\
&\eqlab{D}=& \phi(\psi(\pmb{b}))^{-1} \phi( \psi(\pmb{b}) ) \quot{\phi}{\psi(\pmb{b})}\left(\quot{\psi}{\pmb{b}}(0)\right) \\
&=& \quot{\phi}{\psi(\pmb{b})} \left(\quot{\psi}{\pmb{b}}(0)\right) \\
&\eqlab{E}=& \quot{\phi}{\psi(\pmb{b})} (\psi^{\ast}(\pmb{b})) \\
&\eqlab{F}=& \quot{\phi}{\psi(\pmb{b})}(0) \oplus  \psi^{\ast}(\pmb{b}) \\
&\eqlab{G}=& \phi^{\ast}(\psi(\pmb{b})) \oplus  \psi^{\ast}(\pmb{b}),
\end{eqnarray*}
where
(A) follows from (\ref{eq:phi-star}),
(B) follows from (\ref{eq:r6xuob4upqw5}),
(C) follows from Lemma \ref{lem:phi-quot} (i),
(D) follows from Lemma \ref{lem:phi-quot} (i),
(E) follows from (\ref{eq:phi-star}),
(F) follows from (\ref{eq:3ggv1jl6ft06}),
and (G) follows from (\ref{eq:phi-star}).

(Proof of (iii))
We have
\begin{eqnarray*}
(\phi^{-1})^{\ast}(\phi(\pmb{b}))
&=& (\phi^{-1})^{\ast}(\phi(\pmb{b})) \oplus 0\\
&=& (\phi^{-1})^{\ast}(\phi(\pmb{b})) \oplus \phi^{\ast}(\pmb{b}) \oplus \phi^{\ast}(\pmb{b})\\
&\eqlab{A}=& (\phi^{-1} \circ\phi)^{\ast}(\pmb{b}) \oplus \phi^{\ast}(\pmb{b}) \\
&=& (\mathrm{id}_{\mathcal{C}^{\ast}})^{\ast}(\pmb{b}) \oplus \phi^{\ast}(\pmb{b})\\
&\eqlab{B}=& \quot{\mathrm{id}_{\mathcal{C}^{\ast}}}{\pmb{b}}(0) \oplus \phi^{\ast}(\pmb{b})\\
&\eqlab{C}=& \left( \mathrm{id}_{\mathcal{C}^{\ast}}(\pmb{b})^{-1} \mathrm{id}_{\mathcal{C}^{\ast}}(\pmb{b}0) \right) \oplus \phi^{\ast}(\pmb{b})\\
&=& \left( \pmb{b}^{-1} \pmb{b}0  \right) \oplus \phi^{\ast}(\pmb{b})\\
&=& 0 \oplus \phi^{\ast}(\pmb{b})\\
&=& \phi^{\ast}(\pmb{b}),
\end{eqnarray*}
where
(A) follows from (ii) of this lemma,
(B) follows from (\ref{eq:phi-star}),
and (C) follows from (\ref{eq:r6xuob4upqw5}).
\end{proof}

\subsection{List of Notation}
\label{sec:notation}

\begin{longtable}{lp{0.8\textwidth}}
  $|\mathcal{A}|$ & the cardinality of a set $\mathcal{A}$, defined at the beginning of Section \ref{sec:preliminary}. \\
  $\mathcal{A}^k$ & the set of all sequences of length $k$ over a set $\mathcal{A}$, defined at the beginning of Section \ref{sec:preliminary}. \\
 $\mathcal{A}^{\geq k}$ & the set of all sequences of length greater than or equal to $k$ over a set $\mathcal{A}$, defined at the beginning of Section \ref{sec:preliminary}. \\
 $\mathcal{A}^{\leq k}$ & the set of all sequences of length less than or equal to  $k$ over a set $\mathcal{A}$, defined at the beginning of Section \ref{sec:preliminary}. \\
  $\mathcal{A}^{\ast}$ & the set of all sequences of finite length over a set $\mathcal{A}$, defined at the beginning of Section \ref{sec:preliminary}.\\
  $\mathcal{A}^{+}$ & the set of all sequences of finite positive length over a set $\mathcal{A}$, defined at the beginning of Section \ref{sec:preliminary}.\\
  $\bar{A}$ & the representative of $\langle A \rangle$, defined at the beginning of Subsection \ref{subsec:semi}.\\
  $\langle A \rangle$ & the equivalence class $A$ belongs to, defined above Lemma \ref{lem:Phi-group}.\\
   $\left[\mathcal{A}\right]_k$ & $\{[\pmb{x}]_k : \pmb{x} \in \mathcal{A}, |\pmb{x}| \geq k\}$ for a set $\mathcal{A}$ of sequences and an integer $k \geq 0$, defined at the beginning of Section \ref{sec:preliminary}.\\  
  $\mathcal{A} \times \mathcal{B}$ & the Cartesian product of sets $\mathcal{A}$ and $\mathcal{B}$, that is, $\{(a, b) : a \in \mathcal{A}, b \in \mathcal{B}\}$, defined at the beginning of Section \ref{sec:preliminary}. \\
  $\mathcal{C}$ & the coding alphabet $\mathcal{C} = \{0, 1\}$, at the beginning of Section \ref{sec:preliminary}.\\
  $\bar{c}$ & the negation of $c \in \mathcal{C}$, that is, $\bar{0} = 1, \bar{1} = 0$, defined 
  at the beginning of the proof of Theorem \ref{thm:orbit-num}.\\
  $\mathscr{C}^k$ & the power set of $\mathcal{C}^k$, defined after Lemma \ref{thm:differ}.\\
  $\tilde{\mathscr{C}}^k$ & the set of all representatives chosen and fixed at the beginning of Subsection \ref{subsec:semi}.\\
  $\mathscr{C}^k / {\sim_k}$ & the set of all the equivalence classes of $\mathscr{C}^k$ by $\sim_k$, defined after Lemma \ref{lem:Phi-group}.\\
  $f^{\ast}_i$ & defined in Definition \ref{def:f_T}. \\
  $|F|$ & the number of code tables of $F$, defined after Definition \ref{def:treepair} and at the beginning of Subsection \ref{subsec:semi}. \\
  $[F]$ & the domain of $F$, defined after Definition \ref{def:treepair} and at the beginning of Subsection \ref{subsec:semi}.\\
  $\mathscr{F}^{(m)}$ & the set of all $m$-code-tuples, defined after Definition \ref{def:treepair}.\\
  $\mathscr{F}$ & the set of all code-tuples, defined after Definition \ref{def:treepair}.\\
  $\mathscr{F}_{\ext}$ & the set of all extendable code-tuples, defined in Definition \ref{def:F_ext}. \\
  $\mathscr{F}_{k\hdec}$ & the set of all $k$-bit delay decodable code-tuples, defined in Definition \ref{def:k-bitdelay}. \\
  $\mathscr{F}_{k\hopt}$ & the set of all $k$-bit delay optimal code-tuples, defined in Definition \ref{def:optimalset}. \\
  $\mathscr{F}_{\irr}$ & the set of all irreducible code-tuples, defined in Definition \ref{def:F_irr}. \\
  $\mathscr{F}_{\reg}$ & the set of all regular code-tuples, defined in Definition \ref{def:regular}. \\
  $|\tilde{F}|$ & the number of code tables of an RCT $\tilde{F}$, defined after Definition \ref{def:semi}. \\
  $[\tilde{F}]$ & the domain of an RCT $\tilde{F}$, defined after Definition \ref{def:semi}\\
  $\tilde{\mathscr{F}}$ & the set of all RCTs, defined after Definition \ref{def:semi}.\\
  $\tilde{\mathscr{F}}_{\comp}$ & the set of all compliant RCTs, defined in Definition \ref{def:semi-class}.\\
  $\tilde{\mathscr{F}}_{\ext}$ & the set of all extendable RCTs, defined in Definition \ref{def:semi-class}.\\
  $\tilde{\mathscr{F}}_{k\hdec}$ & the set of all $k$-bit delay decodable RCTs, defined in Definition \ref{def:semi-class}.\\
  $\tilde{\mathscr{F}}_{k\hopt}$ & the set of all $k$-bit delay optimal RCTs, defined in Definition \ref{def:semi-class}.\\
  $\tilde{\mathscr{F}}_{\reg}$ & the set of all regular RCTs, defined in Definition \ref{def:semi-class}.\\
  $h(F)$ & defined after Lemma \ref{lem:potential}.\\
  $\langle \mathcal{I} \rangle$ & defined by (\ref{eq:1k1d4tzz0kr5}) after Lemma \ref{lem:Phi-group}.\\
  $L(F)$ & the average codeword length of a code-tuple $F$, defined in Definition \ref{def:evaluation}. \\
  $L_i(F)$ & the average codeword length of the $i$-th code table of $F$, defined in Definition \ref{def:evaluation}. \\
  $\tilde{L}(\tilde{F})$ & the average codeword length of a regular RCT $\tilde{F}$, defined in Definition \ref{def:semi-class}.\\
  $\mathcal{P}^k_{F, i}(\pmb{b})$ & defined in Definition \ref{def:pref}.\\
  $\bar{\mathcal{P}}^k_{F, i}(\pmb{b})$ & defined in Definition \ref{def:pref}.\\
  $\PREF^k_{\tilde{F}, A}$ & defined in Definition \ref{def:pref-tilde}.\\
  $\bar{\PREF}^k_{\tilde{F}, A}(\pmb{b}) $ & defined in Definition \ref{def:pref-tilde}.\\
  $Q(F)$ & the transition probability matrix, defined in Definition \ref{def:transprobability}.\\
  $Q_{i, j}(F)$ & the transition probability, defined in Definition \ref{def:transprobability}.\\
  $\kernel_F$ & defined in Lemma \ref{lem:kernel}.\\
  $\mathcal{S}$ & the source alphabet, defined at the beginning of Section \ref{sec:preliminary}.\\
  $\suff(\pmb{x})$ & the sequence obtained by deleting the first letter of $\pmb{x}$, defined at the beginning of Section \ref{sec:preliminary}. \\
  $\suff^k(\pmb{x})$ & the sequence obtained by deleting the first $k$ letters of $\pmb{x}$, defined at the beginning of Section \ref{sec:preliminary}. \\
  $x_i$ & the $i$-th letter of a sequence $\pmb{x}$, defined at the beginning of Section \ref{sec:preliminary}.\\
  $|\pmb{x}|$ & the length of a sequence $\pmb{x}$, defined at the beginning of Section \ref{sec:preliminary}.\\
    $\pmb{x} \preceq \pmb{y}$ & \pmb{x} is a prefix of \pmb{y}, defined at the beginning of Section \ref{sec:preliminary}.\\
    $\pmb{x} \prec \pmb{y}$ & $\pmb{x} \preceq \pmb{y}$ and $\pmb{x} \neq \pmb{y}$, defined at the beginning of Section \ref{sec:preliminary}. \\
  $\pmb{x}^{-1}\pmb{y}$ & the sequence $\pmb{z}$ such that $\pmb{x}\pmb{z} = \pmb{y}$, defined at the beginning of Section \ref{sec:preliminary}.\\
   $\pmb{x}\mathcal{A}$ & $\{\pmb{x}\pmb{y} : \pmb{y} \in \mathcal{A}\}$ for a sequence $\pmb{x}$ and a set $\mathcal{A}$ of sequences, defined at the beginning of Section \ref{sec:preliminary}.\\
 $[\pmb{x}]_k$ & the prefix of length $k$ of $\pmb{x}$, defined at the beginning of Section \ref{sec:preliminary}.\\
  $\lambda$ & the empty sequence, defined at the beginning of Section \ref{sec:preliminary}.\\
  $\mu$ & the source distribution $\mu \colon \mathcal{S} \to (0, 1) \subseteq \mathbb{R}$, defined at the beginning of Section \ref{sec:preliminary}. \\
  $\pmb{\pi}(F)$ & the unique stationary distribution of $F \in \mathscr{F}_{\reg}$, defined in Definition \ref{def:regular}.\\
  $\trans^{\ast}_i$ & defined in Definition \ref{def:f_T}. \\
  $\phi^{\ast}$ & defined in Definition \ref{def:phi-star}. \\
  $\quot{\phi}{\pmb{d}}$ & defined in Definition \ref{def:phi-quot}.\\
  $\Phi_k$ & defined in Definition \ref{def:Phi}. \\
  $\phi(\mathcal{X})$ & the image $\phi(\mathcal{X}) \coloneqq \{\phi(a) : a \in \mathcal{X}\}$ of a set $\mathcal{X}$ under a mapping $\phi$, defined at the beginning of Section \ref{sec:preliminary}.\\
  $\phi \circ \psi$ & the composite $\phi(\psi(\cdot))$ of two mappings $\phi$ and $\psi$, defined at the beginning of Section \ref{sec:preliminary}.\\
  $\oplus$ & the addition modulo $2$, that is, $0 \oplus 0 = 1 \oplus 1 = 0, 0 \oplus 1 = 1 \oplus 0 = 1$, defined in Lemma \ref{lem:phi-represent}.\\
  $\sim_k$ & defined above Lemma \ref{lem:Phi-group}.\\
  \end{longtable}
\vspace{9pt}

\section*{Acknowledgment}
This work was supported in part by JSPS KAKENHI Grant Numbers JP24K23859 and JP24K14818.

\nocite{*}
\bibliographystyle{IEEE}

%

\end{document}